\newcommand{\rt}{\operatorname{root}}
\newcommand{\Qbit}{\mathbf{Qbit}}
\newcommand{\Qpar}{\mathbf{Qpar}}
\newcommand{\Act}{\mathbf{Act}}
\newcommand{\Proc}{\mathbf{Proc}}
\newcommand{\Sys}{\mathbf{Sys}}
\newcommand{\QO}{\operatorname{\mathcal{QO}}}
\newlist{defenum}{enumerate}{1}
\setlist[defenum]{label=(\alph*), ref=\thedefinition~(\alph*)}
\newlist{postenum}{enumerate}{1}
\setlist[postenum]{label=(\alph*), ref=\thedefinition~(\alph*)}
\newcommand{\footremember}[2]{%
   \footnote{#2}
    \newcounter{#1}
    \setcounter{#1}{\value{footnote}}%
}
\title{Atomicity in Distributed Quantum Computing}
\author{Zhicheng Zhang\footremember{a}{Centre for Quantum Software and Information, University of Technology Sydney, Sydney, Australia.
	Email: \texttt{iszczhang@gmail.com}.}
\and Mingsheng Ying\footremember{b}{Centre for Quantum Software and Information, University of Technology Sydney, Sydney, Australia.
	Email: \texttt{Mingsheng.Ying@uts.edu.au}.}
}
\begin{document}

\maketitle

\begin{abstract}
	Atomicity is a ubiquitous assumption in distributed computing,
	under which actions are indivisible and appear sequential.
	In classical computing, this assumption has several theoretical and practical guarantees.
	In quantum computing, although atomicity is still commonly assumed,
	it has not been seriously studied,
	and a rigorous basis for it is missing.
	Classical results on atomicity do not directly carry over to distributed quantum computing,
	due to new challenges caused by quantum entanglement and the measurement problem from the underlying quantum mechanics.

	In this paper, we initiate the study of atomicity in distributed quantum computing.
	A formal model of (non-atomic) distributed quantum system is established.
	Based on the Dijkstra-Lamport condition, 
	the system dynamics and observable dynamics of a distributed quantum system are defined,
	which correspond to the quantum state of and classically observable events in the system, respectively.
	Within this framework,
	we prove that local actions can be regarded as if they were atomic,
	up to the observable dynamics of the system.
\end{abstract}

\newpage

{
	\hypersetup{linkcolor=black}
	\tableofcontents
}

\newpage

\section{Introduction}
\label{sec:introduction}

Distributed system is a collection of processes taking actions concurrently
and expected to cooperate harmoniously for some computational goal.
Most models of distributed systems 
build upon the assumption of \textit{atomic actions}.
Atomic actions are indivisible and mutually exclusive in time:
for any two atomic actions $a$ and $b$ (probably from different processes),
either $a$ precedes $b$, or $b$ precedes $a$.
In comparison, for two non-atomic actions,
it is possible that they are concurrent and have no temporal order at all~\cite{Lamport86a}.
Assuming the atomicity greatly reduces the \textit{non-determinism} introduced by concurrency.
As a result, the behaviour of a distributed system
can be simplified as a sequence of atomic actions and easier for reasoning.

\paragraph{Atomicity assumption}
Taking a closer look at the atomicity assumption,
it consists of the following two parts:
\begin{enumerate}
	\item 
		Atomicity of \textbf{\textit{non-local
		actions}} that may overlap in space-time with others 
		(e.g., read/write actions on shared variables):		
		In classical computing, if the overlap 
		is associated with the hardware like a memory location,
		often the hardware can support atomicity
		(e.g., atomic instruction compare-and-swap is provided in the x86 architecture).
		Lower-level atomic actions can be used to construct higher-level ones.
		Well-known examples include implementing the synchronising primitive ``semaphore'' with atomic instructions (e.g., test-and-set).
		This paradigm of layered design is introduced and promoted by the early pioneer Dijkstra~\cite{Dijkstra67,EWD310},
		who also identified a closely related key problem in concurrency, the mutual exclusion problem~\cite{Dijkstra65}.

		Perhaps surprisingly, even without a strong hardware support,
		this part of atomicity can also be guaranteed by pure software.
		Lamport introduced the notions of safe and atomic registers~\cite{Lamport86c,Lamport86d}:
		the former is weak and can be implemented 
		without assuming lower-level atomicity;
		while the latter is strong and actions on which are promised to be atomic.
		It is shown that atomic registers can be constructed from safe registers.
		The underlying model of this result is closely related to a line of studies by
		Lamport~\cite{Lamport74,Lamport79,Lamport86a,Lamport86b} 
		on non-atomic solutions to the mutual exclusion problem.
	\item
		Atomicity of \textbf{\textit{local actions}}\footnote{
			Here, the adjective ``local/non-local'' describes an action.
			In the classical literature, private variables are sometimes also called local variables,
			which is NOT an adopted terminology in this paper.
			Note that a local action does NOT necessarily act on private variables.
		}
		that do not overlap in space-time with others
		(see \Cref{def:local-act}): 
		At first glance, this part of atomicity seems to be trivially granted.
		Intuitively, two local actions 
		have no way to interfere with each other,
		and thus assuming a temporal order between them 
		should not affect the overall effect of the system. 

		But what represents the effect (a.k.a., semantics) of a system?
		Dijkstra is the first to notice the subtleties in this question.
		He realised that in a sequential process,
		the effect of an action is a change of the current state of the process,
		and the notion of state is only meaningful at discrete instants right before and after actions~\cite{EWD198}.
		During an action, the state is not defined.
		For a system of multiple asynchronous processes,
		the discrete instants corresponding to different processes are rarely aligned.
		To describe a state of the system,
		one can only take the \textit{direct product} of the states of all processes (at possibly different instants).
		In other words, 
		at any time each process corresponds to a subspace,
		and its state is a projection of the system state onto this subspace.
		Returning to the atomicity of local actions, 
		two concurrent local actions must correspond to separated subspaces.
		It can be shown that two concurrent actions in separated subspaces
		always change a system state to another determined one,
		no matter how they are concurrently implemented.

		We remark that the above justification particularly relies on the ability 
		to take direct product and projection of classical states.
		As will be seen later, 
		such classical reasoning will face challenges when we consider the atomicity in distributed quantum computing,
		because the states in quantum mechanics have quite different properties.
		The picture will become clearer when we discuss some motivating examples in \Cref{sec:motivating_examples}.
		The major contribution of this paper is also about this part of atomicity.
\end{enumerate}

\paragraph{Quantum computing}
Now let us move from classical to quantum. 
The last few decades have witnessed a rapid development of quantum computing.
Numerous quantum algorithms are proposed and shown advantageous on certain problems compared to their classical counterparts.
Celebrated examples include Shor's quantum algorithm for factoring~\cite{Shor94},
and Grover's quantum algorithm for database search~\cite{Grover96}.
While near-term applications of quantum computing are restricted by the scale of existing quantum hardware,
there are already industrial plans to connect multiple quantum processors 
and ultimately build distributed quantum computers.
For example, in IBM Quantum's development roadmap~\cite{IBMQroadmap},
they plan to realise communication between quantum processors in 2025,
and their target beyond 2030 is to build distributed quantum 
systems using long-range quantum networks.

While being ambitious on our way towards future quantum computing,
fundamental concepts like atomicity are, however, \textit{not seriously examined} in this new context.
In distributed quantum computing,
almost all existing models (e.g., quantum process algebra~\cite{JL04,GN05,FDY11}, LOCAL model~\cite{GKM09}, CONGEST model~\cite{EKNP14},
distributed programming~\cite{YZLF22,FLY22,HSHT21,WZLSXD22})
and algorithms (e.g., leader election~\cite{TKM12,AGM17}, dining philosophers~\cite{AGM17}, diameter computation~\cite{LM18},
all-pairs shortest path~\cite{IL19}) 
either implicitly or explicitly assume underlying atomic actions.
We therefore raise the following question:
\begin{equation*}
	\textit{\textbf{Is the atomicity assumption rigorously guaranteed in distributed quantum systems?}}
\end{equation*}
\begin{enumerate}
	\item 
		For the atomicity of non-local actions, 
		we need no worry.
		In practice, a quantum processor is a quantum device controlled by a classical process,
		and before quantum processors take actions on shared objects 
		(e.g., access a quantum register or a quantum communication channel),
		we can let their classical controlling processes communicate and enforce the mutual exclusion at the classical level.
		As aforementioned, this can be done by either (classical) hardware or software.
	\item
		By contrast, for the atomicity of local actions, 
		we do need to worry.
		The aforementioned classical justification relies on the classical properties of states and actions,
		which should be reexamined in the quantum context.
		It turns out that new challenges caused by quantum entanglement and the measurement problem
		from the underlying quantum mechanics make this problem non-trivial (see motivating examples in \Cref{sec:motivating_examples}).
\end{enumerate}

\paragraph{Challenges from quantum}
Quantum mechanics changes our very notion of state.
A classical state can be recorded by bits.
For example, the state space of a single bit is $\braces*{0,1}$,
and that of $n$ bits is $\braces*{0,1}^n$.
The quantum counterpart of bit is \textit{qubit},
of which the state space is a Hilbert space $\calH=\Co^{2}$.
The state of a qubit can be represented by a complex vector $\alpha\ket{0}+\beta\ket{1}$
with $\abs*{\alpha}^2+\abs*{\beta}^2=1$,
a \textit{quantum superposition} of the basis states $\ket{0}$ and $\ket{1}$.
For $n$ qubits, the state space is the tensor product $\calH^{\otimes n}$.
A state in $\calH^{\otimes n}$ can be represented by a vector $\ket{\psi}=\sum_{x\in \braces*{0,1}^n}\alpha_x\ket{x}$
with $\sum_{x}\abs*{\alpha_x}^2=1$.
Quantum superposition results in the phenomenon of \textit{quantum entanglement}.
A quantum state $\ket{\psi}$ is entangled if it is not a product state of the form $\ket{\psi_1}\otimes\ket{\psi_2}$.
A minimal example of entangled state is an EPR pair $\frac{1}{\sqrt{2}}\parens*{\ket{0}\otimes \ket{0}+\ket{1}\otimes \ket{1}}$.
The challenge posed by entanglement is that 
we can no longer break down the state of a system into a product of those of processes within,
because it can be entangled.
Quantum entanglement is a manifestation of the broader concept of quantum non-locality,
which is further discussed in \Cref{sub:id-challenge}.

Actions in quantum computing are also new.
There are two types of quantum operations that an action can perform:
quantum gates and quantum measurements.
A quantum gate is a unitary transformation of the quantum state.
A quantum (partial) measurement collapses a quantum state to another with certain probability
and yields a classical outcome.
A notorious problem in quantum mechanics is the \textit{measurement problem},
roughly stating that we do not know how the state evolves during a 
quantum measurement.\footnote{
	Careful readers might notice the similarity with Dijkstra's observation that during an action the state is not defined, as discussed above.
}
The challenge posed by the measurement problem
is that even if we dare to dive into the (hardware) implementation details of actions,
we cannot certainly determine the real-time dynamics of the state of a system,
in particular, during measurements.

\paragraph{Importance of rigorous basis}
We would like to stress the importance of why we need a rigorous basis for the atomicity.
Design and reasoning of distributed systems are notoriously error-prone,
largely due to the much non-determinism from concurrency.
Take the mutual exclusion problem for example.
Dijkstra described his solution to it as ``by far the most difficult pieces of program I ever made''~\cite{EWD310}.
As recalled by Lamport, in the early years of the field of concurrency,
some published concurrent algorithms later turned out to be incorrect~\cite{Lamport19}.
Even for Lamport's celebrated bakery algorithm,
its original correctness proof~\cite{Lamport74} 
contained a fallacious assumption and was discovered some years later in \cite{Lamport86a,Lamport86b};
and it was only after~\cite{Lamport90b} that two more unrevealed assumptions 
(in different previous versions of proofs~\cite{Lamport74,Lamport77,Lamport79}) were found.
In~\cite{Lamport90b} he pointed out ``the danger in trying to replace one program with an equivalent one,
if the equivalence has not been proved formally''.
Atomicity in distributed quantum computing is exactly a such assumption of equivalence.

\subsection{Motivating Examples}
\label{sec:motivating_examples}
 
Let us start with a series of motivating examples,
each building upon the previous.
Through these examples,
we hope that the readers will be convinced that
the aforementioned classical justification of the atomicity of local actions
does not directly carry over to the quantum case,
and there are many subtleties underlying the problem in the quantum setting.

For illustration, we temporally restrict ourselves to 
\textit{terminating} and \textit{deterministic} quantum systems \textit{without} non-local actions.
The task is to prove the equivalence between a \textit{synchronous} and an \textit{asynchronous} real-time\footnote{
	In this paper, we assume the existence of global (physical) time,
	which is not equivalent to a global (logical) clock.
	Different processes in a system can use different logical clocks.
	The only excepted consideration by this assumption is the relativistic effect,
	which is left for future works.
}
quantum systems.
The former describes an ideal real-time implementation of a logical quantum system:
actions from different processes are synchronised.
The latter describes a possible actual real-time implementation:
actions from different processes are not synchronised.

Systems with atomic actions are definitely synchronous.
At this point, the equivalence of two systems means that given any input state,
they produce the same output state.
So, the task considered in these examples is weaker than justifying the atomicity of local actions, which is the aim of this paper.

\begin{example}
	\label{eg:mot1}
	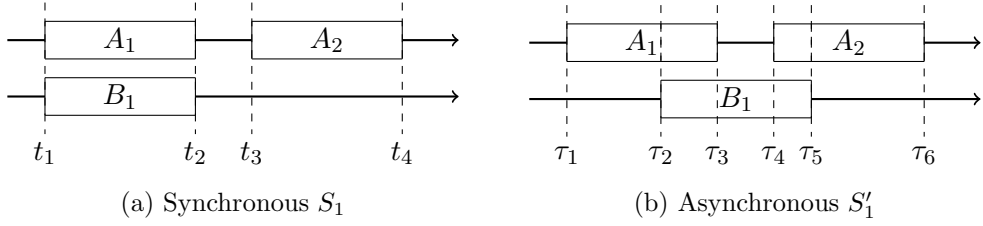
\begin{figure}
		\centering
		\begin{subfigure}[b]{0.4\textwidth}
			\centering
			\begin{tikzpicture}
				\node (0) at (-5, 1.5) {};
				\node (1) at (-5, 1) {};
				\node (2) at (-3, 1.5) {};
				\node (4) at (-3, 1) {};
				\node (9) at (-5, 0.75) {};
				\node (10) at (-5, 0.25) {};
				\node (11) at (-3, 0.75) {};
				\node (12) at (-3, 0.25) {};
				\node (13) at (-5.5, 1.25) {};
				\node (14) at (-5, 1.25) {};
				\node (15) at (-5.5, 0.5) {};
				\node (16) at (-5, 0.5) {};
				\node (17) at (-2.25, 1.5) {};
				\node (18) at (-2.25, 1) {};
				\node (19) at (-0.25, 1.5) {};
				\node (20) at (-0.25, 1) {};
				\node (21) at (-3, 1.25) {};
				\node (22) at (-2.25, 1.25) {};
				\node (23) at (-3, 0.5) {};
				\node (24) at (0.5, 0.5) {};
				\node (25) at (-0.25, 1.25) {};
				\node (26) at (0.5, 1.25) {};
				\node (27) at (-4, 1.25) {$A_1$};
				\node (28) at (-4, 0.5) {$B_1$};
				\node (29) at (-1.25, 1.25) {$A_2$};
				\node (30) at (-5, 1.75) {};
				\node (31) at (-5, 0) {};
				\node (32) at (-5, -0.25) {$t_1$};
				\node (33) at (-3, 1.75) {};
				\node (34) at (-3, 0) {};
				\node (35) at (-3, -0.25) {$t_2$};
				\node (36) at (-2.25, 1.75) {};
				\node (37) at (-2.25, 0) {};
				\node (38) at (-2.25, -0.25) {$t_3$};
				\node (39) at (-0.25, 1.75) {};
				\node (40) at (-0.25, 0) {};
				\node (41) at (-0.25, -0.25) {$t_4$};
				\draw (0.center) to (1.center);
				\draw (0.center) to (2.center);
				\draw (1.center) to (4.center);
				\draw (2.center) to (4.center);
				\draw (9.center) to (10.center);
				\draw (9.center) to (11.center);
				\draw (10.center) to (12.center);
				\draw (11.center) to (12.center);
				\draw [thick] (13.center) to (14.center);
				\draw [thick] (15.center) to (16.center);
				\draw (17.center) to (18.center);
				\draw (17.center) to (19.center);
				\draw (18.center) to (20.center);
				\draw (19.center) to (20.center);
				\draw [thick] (21.center) to (22.center);
				\draw [->,thick] (25.center) to (26.center);
				\draw [->,thick] (23.center) to (24.center);
				\draw [dashed] (30.center) to (31.center);
				\draw [dashed] (33.center) to (34.center);
				\draw [dashed] (36.center) to (37.center);
				\draw [dashed] (39.center) to (40.center);
			\end{tikzpicture}
			\caption{Synchronous $S_1$}
			\label{fig:egmot-1a}
		\end{subfigure}
		\begin{subfigure}[b]{0.4\textwidth}
			\centering
			\begin{tikzpicture}
				\node (0) at (-5, 1.5) {};
				\node (1) at (-5, 1) {};
				\node (2) at (-3, 1.5) {};
				\node (4) at (-3, 1) {};
				\node (9) at (-3.75, 0.75) {};
				\node (10) at (-3.75, 0.25) {};
				\node (11) at (-1.75, 0.75) {};
				\node (12) at (-1.75, 0.25) {};
				\node (13) at (-5.5, 1.25) {};
				\node (14) at (-5, 1.25) {};
				\node (15) at (-5.5, 0.5) {};
				\node (16) at (-3.75, 0.5) {};
				\node (17) at (-2.25, 1.5) {};
				\node (18) at (-2.25, 1) {};
				\node (19) at (-0.25, 1.5) {};
				\node (20) at (-0.25, 1) {};
				\node (21) at (-3, 1.25) {};
				\node (22) at (-2.25, 1.25) {};
				\node (23) at (-1.75, 0.5) {};
				\node (24) at (0.5, 0.5) {};
				\node (25) at (-0.25, 1.25) {};
				\node (26) at (0.5, 1.25) {};
				\node (27) at (-4, 1.25) {$A_1$};
				\node (28) at (-2.75, 0.5) {$B_1$};
				\node (29) at (-1.25, 1.25) {$A_2$};
				\node (30) at (-5, 1.75) {};
				\node (31) at (-5, 0) {};
				\node (32) at (-5, -0.25) {$\tau_1$};
				\node (33) at (-3, 1.75) {};
				\node (34) at (-3, 0) {};
				\node (35) at (-3, -0.25) {$\tau_3$};
				\node (36) at (-2.25, 1.75) {};
				\node (37) at (-2.25, 0) {};
				\node (38) at (-2.25, -0.25) {$\tau_4$};
				\node (39) at (-0.25, 1.75) {};
				\node (40) at (-0.25, 0) {};
				\node (41) at (-0.25, -0.25) {$\tau_6$};
				\node (42) at (-3.75, 1.5) {};
				\node (43) at (-3.75, 1) {};
				\node (44) at (-3.75, 1.25) {};
				\node (45) at (-3.75, 1.75) {};
				\node (46) at (-3.75, 0) {};
				\node (47) at (-3.75, -0.25) {$\tau_2$};
				\node (48) at (-1.75, 1.5) {};
				\node (49) at (-1.75, 1) {};
				\node (50) at (-1.75, 1.25) {};
				\node (51) at (-1.75, 1.75) {};
				\node (52) at (-1.75, 0) {};
				\node (53) at (-1.75, -0.25) {$\tau_5$};
				\draw (0.center) to (1.center);
				\draw (0.center) to (2.center);
				\draw (1.center) to (4.center);
				\draw (2.center) to (4.center);
				\draw (9.center) to (10.center);
				\draw (9.center) to (11.center);
				\draw (10.center) to (12.center);
				\draw (11.center) to (12.center);
				\draw [thick] (13.center) to (14.center);
				\draw [thick] (15.center) to (16.center);
				\draw (17.center) to (18.center);
				\draw (17.center) to (19.center);
				\draw (18.center) to (20.center);
				\draw (19.center) to (20.center);
				\draw [thick] (21.center) to (22.center);
				\draw [->,thick] (25.center) to (26.center);
				\draw [->,thick] (23.center) to (24.center);
				\draw [dashed] (30.center) to (31.center);
				\draw [dashed] (33.center) to (34.center);
				\draw [dashed] (36.center) to (37.center);
				\draw [dashed] (39.center) to (40.center);
				\draw [dashed] (42.center) to (43.center);
				\draw [dashed] (45.center) to (46.center);
				\draw [dashed] (48.center) to (49.center);
				\draw [dashed] (51.center) to (52.center);
			\end{tikzpicture}
			\caption{Asynchronous $S_1'$}
			\label{fig:egmot-1b}
		\end{subfigure}
		\caption{Proving the synchronous system $S_1$ is equivalent to asynchronous $S_1'$.
		Both can be thought of as space-time diagrams. 
		Each wire corresponds to a qubit,
		all together spanning the space.
		Each box corresponds to an action, which on the space axis specifies the qubits it acts on,
		and on the time axis specifies the time interval it spans.
		The arrow of time is from the left to the right.
		}
		\label{fig:egmot-1}
	\end{figure}
	In \Cref{fig:egmot-1a} is a synchronous system $S_1$,
	with two qubits and three actions (which are supposed to be unitary actions).
	We can think of it as a space-time diagram.
	Each wire represents a qubit, all together spanning the space;
	and each box represents an action,
	which on the space axis specifies the qubits it acts on,
	and on the time axis specifies the time interval it spans.
	$S_1$ is synchronous because the time intervals of actions are aligned.
	Let $\ket{\psi\parens*{t}}$ be the state of all qubits in $S_1$ at time $t$.
	We can identify states at the following instants:
	initial state $\ket{\psi(t_1)}$,
	intermediate states $\ket{\psi(t_2)},\ket{\psi\parens*{t_3}}$,
	and output state $\ket{\psi(t_4)}$.
	We can also describe the relations between them:
	$\ket{\psi(t_2)}=\parens*{A_1\otimes B_1}\ket{\psi(t_1)}$,
	$\ket{\psi\parens*{t_3}}=\ket{\psi\parens*{t_2}}$
	and $\ket{\psi(t_4)}=\parens*{A_2\otimes \Id}\ket{\psi\parens*{t_3}}$.\footnote{
		Here, for simplicity, we use the same notation for an action and the operation it performs.
		$\Id$ stands for the identity operator.
	}
	The effect of the system is described by the relation $\ket{\psi\parens*{t_4}}=\parens*{A_2A_1\otimes B_1}\ket{\psi\parens*{t_1}}$.

	In \Cref{fig:egmot-1b} is an asynchronous system $S_1'$,
	expected to be equivalent to $S_1$.\footnote{Readers who are familiar with quantum computing might think this is obvious,
		because $S_1$ and $S_1'$ can be described by the same logical quantum circuit.
		But why do they correspond to the same logical quantum circuit?
		In other words, what convinces us that the actual real-time quantum system faithfully implements a logical one?
		This question is exactly what we need to answer in these examples.
	}
	To prove the equivalence, denoted by $S_1\simeq S_1'$, 
	we need to show they have the same effect.
	$S_1'$ is asynchronous: the time interval of $B_1$ overlaps with those of $A_1$ and $A_2$.
	This is possible when $A_1,A_2$ are from some process $A$,
	and $B_1$ is from some other process $B$.
	Let $\ket{\psi'(\tau)}$ be the state of $S_1'$ at time $\tau$.
	We can still identify states at several instants:
	initial state $\ket{\psi'\parens*{\tau_1}}$,
	output state $\ket{\psi'\parens*{\tau_6}}$,
	and other intermediate states $\ket{\psi'\parens*{\tau_j}}$.
	Now the relations between them involve non-determinism.
	For example, we can only obtain relation like 
	$\ket{\psi'\parens*{\tau_3}}=\parens*{A_1\otimes B_1'}\ket{\psi'\parens*{\tau_1}}$,
	for some unknown ``partial action'' $B_1'$ (depending on the implementation details) of $B_1$.
	In this case, even describing the effect of $S_1'$ using actions within,
	i.e., relating $\ket{\psi'\parens*{\tau_1}}$ and $\ket{\psi'\parens*{\tau_6}}$ using $A_1,A_2$ and $B_1$,
	becomes subtle.

	There is however a way to circumvent the above issue.
	The Hilbert space of two qubits is spanned by the computational basis states $\braces*{\ket{00},\ket{01},\ket{10},\ket{11}}$,
	and by linearity of unitary operators it suffices to verify that the two systems produce the same output states on all these product states.
	It is easy to see the state remains product at all time. 
	Suppose that $\ket{\psi'\parens*{\tau}}=\ket{\psi_1'\parens*{\tau}}\ket{\psi_2'\parens*{\tau}}$,\footnote{
		We use the convention $\ket{\psi}\ket{\phi}=\ket{\psi}\otimes\ket{\phi}.$
	} 
	where $\ket{\psi_1'}$ and $\ket{\psi_2'}$ are states of the first and second qubits, respectively.
	Then, we have
	$\ket{\psi_1'\parens*{\tau_6}}=A_2\ket{\psi_1'\parens*{\tau_4}}=A_2\ket{\psi_1'\parens*{\tau_3}}=A_2A_1\ket{\psi_1'\parens*{\tau_1}}$,
	and $\ket{\psi_2'\parens*{\tau_6}}=\ket{\psi_2'\parens*{\tau_5}}=B_1\ket{\psi_2'\parens*{\tau_2}}=B_1\ket{\psi_2'\parens*{\tau_1}}$.
	Combined with $\ket{\psi\parens*{t_1}}=\ket{\psi'\parens*{\tau_1}}$ 
	and $\ket{\psi\parens*{t_4}}=\ket{\psi'\parens*{\tau_6}}$,
	we can conclude $S_1\simeq S_1'$.


	\begin{remark}
		It is worth stressing again the role played by the \textit{quantum entanglement} here:
		writing $\ket{\psi'\parens*{\tau}}$ 
		as a product $\ket{\psi_1'\parens*{\tau}}\ket{\psi_2'\parens*{\tau}}$
		is impossible if $\ket{\psi\parens*{\tau}}$ is entangled.
		The need to verify the equivalence on the computational basis states
		is because they are product states.
	\end{remark}

	\begin{remark}[Real-time states vs.\ discrete abstract states]
		\label{rmk:con-Dijk-obser}
		What is the connection of the above circumvention with Dijkstra's observation that
		the state cannot be defined during actions?
		Dijkstra's notion of state is abstract and a part of the discrete model.
		The state $\ket{\psi\parens*{t}}$ (and $\ket{\psi'(\tau)}$) we consider here is real-time,
		to which a physical reality can be assigned.
		Abstract states are only meaningful (as good abstractions) when they correspond to physical states.
		Typically, an abstract state is a function of a set of real-time states,
		omitting implementation details unconcerned (hence ``undefined'').
		For example, in $S_1'$,
		we can use real-time states
		$\ket{\psi_1'\parens*{\tau}}$ for $\tau=\tau_1,\tau_3,\tau_4,\tau_6$
		and $\ket{\psi_2'\parens*{\tau}}$ for $\tau=\tau_1,\tau_2,\tau_5,\tau_6$
		to compose abstract states,
		without knowing how $A_1,A_2$ and $B_1$ are implemented.
		This is exactly why the classical reasoning works after the above circumvention.
	\end{remark}
\end{example}

\begin{example}
	\label{eg:mot2}
	\begin{figure}
		\centering
		\begin{subfigure}[b]{0.4\textwidth}
			\centering
			\begin{tikzpicture}
				\node (0) at (-4.5, 1.5) {};
				\node (1) at (-4.5, 1) {};
				\node (2) at (-3, 1.5) {};
				\node (4) at (-3, 1) {};
				\node (9) at (-4.5, 0.75) {};
				\node (10) at (-4.5, 0.25) {};
				\node (11) at (-3, 0.75) {};
				\node (12) at (-3, 0.25) {};
				\node (13) at (-5, 1.25) {};
				\node (14) at (-4.5, 1.25) {};
				\node (15) at (-5, 0.5) {};
				\node (16) at (-4.5, 0.5) {};
				\node (17) at (-2.5, 1.5) {};
				\node (18) at (-2.5, 1) {};
				\node (19) at (-1, 1.5) {};
				\node (20) at (-1, 1) {};
				\node (21) at (-3, 1.25) {};
				\node (22) at (-2.5, 1.25) {};
				\node (23) at (-3, 0.5) {};
				\node (24) at (-0.25, 0.5) {};
				\node (25) at (-1, 1.25) {};
				\node (26) at (-0.25, 1.25) {};
				\node (27) at (-3.75, 1.25) {$A_1$};
				\node (28) at (-3.75, 0.5) {$B_1$};
				\node (29) at (-1.75, 1.25) {$A_2$};
				\node (30) at (-4.5, 1.75) {};
				\node (31) at (-4.5, 0) {};
				\node (32) at (-4.5, -0.25) {$t_3$};
				\node (39) at (-1, 1.75) {};
				\node (40) at (-1, 0) {};
				\node (41) at (-1, -0.25) {$t_4$};
				\node (42) at (-5.5, 1.5) {};
				\node (43) at (-5.5, 0.25) {};
				\node (44) at (-5, 1.5) {};
				\node (45) at (-5, 0.25) {};
				\node (46) at (-6, 1.25) {};
				\node (47) at (-5.5, 1.25) {};
				\node (48) at (-6, 0.5) {};
				\node (49) at (-5.5, 0.5) {};
				\node (50) at (-5.25, 0.9) {$C_1$};
				\node (51) at (-5.5, 1.5) {};
				\node (52) at (-5.5, 1) {};
				\node (53) at (-5.5, 1.25) {};
				\node (54) at (-5.5, 1.75) {};
				\node (55) at (-5.5, 0) {};
				\node (56) at (-5.5, -0.25) {$t_1$};
				\node (57) at (-5, 1.5) {};
				\node (58) at (-5, 1) {};
				\node (59) at (-5, 1.25) {};
				\node (60) at (-5, 1.75) {};
				\node (61) at (-5, 0) {};
				\node (62) at (-5, -0.25) {$t_2$};
				\draw (0.center) to (1.center);
				\draw (0.center) to (2.center);
				\draw (1.center) to (4.center);
				\draw (2.center) to (4.center);
				\draw (9.center) to (10.center);
				\draw (9.center) to (11.center);
				\draw (10.center) to (12.center);
				\draw (11.center) to (12.center);
				\draw [thick] (13.center) to (14.center);
				\draw [thick] (15.center) to (16.center);
				\draw (17.center) to (18.center);
				\draw (17.center) to (19.center);
				\draw (18.center) to (20.center);
				\draw (19.center) to (20.center);
				\draw [thick] (21.center) to (22.center);
				\draw [->,thick] (25.center) to (26.center);
				\draw [->,thick] (23.center) to (24.center);
				\draw [dashed] (30.center) to (31.center);
				\draw [dashed] (39.center) to (40.center);
				\draw (42.center) to (43.center);
				\draw (42.center) to (44.center);
				\draw (43.center) to (45.center);
				\draw (44.center) to (45.center);
				\draw [thick] (46.center) to (47.center);
				\draw [thick] (48.center) to (49.center);
				\draw (51.center) to (52.center);
				\draw [dashed] (54.center) to (55.center);
				\draw (57.center) to (58.center);
				\draw [dashed] (60.center) to (61.center);
			\end{tikzpicture}
			\caption{Synchronous $S_2$}
			\label{fig:egmot-2a}
		\end{subfigure}
		\begin{subfigure}[b]{0.4\textwidth}
			\centering
			\begin{tikzpicture}
				\node (0) at (-4.5, 1.5) {};
				\node (1) at (-4.5, 1) {};
				\node (2) at (-3, 1.5) {};
				\node (4) at (-3, 1) {};
				\node (9) at (-3.5, 0.75) {};
				\node (10) at (-3.5, 0.25) {};
				\node (11) at (-2, 0.75) {};
				\node (12) at (-2, 0.25) {};
				\node (13) at (-5, 1.25) {};
				\node (14) at (-4.5, 1.25) {};
				\node (15) at (-5, 0.5) {};
				\node (16) at (-3.5, 0.5) {};
				\node (17) at (-2.5, 1.5) {};
				\node (18) at (-2.5, 1) {};
				\node (19) at (-1, 1.5) {};
				\node (20) at (-1, 1) {};
				\node (21) at (-3, 1.25) {};
				\node (22) at (-2.5, 1.25) {};
				\node (23) at (-2, 0.5) {};
				\node (24) at (-0.25, 0.5) {};
				\node (25) at (-1, 1.25) {};
				\node (26) at (-0.25, 1.25) {};
				\node (27) at (-3.75, 1.25) {$A_1$};
				\node (28) at (-2.75, 0.5) {$B_1$};
				\node (29) at (-1.75, 1.25) {$A_2$};
				\node (30) at (-4.5, 1.75) {};
				\node (31) at (-4.5, 0) {};
				\node (32) at (-4.5, -0.25) {$\tau_3$};
				\node (39) at (-1, 1.75) {};
				\node (40) at (-1, 0) {};
				\node (41) at (-1, -0.25) {$\tau_4$};
				\node (42) at (-5.5, 1.5) {};
				\node (43) at (-5.5, 0.25) {};
				\node (44) at (-5, 1.5) {};
				\node (45) at (-5, 0.25) {};
				\node (46) at (-6, 1.25) {};
				\node (47) at (-5.5, 1.25) {};
				\node (48) at (-6, 0.5) {};
				\node (49) at (-5.5, 0.5) {};
				\node (50) at (-5.25, 0.9) {$C_1$};
				\node (51) at (-5.5, 1.5) {};
				\node (52) at (-5.5, 1) {};
				\node (53) at (-5.5, 1.25) {};
				\node (54) at (-5.5, 1.75) {};
				\node (55) at (-5.5, 0) {};
				\node (56) at (-5.5, -0.25) {$\tau_1$};
				\node (57) at (-5, 1.5) {};
				\node (58) at (-5, 1) {};
				\node (59) at (-5, 1.25) {};
				\node (60) at (-5, 1.75) {};
				\node (61) at (-5, 0) {};
				\node (62) at (-5, -0.25) {$\tau_2$};
				\draw (0.center) to (1.center);
				\draw (0.center) to (2.center);
				\draw (1.center) to (4.center);
				\draw (2.center) to (4.center);
				\draw (9.center) to (10.center);
				\draw (9.center) to (11.center);
				\draw (10.center) to (12.center);
				\draw (11.center) to (12.center);
				\draw [thick] (13.center) to (14.center);
				\draw [thick] (15.center) to (16.center);
				\draw (17.center) to (18.center);
				\draw (17.center) to (19.center);
				\draw (18.center) to (20.center);
				\draw (19.center) to (20.center);
				\draw [thick] (21.center) to (22.center);
				\draw [->,thick] (25.center) to (26.center);
				\draw [->,thick] (23.center) to (24.center);
				\draw [dashed] (30.center) to (31.center);
				\draw [dashed] (39.center) to (40.center);
				\draw (42.center) to (43.center);
				\draw (42.center) to (44.center);
				\draw (43.center) to (45.center);
				\draw (44.center) to (45.center);
				\draw [thick] (46.center) to (47.center);
				\draw [thick] (48.center) to (49.center);
				\draw (51.center) to (52.center);
				\draw [dashed] (54.center) to (55.center);
				\draw (57.center) to (58.center);
				\draw [dashed] (60.center) to (61.center);
			\end{tikzpicture}
			\caption{Asynchronous $S_2'$}
			\label{fig:egmot-2b}
		\end{subfigure}
		\caption{Proving the synchronous system $S_2$ is equivalent to asynchronous $S_2'$.}
		\label{fig:egmot-2}
	\end{figure}
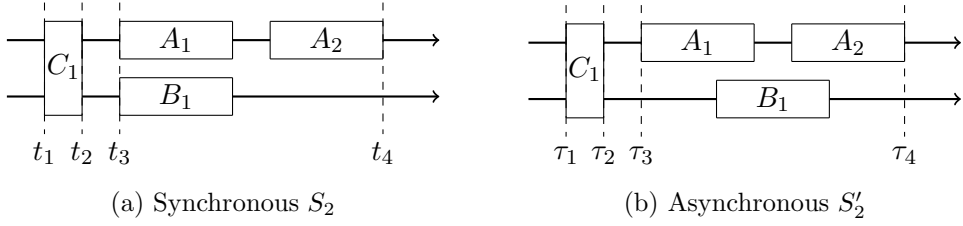
	Next we consider a harder example,
	which invalidates the reasoning in \Cref{eg:mot1}.
	In \Cref{fig:egmot-2a} is a synchronous system $S_2$,
	which compared to $S_1$,
	has an additional action $C_1$ on both qubits at the beginning.
	Specifically, we let $C_1$ performs a unitary quantum gate that prepares the entangled EPR pairs,
	i.e., $C_1\ket{xy}=\ket{\beta_{xy}}=\frac{1}{\sqrt{2}}\parens*{\ket{0}\ket{y}+(-1)^x\ket{1}\ket{1-y}}$ 
	for $x,y\in \braces*{0,1}$.
	Similarly we have the asynchronous system $S_2'$ in \Cref{fig:egmot-2b}.
 
	Problem will arise if we still use the reasoning in \Cref{eg:mot1},
	i.e., verify the equivalence 
	on all computational basis states.
	To see this, note that in $S_2$,
	starting from $\ket{\psi\parens*{t_1}}=\ket{xy}$,
	the state $\ket{\psi\parens*{t_2}}=\ket{\beta_{xy}}$ will be entangled.
	A similar statement holds for $S_2'$.
	Because $S_2$ (when restricted to the time region $[t_3,t_4]$) contains a similar structure to $S_1$,
	we met the same issue as in \Cref{eg:mot1}.
	We know $\ket{\psi\parens*{t_4}}=\parens*{A_2A_1\otimes B_1}\ket{\psi\parens*{t_3}}$ in $S_2$;
	but we do not know how to relate $\ket{\psi'\parens*{\tau_4}}$ and $\ket{\psi'\parens*{\tau_3}}$ in $S_2'$.

	There is still a way to circumvent the above new issue.
	The Hilbert space of two qubits is also spanned 
	by the EPR basis states $\braces*{\ket{\beta_{00}},\ket{\beta_{01}},\ket{\beta_{10}},\ket{\beta_{11}}}$.
	We can verify the equivalence between $S_2$ and $S_2'$
	on all these states.
	The merit is that in this case,
	$\ket{\psi(t_2)}=C_1\ket{\psi\parens*{t_1}}$ and $\ket{\psi'\parens*{\tau_2}}=C_1\ket{\psi'\parens*{\tau_1}}$ 
	will be transformed to product states by action $C_1$,
	and the issue disappears.
\end{example}

\begin{remark}
	\label{rmk:eg12-insight}
	Reviewing the above two examples, we obtain the following insights.
	The circumvention in \Cref{eg:mot1} can be generalised to show the equivalence is preserved under tensor product:
	if systems $G_1\simeq G_1'$ and $ G_2\simeq G_2'$,
	then $G_1\otimes G_2\simeq G_1'\otimes G_2'$.
	And the circumvention in \Cref{eg:mot2} can be generalised to show
	the equivalence is preserved under sequential composition:
	if systems $G_1\simeq G_1'$ and $G_2\simeq G_2'$,
	then $G_1\circ G_2\simeq G_1'\circ G_2'$.
\end{remark}

\begin{example}
	\label{eg:mot3}
	\begin{figure}
		\centering
		\begin{subfigure}[b]{0.4\textwidth}
			\centering
			\begin{tikzpicture}
				\node (0) at (-4.5, 0.75) {};
				\node (1) at (-4.5, -0.5) {};
				\node (2) at (-4, 0.75) {};
				\node (4) at (-4, -0.5) {};
				\node (9) at (-4.5, 1.5) {};
				\node (10) at (-4.5, 1) {};
				\node (11) at (-4, 1.5) {};
				\node (12) at (-4, 1) {};
				\node (17) at (-3.5, 0.75) {};
				\node (18) at (-3.5, 0.25) {};
				\node (19) at (-3, 0.75) {};
				\node (20) at (-3, 0.25) {};
				\node (21) at (-4, 0.5) {};
				\node (22) at (-3.5, 0.5) {};
				\node (23) at (-4, 1.25) {};
				\node (24) at (-2.25, 1.25) {};
				\node (25) at (-3, 0.5) {};
				\node (26) at (-2.25, 0.5) {};
				\node (27) at (-4.25, 0.15) {$C_3$};
				\node (28) at (-4.25, 1.25) {$A_2$};
				\node (29) at (-3.25, 0.5) {$B_3$};
				\node (39) at (-5.5, 1.75) {};
				\node (40) at (-5.5, -0.75) {};
				\node (41) at (-5.5, -1) {$t_1$};
				\node (42) at (-5.5, 1.5) {};
				\node (43) at (-5.5, 0.25) {};
				\node (44) at (-5, 1.5) {};
				\node (45) at (-5, 0.25) {};
				\node (46) at (-6, 1.25) {};
				\node (47) at (-5.5, 1.25) {};
				\node (48) at (-6, 0.5) {};
				\node (49) at (-5.5, 0.5) {};
				\node (50) at (-5.25, 0.9) {$C_2$};
				\node (51) at (-5.5, 1.5) {};
				\node (57) at (-5, 1.5) {};
				\node (63) at (-5.5, 0) {};
				\node (64) at (-5.5, -0.5) {};
				\node (65) at (-5, 0) {};
				\node (66) at (-5, -0.5) {};
				\node (67) at (-5.25, -0.25) {$B_2$};
				\node (68) at (-4, -0.25) {};
				\node (69) at (-2.25, -0.25) {};
				\node (70) at (-5, 1.25) {};
				\node (71) at (-4.5, 1.25) {};
				\node (72) at (-5, 0.5) {};
				\node (73) at (-4.5, 0.5) {};
				\node (74) at (-5, -0.25) {};
				\node (75) at (-4.5, -0.25) {};
				\node (76) at (-6, -0.25) {};
				\node (77) at (-5.5, -0.25) {};
				\node (78) at (-6.5, 0.75) {};
				\node (79) at (-6.5, -0.5) {};
				\node (80) at (-6, 0.75) {};
				\node (81) at (-6, -0.5) {};
				\node (82) at (-6.5, 1.5) {};
				\node (83) at (-6.5, 1) {};
				\node (84) at (-6, 1.5) {};
				\node (85) at (-6, 1) {};
				\node (86) at (-7.5, 0.75) {};
				\node (87) at (-7.5, 0.25) {};
				\node (88) at (-7, 0.75) {};
				\node (89) at (-7, 0.25) {};
				\node (91) at (-7.5, 0.5) {};
				\node (94) at (-7, 0.5) {};
				\node (96) at (-6.25, 0.15) {$C_1$};
				\node (97) at (-6.25, 1.25) {$A_1$};
				\node (98) at (-7.25, 0.5) {$B_1$};
				\node (101) at (-6.5, 1.25) {};
				\node (102) at (-6.5, 0.5) {};
				\node (103) at (-6.5, -0.25) {};
				\node (104) at (-8, 0.5) {};
				\node (106) at (-8, 1.25) {};
				\node (107) at (-8, -0.25) {};
				\node (108) at (-5, 1.75) {};
				\node (109) at (-5, -0.75) {};
				\node (110) at (-5, -1) {$t_2$};
				\node (111) at (-3, 1.75) {};
				\node (112) at (-3, -0.75) {};
				\node (113) at (-3, -1) {$t_3$};
				\draw (0.center) to (1.center);
				\draw (0.center) to (2.center);
				\draw (1.center) to (4.center);
				\draw (2.center) to (4.center);
				\draw (9.center) to (10.center);
				\draw (9.center) to (11.center);
				\draw (10.center) to (12.center);
				\draw (11.center) to (12.center);
				\draw (17.center) to (18.center);
				\draw (17.center) to (19.center);
				\draw (18.center) to (20.center);
				\draw (19.center) to (20.center);
				\draw [thick] (21.center) to (22.center);
				\draw [->,thick] (25.center) to (26.center);
				\draw [->,thick] (23.center) to (24.center);
				\draw [dashed] (39.center) to (40.center);
				\draw (42.center) to (43.center);
				\draw (42.center) to (44.center);
				\draw (43.center) to (45.center);
				\draw (44.center) to (45.center);
				\draw [thick] (46.center) to (47.center);
				\draw [thick] (48.center) to (49.center);
				\draw (63.center) to (64.center);
				\draw (63.center) to (65.center);
				\draw (64.center) to (66.center);
				\draw (65.center) to (66.center);
				\draw [->,thick] (68.center) to (69.center);
				\draw [thick] (70.center) to (71.center);
				\draw [thick] (72.center) to (73.center);
				\draw [thick] (74.center) to (75.center);
				\draw [thick] (76.center) to (77.center);
				\draw (78.center) to (79.center);
				\draw (78.center) to (80.center);
				\draw (79.center) to (81.center);
				\draw (80.center) to (81.center);
				\draw (82.center) to (83.center);
				\draw (82.center) to (84.center);
				\draw (83.center) to (85.center);
				\draw (84.center) to (85.center);
				\draw (86.center) to (87.center);
				\draw (86.center) to (88.center);
				\draw (87.center) to (89.center);
				\draw (88.center) to (89.center);
				\draw [thick] (94.center) to (102.center);
				\draw [thick] (106.center) to (101.center);
				\draw [thick] (104.center) to (91.center);
				\draw [thick] (107.center) to (103.center);
				\draw [dashed] (108.center) to (109.center);
				\draw [dashed] (111.center) to (112.center);
			\end{tikzpicture}
			\caption{Synchronous $S_3$}
			\label{fig:egmot-3a}
		\end{subfigure}
		\begin{subfigure}[b]{0.5\textwidth}
			\centering
			\begin{tikzpicture}
				\node (0) at (-3.5, 0.75) {};
				\node (1) at (-3.5, -0.5) {};
				\node (2) at (-3, 0.75) {};
				\node (4) at (-3, -0.5) {};
				\node (9) at (-4.25, 1.5) {};
				\node (10) at (-4.25, 1) {};
				\node (11) at (-2.25, 1.5) {};
				\node (12) at (-2.25, 1) {};
				\node (17) at (-2.75, 0.75) {};
				\node (18) at (-2.75, 0.25) {};
				\node (19) at (-2, 0.75) {};
				\node (20) at (-2, 0.25) {};
				\node (21) at (-3, 0.5) {};
				\node (22) at (-2.75, 0.5) {};
				\node (23) at (-2.25, 1.25) {};
				\node (24) at (-1.25, 1.25) {};
				\node (25) at (-2, 0.5) {};
				\node (26) at (-1.25, 0.5) {};
				\node (27) at (-3.25, 0.15) {$C_3$};
				\node (28) at (-3.23, 1.25) {$A_2$};
				\node (29) at (-2.35, 0.5) {$B_3$};
				\node (39) at (-5, 1.75) {};
				\node (40) at (-5, -0.75) {};
				\node (41) at (-5, -1) {$\tau_1$};
				\node (42) at (-5, 1.5) {};
				\node (43) at (-5, 0.25) {};
				\node (44) at (-4.5, 1.5) {};
				\node (45) at (-4.5, 0.25) {};
				\node (46) at (-5.25, 1.25) {};
				\node (47) at (-5, 1.25) {};
				\node (48) at (-6, 0.5) {};
				\node (49) at (-5, 0.5) {};
				\node (50) at (-4.75, 0.9) {$C_2$};
				\node (51) at (-5, 1.5) {};
				\node (57) at (-4.5, 1.5) {};
				\node (63) at (-5.75, 0) {};
				\node (64) at (-5.75, -0.5) {};
				\node (65) at (-3.75, 0) {};
				\node (66) at (-3.75, -0.5) {};
				\node (67) at (-4.75, -0.25) {$B_2$};
				\node (68) at (-3, -0.25) {};
				\node (69) at (-1.25, -0.25) {};
				\node (70) at (-4.5, 1.25) {};
				\node (71) at (-4.25, 1.25) {};
				\node (72) at (-4.5, 0.5) {};
				\node (73) at (-3.5, 0.5) {};
				\node (74) at (-3.75, -0.25) {};
				\node (75) at (-3.5, -0.25) {};
				\node (76) at (-6, -0.25) {};
				\node (77) at (-5.75, -0.25) {};
				\node (78) at (-6.5, 0.75) {};
				\node (79) at (-6.5, -0.5) {};
				\node (80) at (-6, 0.75) {};
				\node (81) at (-6, -0.5) {};
				\node (82) at (-7, 1.5) {};
				\node (83) at (-7, 1) {};
				\node (84) at (-5.25, 1.5) {};
				\node (85) at (-5.25, 1) {};
				\node (86) at (-7.5, 0.75) {};
				\node (87) at (-7.5, 0.25) {};
				\node (88) at (-6.75, 0.75) {};
				\node (89) at (-6.75, 0.25) {};
				\node (91) at (-7.5, 0.5) {};
				\node (94) at (-6.75, 0.5) {};
				\node (96) at (-6.25, 0.15) {$C_1$};
				\node (97) at (-6.12, 1.25) {$A_1$};
				\node (98) at (-7.12, 0.5) {$B_1$};
				\node (101) at (-7, 1.25) {};
				\node (102) at (-6.5, 0.5) {};
				\node (103) at (-6.5, -0.25) {};
				\node (104) at (-8, 0.5) {};
				\node (106) at (-8, 1.25) {};
				\node (107) at (-8, -0.25) {};
				\node (108) at (-4.5, 1.75) {};
				\node (109) at (-4.5, -0.75) {};
				\node (110) at (-4.5, -1) {$\tau_2$};
				\node (111) at (-2, 1.75) {};
				\node (112) at (-2, -0.75) {};
				\node (113) at (-2, -1) {$\tau_3$};
				\draw (0.center) to (1.center);
				\draw (0.center) to (2.center);
				\draw (1.center) to (4.center);
				\draw (2.center) to (4.center);
				\draw (9.center) to (10.center);
				\draw (9.center) to (11.center);
				\draw (10.center) to (12.center);
				\draw (11.center) to (12.center);
				\draw (17.center) to (18.center);
				\draw (17.center) to (19.center);
				\draw (18.center) to (20.center);
				\draw (19.center) to (20.center);
				\draw [thick] (21.center) to (22.center);
				\draw [->,thick] (25.center) to (26.center);
				\draw [->,thick] (23.center) to (24.center);
				\draw [dashed] (39.center) to (40.center);
				\draw (42.center) to (43.center);
				\draw (42.center) to (44.center);
				\draw (43.center) to (45.center);
				\draw (44.center) to (45.center);
				\draw [thick] (46.center) to (47.center);
				\draw [thick] (48.center) to (49.center);
				\draw (63.center) to (64.center);
				\draw (63.center) to (65.center);
				\draw (64.center) to (66.center);
				\draw (65.center) to (66.center);
				\draw [->,thick] (68.center) to (69.center);
				\draw [thick] (70.center) to (71.center);
				\draw [thick] (72.center) to (73.center);
				\draw [thick] (74.center) to (75.center);
				\draw [thick] (76.center) to (77.center);
				\draw (78.center) to (79.center);
				\draw (78.center) to (80.center);
				\draw (79.center) to (81.center);
				\draw (80.center) to (81.center);
				\draw (82.center) to (83.center);
				\draw (82.center) to (84.center);
				\draw (83.center) to (85.center);
				\draw (84.center) to (85.center);
				\draw (86.center) to (87.center);
				\draw (86.center) to (88.center);
				\draw (87.center) to (89.center);
				\draw (88.center) to (89.center);
				\draw [thick] (94.center) to (102.center);
				\draw [thick] (106.center) to (101.center);
				\draw [thick] (104.center) to (91.center);
				\draw [thick] (107.center) to (103.center);
				\draw [dashed] (108.center) to (109.center);
				\draw [dashed] (111.center) to (112.center);
			\end{tikzpicture}
			\caption{Asynchronous $S_3'$}
			\label{fig:egmot-3b}
		\end{subfigure}
		\caption{Proving the synchronous system $S_3$ is equivalent to asynchronous $S_3'$.}
		\label{fig:egmot-3}
	\end{figure}
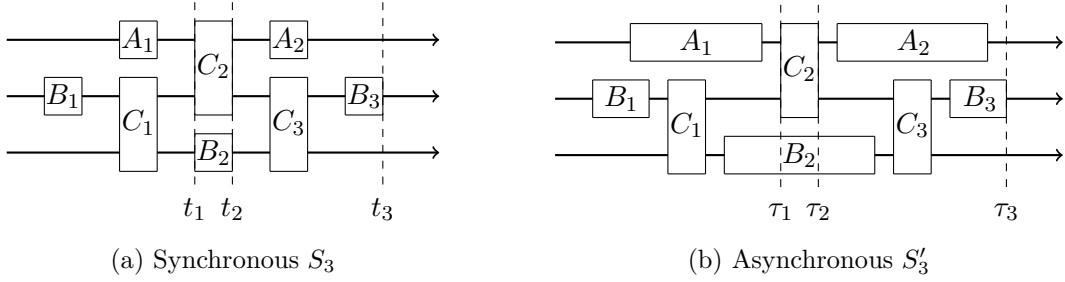
	Finally we examine an even  harder example, which  invalidates the reasoning in \Cref{eg:mot2}.
	In this example, $C_1,C_2$ and $C_3$
	are all EPR preparation unitary actions.
	In \Cref{fig:egmot-3a} is a synchronous system $S_3$,
	and in \Cref{fig:egmot-3b} is an asynchronous system $S_3'$.
	Note that our construction forces the entanglement to appear at some points in $S_3$ (and $S_3'$).
	In particular, one of the states $\ket{\psi\parens*{t_1}}$ and $\ket{\psi\parens*{t_2}}$ must be entangled
	because of $C_2$.
	A similar statement holds for $S_3'$.
	Since the structures of the systems are symmetric in time, w.l.o.g.,
	suppose $\ket{\psi\parens*{t_2}}$ is entangled.
	In this case, as $S_3$ (when restricted to the time region $[t_2,t_3]$)
	contains a similar structure to $S_1$,
	the issue in \Cref{eg:mot1} arises again.

	The circumventions in \Cref{eg:mot1,eg:mot2} no longer work.
	According to \Cref{rmk:eg12-insight},
	all we can try is to decompose $S_3$ and $S_3'$
	into tensor product or sequential composition of smaller systems whose equivalence are easier to prove,
	which is however impossible.
\end{example}

\subsection{Overview}
\label{sec:our_contributions}

\subsubsection{Identification of the challenges}
\label{sub:id-challenge}

From the examples in \Cref{sec:motivating_examples}, 
we observed the \textbf{first major challenge} caused by quantum entanglement.
It forces us to describe the state of a system as a (possibly entangled) whole (instead of a product),
which involves unknown implementation details of actions,
and consequently undermines our classical justification of the atomicity assumption.
Entanglement is actually a manifestation of the broader concept of quantum non-locality at the state level.
It is shown in~\cite{BDFMRSSW99} that quantum non-locality can also appear at the operation level:
there exist separable quantum operations that cannot be implemented by local actions.
Fortunately, the latter does not threat us 
because we are only concerned about the atomicity of local actions,
and we make no assumption about non-local actions 
(that may exhibit quantum non-locality at the operation level).

The \textbf{second major challenge} comes from the quantum measurement,
which is not illustrated in \Cref{sec:motivating_examples}.
Quantum measurement introduces probabilistic branching as another source of non-determinism.
A quantum process can go into any possible probabilistic branch,
and we have to consider the interaction between different branches from different processes.
This non-determinism is further complicated
by the notorious \textit{measurement problem}:
we do not know when the probabilistic branching actual occurs during a measurement!\footnote{
	Actually it is impossible to ask when the branching occurs,
	because the quantum measurement is not an instantaneous but 
	a continuous procedure, according to the decoherence theory~\cite{Schlosshauer05}.
}
Moreover, 
as most distributed systems of concern are non-terminating,
we need to properly define the probability of an event involving possibly infinitely many actions 
(e.g., for a quantum process repeatedly performing measurements, 
we may ask the probability of the outcome always being $1$).
Finally, we also need to handle the two sources of non-determinism ---
probabilistic branching and concurrency --- together harmoniously.

\subsubsection{Resolution of the challenges}

The first challenge caused by quantum entanglement 
is resolved by proper modeling of the distributed quantum systems 
and the corresponding dynamics.
The key insight is that we can actually describe the real-time state of a system at any time,
if we include all (effective) quantum degrees of freedom,
in particular, those implicit and uncontrollable ones in the quantum environments
introduced by measurements (see \Cref{rmk:quantum environments}).
This insight has not been exploited before in the literature of concurrency:
as mentioned in \Cref{rmk:con-Dijk-obser},
the classical state of a system at certain time is often undefined.

The second challenge caused by quantum measurement
is partly resolved by proper modeling of systems and the corresponding dynamics,
and partly resolved by techniques in proving the main theorem.
For the former,
we incorporate the non-determinism from probabilistic branching and concurrency in
a consistent model, 
and we carefully keep all definitions insensitive to the implementation details of actions.
As a result, our proofs do not depend on how quantum operations are actually performed,
thereby circumventing the measurement problem (see \Cref{rmk:circ-measurement-pr}).
For the latter,
to define and analyse the probabilities of events involving possibly infinitely many actions (due to non-termination),
we resort to the measure theoretical tools for probability.
(which are also used in previous works on probabilistic concurrency, e.g., \cite{Vardi85,SL94}).
We establish a natural connection between the state of a system and
the probabilities of classically observable events in the system,
via the system dynamics and observable dynamics discussed below.

\subsubsection{Non-atomic distributed quantum systems}
\label{sub:non-atomic-dis-q-sys}

As a basis for proving that local actions can be always regarded as atomic, we need a formal model in which non-atomic actions are allowed. So, our \textbf{first major contribution} is establishing 
such a model of non-atomic distributed quantum systems. Here, we informally introduce some basic ideas of the model. 
To develop this model, 
we first define actions, quantum processes and distributed quantum systems,
without assuming any atomicity.

\paragraph{Action}

An action is a set of bounded space-time events for performing a quantum operation.
It models a physical implementation (or execution) of an logical operation.
To an action one can assign several concerned properties.
Specifically, we are interested in the following properties of an action $a$.
\begin{enumerate}
	\item 
		$T[a]$: the time interval that $a$ spans.
	\item
		$q[a]$: the quantum register (i.e., a set of qubits) on which $a$ is supposed to perform.
	\item
		$\calE[a]$: the quantum operation that $a$ is supposed to perform,
		which is either a unitary quantum gate or a (partial) measurement.
	\item
		$e[a]$: the quantum environment (i.e., a set of quantum particles) introduced by $a$.
		If $\calE[a]$ is a unitary,
		then $e[a]=\emptyset$,
		because in this case $q[a]$ is effectively isolated from $e[a]$.
		If $\calE[a]$ is a (partial) measurement,
		then $e[a]$ includes the 
		quantum degrees of freedom in the measurement device.
\end{enumerate}
Note that in the above properties, 
$q[a]$ and $\calE[a]$ are specifications of the logical operation that $a$ performs,
while $T[a]$ and $e[a]$ are physical properties directly related to the physical reality that $a$ models.
Readers are referred to \Cref{sub:actions} for formal details.
Backgrounds of quantum states and quantum operations can be found in \Cref{sub:quantum_computing}.

\paragraph{Quantum process}

A \textit{quantum process} is a collection of countably many actions with a tree structure.
From an operational perspective, as the time progresses,
a quantum process repeatedly takes actions, one after another.
The tree structure comes from the probabilistic nature of quantum measurements:
performing a quantum measurement creates a probabilistic branching,
and in each branch is an action performing a partial measurement.
If we connect two successive actions with an directed edge $\rightarrow$,
then a quantum process can be represented by a rooted tree of actions,
with the following additional conditions:
\begin{enumerate}
	\item 
		(Sequentiality) For any two successive actions $a$ and $b$,
		$a$ appears before $b$ in time.
	\item
		(Branching) For a set of actions performing partial measurements
		from the same measurement,
		the quantum operations they perform should be consistent.
	\item
		(Finitely many actions in finite time)
		For any time $t$, there are finitely many actions that begin before $t$.
\end{enumerate}
Rigorous formulations of the above conditions can be found in \Cref{sub:quantum_process}.
Note that since a quantum process is a collection of actions,
it also models a physical implementation (execution).

\paragraph{Distributed quantum system}

A \textit{distributed quantum system} is a collection of multiple quantum processes.
We can also recursively define a distributed quantum system by the following rules:
\begin{enumerate}
	\item 
		Any single quantum process is a distributed quantum system.
	\item
		The parallel composition $A\parallel B$ of two distributed quantum systems $A$ and $B$
		with effectively separated quantum environments is also a distributed quantum system.
\end{enumerate}
Here, the quantum environment of a system $S$ 
is the union of quantum environment $e[a]$ introduced by every action $a$ in every quantum process in $S$.
The condition of effectively separated quantum environments
implies that quantum processes in a system can only communicate via their shared quantum registers 
(which are explicitly controllable objects),
but not via the quantum environments (which are uncontrollable objects).
Formal details of the above are presented in \Cref{sub:distributed_quantum_system}.

\paragraph{Local actions and atomicity}

After the notion of distributed quantum systems is defined,
we can describe what are local actions.
In a distributed quantum system $S$, an action $a$ is local,
if $a$ is space-time separated from actions in other processes in $S$;
that is, for any action $b$ in any other process in $S$,
we have either $q[a]\cap q[b]=\emptyset$ or $T[a]\cap T[b]=\emptyset$.
The formal definition of local actions is \Cref{def:local-act}.

We can also describe the notion of atomic actions in our model.
A subset $D$ of actions in a distributed quantum system are atomic,
if for any two actions $a$ and $b$ in $D$, either $a$ appears before $b$ in time,
or $b$ appears before $a$ (i.e., either $T[a]<T[b]$ or $T[b]<T[a]$).
The formal definition of atomicity is \Cref{def:atomicity}.

\paragraph{System dynamics}

Now we define the \textit{system dynamics} (see \Cref{def:sys-dyn}) of a distributed quantum system,
which characterises the real-time effect (semantics) of the system.
The real-time state of a system actually includes the explicit quantum states (of quantum registers and quantum environments)
and the implicit classical control states (internal to each quantum process, updating as the time progresses).
To handle them separately, we need the notions of partial processes and partial systems.

A quantum process $B$ is called a \textit{partial process} of another quantum process $A$,
if it consists of a rooted path to some action $b$ and the sub-tree rooted at $b$ in $A$.
Intuitively, $B$ is obtained from $A$ by knowing (i.e., fixing) the first several actions up to $b$,
among other probabilistic branches.
The knowledge of the first several actions corresponds to the implicit classical control state aforementioned.
Similarly, one can define a distributed quantum system to be a \textit{partial system} of another,
if each quantum process in the former is a partial process of a corresponding process in the latter.

To describe the real-time effect of a distributed quantum system $S$,
it suffices to define the real-time dynamics of the quantum state evolved according to any partial system of $S$.
Specifically, this is equivalent to define a function $\Bracks*{\cdot}\parens*{\cdot}$,
such that for any partial system $C$ and time $t$,
$\Bracks*{C}(t)$ is the quantum operation that maps the initial quantum state of the partial system $C$
to the quantum state at time $t$.
Note that due to the challenge caused by the measurement problem,
the real-time dynamics involves unknown physical details.
In particular, we do no even known when a measurement creates a probabilistic branching in real time.
Still, we can list the following conditions for the system dynamics $\Bracks*{\cdot}(\cdot)$ to satisfy:
\begin{enumerate}
	\item 
		(Initial condition) At time $0$, the system does nothing.
	\item
		(Branching) At any time after a branching in a process,
		the state evolved according to the partial system that contains all branches
		is a mix over states, each evolved according a partial system that contains exactly one branch.
	\item
		(Evolution) The evolution of the state (according to a partial system $C$)
		in a time interval $I$ is uniquely determined by all actions in $C$ that overlap with $I$,
		provided that the state corresponds to a single probabilistic branch at the beginning of $I$.

		Moreover, if $C$ is a parallel composition of two systems with no interaction in $I$,
		then the two systems evolve separately.
		Specifically, if a process in the first system takes a local action $a$ with time interval $I$,
		then the action correctly performs $\calE[a]$.
	\item
		(Trace) The trace\footnote{
			Here, the trace is the linear algebraic trace,
			which is widely used in quantum computing,
			but NOT the trace in the concurrency literature.
		}
		of the state evolved according to a partial system $C$ is non-increasing,
		and becomes a constant if $C$ is trace-preserving after some time.
\end{enumerate}

Formal details of these conditions are presented in \Cref{sub:system_dynamics}.
We make some further comments on the above conditions.
They are carefully kept insensitive to the implementation details of actions,
to circumvent the challenge caused by the measurement problem.
In particular, the condition (Branching) only specifies what happens after a branching (from quantum measurement),
but not during a branching (see also \Cref{rmk:circ-measurement-pr}).
The condition (Evolution) includes a formalisation of the Dijkstra-Lamport condition (see \Cref{sub:dijkstra_lamport_condition}),
which roughly says ``any local action is performed correctly''.
These conditions are natural and loose,
which can be satisfied by many possible functions $\Bracks*{\cdot}(\cdot)$.
We can arbitrarily pick one of them for analysis.

\paragraph{Observable dynamics}
The system dynamics describes the evolution of the quantum state according to partial systems.
However, what we are really concerned is the classically observable effect of a distributed quantum system,
because we as humans can only probe the quantum world via the quantum measurement.
Hence, from system dynamics, 
we further define the \textit{observable dynamics} (see \Cref{lmm:mu-is-pr-meas,def:equi-system}) of a distributed quantum system,
which characterises the probabilities of all classically observable events
in the system.
Observable dynamics can be thought of as the semantics of the system that one can observe.

Since the events in a distributed quantum system involve possibly infinitely many actions,
we need to use the measure theoretical tools (see \Cref{sub:probability_theory})
to define the probabilities of observable events.
Let $A$ be a quantum process, which has many probabilistic branches, 
As the time progresses, it only goes into one branch probabilistically,
which corresponds to a \textit{maximal path} in the tree structure of $A$.
Denote the set of all maximal paths in $A$ by $\omega(A)$.
Similarly, for a distributed quantum system $S$,
we can define $\omega(S)$ by taking the direct product of those sets of maximal paths in processes in $S$.

It turns out that the set $\braces*{\omega\parens*{C}:\textup{$C$ is a partial system of $S$}}$
generates (in the sense of $\sigma$-algebra; see \Cref{sub:probability_theory})
the set of observable events of concern.
Given an initial quantum state $\rho$, we can define a consistent probability measure $\mu_{\rho\rightarrow S}$ for observable events,
according to the system dynamics of $S$.
Specifically, the probability of a partial system $C$ is exactly the trace of the quantum state evolved according to $C$
in the limit $t\to +\infty$.
The function $\mu_{\cdot \rightarrow S}$ then completely captures the observable dynamics of $S$.

Consider two distributed quantum systems $S_1$ and $S_2$.
If they correspond to the same logical system;
that is, roughly speaking, they are the same when abstracting out the physical properties 
($T[a]$ and $e[a]$ for every action $a$ within),
then we can construct an isomorphism between $S_1$ and $S_2$.
Further, if their observable dynamics coincide, then they are defined to be \textit{equivalent}, denoted by $S_1\simeq S_2$.
That is, they are indistinguishable to any classical observer.
Formal details of the above are presented in \Cref{sub:observable_dynamics}.

\subsubsection{Local actions regarded atomic}

Based upon the established model,
our \textbf{second major contribution} is the following theorem,
serving as a rigorous basis for the atomicity of local actions (see \Cref{def:local-act}).

\begin{theorem}[Informal version of \Cref{thm:local-atom}]
	\label{thm:main-inform}
	For any (physically implementable)\footnote{
		See \Cref{thm:local-atom} for details.
	}
	distributed quantum system $S$,
	there is an equivalent system $S'\simeq S$,
	such that local actions in $S'$ are atomic.
\end{theorem}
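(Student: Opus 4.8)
The plan is to prove the theorem by transforming $S$ into $S'$ through a sequence of "atomization" steps, one for each pair of concurrent local actions that violate atomicity, while showing that each step preserves the observable dynamics $\mu_{\cdot\to S}$. First I would fix a "target ordering": since a local action $a$ is space-time separated from every action $b$ in another process (either $q[a]\cap q[b]=\emptyset$ or $T[a]\cap T[b]=\emptyset$), the relation "$a$ must precede $b$" among local actions is acyclic — if $T[a]\cap T[b]\neq\emptyset$ then $a,b$ act on disjoint registers, so we are free to declare an order between them. I would take any linear extension of the existing temporal order on local actions and, for each offending overlapping pair, construct $S'$ by shrinking and shifting the time intervals $T[\cdot]$ of the relevant actions (keeping $q[\cdot]$, $\calE[\cdot]$, $e[\cdot]$ fixed) so that the two intervals become disjoint and respect the chosen order. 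Crucially this reshuffling only touches the physical property $T[a]$, so $S$ and $S'$ correspond to the same logical system and there is a canonical isomorphism between them in the sense of \Cref{sub:observable_dynamics}; what remains is to show $\mu_{\rho\to S}=\mu_{\rho\to S'}$ under this isomorphism for every initial state $\rho$.

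The heart of the argument is the single-swap case: $S'$ differs from $S$ only in that two concurrent local actions $a$ (in process $P$) and $b$ (in process $Q$), with $q[a]\cap q[b]=\emptyset$, have been serialized so that $T'[a]<T'[b]$. Since $\omega(S)=\omega(S')$ and the generating family $\{\omega(C):C\text{ a partial system}\}$ is the same for both, by \Cref{lmm:mu-is-pr-meas} and a $\pi$-system/uniqueness-of-measure argument it suffices to check that $\mu_{\rho\to S}$ and $\mu_{\rho\to S'}$ agree on every $\omega(C)$, i.e. that $\lim_{t\to\infty}\operatorname{tr}\Bracks{C}_S(t)=\lim_{t\to\infty}\operatorname{tr}\Bracks{C}_{S'}(t)$ for every partial system $C$ (here the subscript indicates which system's dynamics is used). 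Because the trace stabilizes once $C$ is trace-preserving (condition (Trace)), and the only actions whose timing changed are $a,b$, I would pick a time $t^\star$ after $\max(T[a]\cup T[b]\cup T'[a]\cup T'[b])$ and compare $\Bracks{C}_S(t^\star)$ with $\Bracks{C}_{S'}(t^\star)$. Using condition (Evolution) I can factor both operators as (common tail)$\,\circ\,$(effect of the region containing $a,b$)$\,\circ\,$(common head); in the region containing $a$ and $b$, since $a$ and $b$ are local and act on disjoint registers, (Evolution) forces the region's effect to be $\calE[a]\otimes\calE[b]$ (tensored with identity on everything else) in $S$, and likewise $\calE[a]\otimes\calE[b]$ in $S'$ — the ordering is irrelevant because $\calE[a]\otimes\Id$ and $\Id\otimes\calE[b]$ commute on disjoint registers. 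Hence $\Bracks{C}_S(t^\star)=\Bracks{C}_{S'}(t^\star)$, and in particular the traces (and their limits) agree. Condition (Branching) is respected throughout because the tree structures of the processes are untouched; only interval data moves.

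The main obstacle I anticipate is making the "region decomposition" via (Evolution) fully rigorous in the presence of other actions — including measurements with their environments $e[\cdot]$ — that may overlap the time window of $a$ or $b$. Two subtleties must be handled: (i) an intervening non-local action, or a local action from a third process, could have its interval straddling the window, so I need to choose the perturbation of $T[a],T[b]$ small enough that no new overlaps are created and no existing overlap is destroyed except the $a$–$b$ one — this is where acyclicity of the serialization constraint and the (Finitely many actions in finite time) condition are used to guarantee such a perturbation exists. (ii) The factorization must be applied on the nose using the multiplicativity/locality content of (Evolution), which the paper states as "the evolution in $I$ is uniquely determined by all actions overlapping $I$" together with the separability clause; I expect the clean formulation is an induction on the finitely many actions whose intervals meet the window, peeling them off one at a time and invoking the local-action clause of (Evolution) for $a$ and $b$ themselves. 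Once the single-swap invariance is established, the full theorem follows by composing finitely-many swaps within any finite time horizon and taking a limit, using that each $\mu_{\rho\to(\cdot)}$ is determined by its values on the $\omega(C)$'s for partial systems $C$ of bounded depth.
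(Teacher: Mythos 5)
Your high-level strategy---perturb the time intervals of local actions to serialize them, show each perturbation preserves $\mu_{\rho\to S}$, and compose---is the same as the paper's, and your reduction of measure equality to checking $\lim_{t\to\infty}\tr(\Bracks*{C}(t)(\rho))$ on partial systems $C$ is exactly right. But there is a genuine gap at the heart of your single-swap step. You claim that in the \emph{original} system $S$, where $T[a]$ and $T[b]$ genuinely overlap, condition (Evolution) ``forces the region's effect to be $\calE[a]\otimes\calE[b]$''. It does not, at least not directly. The Dijkstra--Lamport clause pins down $\calF_A=\calE[a]$ only when the interval is exactly $T[a]$ and $a$ is the sole action of its process meeting that interval; applying it to $a$ over $I=T[a]$ yields $\calE[a]\otimes\calF_B$ where $\calF_B$ is an \emph{unknown partial execution} of $b$ (since $T[b]\not\subseteq T[a]$), while applying it to $b$ over $I=T[b]$ yields a second, differently-aligned factorization containing an unknown partial execution of $a$. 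Your ``peel off one action at a time'' induction composes evolutions over successive disjoint intervals, so it only works when the actions being peeled are temporally non-overlapping---which is precisely what fails for the pair $(a,b)$ you are trying to serialize. Reconciling the two overlapping factorizations into a single $\calE[a]\otimes\calE[b]\otimes(\cdots)$, with the $(\cdots)$ provably equal to its counterpart in $S'$, is a nontrivial extra lemma (a cancellation argument for tensor factorizations of superoperators) that your proposal does not supply; the commutativity of $\calE[a]\otimes\Id$ and $\Id\otimes\calE[b]$ concerns the logical operations and says nothing about the unknown physical partial evolutions during the overlap.

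The paper avoids this entirely by never moving two actions at once: \Cref{thm:local-ins} shrinks a \emph{single} local action $a$ to an instant $\{t_a\}\subseteq T[a]$ while every other action (including $b$) is left untouched, so that over every relevant subinterval $I$ the restriction $B\restriction_I$ of the rest of the system is literally identical in the two systems being compared, and the unknown $B$-side operations cancel against themselves; atomicity then follows in \Cref{thm:local-atom} simply by choosing the instants $t_m$ pairwise distinct across processes, which is possible because $|T[a]|>0$. You could repair your argument by adopting the same discipline---move one action at a time, shrinking only (never shifting outside $T[a]$, so no new overlaps are created and sequentiality within a process is preserved)---but at that point you are essentially reconstructing the paper's proof, and shrinking to a point is the cleanest way to guarantee that the one-at-a-time moves within any bounded time window resolve all overlaps before the limit is taken.
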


In other words, in a distributed quantum system,
one can safely assume the atomicity of local actions,
if only the observable dynamics is of concern.
From the definition of observable dynamics, 
it contains all information that a classical observer can extract from the system,
and is all an external programmer needs.
The statement in \Cref{thm:main-inform} does not generally hold if the observable dynamics is replaced by the system dynamics.
Formal details of the theorem are presented in \Cref{sec:atomicity_of_local_actions}.

It is also worth pointing out that even a system only consists of local actions,
the state of the system can still be entangled (see examples in \Cref{sec:motivating_examples}).

\paragraph{Application to motivating examples}

Indeed, \Cref{thm:main-inform} can be applied to solve the problem considered in \Cref{sec:motivating_examples}.
Consider adding additional single-qubit quantum measurement actions $D,E,F$ in the systems in \Cref{eg:mot3},
as shown in \Cref{fig:egmot-4}.
Using \Cref{thm:main-inform}, $S_4$ and $S_4'$,
containing only local actions, have the same observable dynamics,
because it is easy to verify their atomic versions are the same.
As $D,E,F$ can be arbitrarily chosen,
by some basic properties of quantum operations,
it immediately follows that $\ket{\psi\parens*{t}}=\ket{\psi'\parens*{\tau}}$, as desired.

\begin{figure}
	\centering
	\begin{subfigure}[b]{0.4\textwidth}
		\centering
		\begin{tikzpicture}
			\node (0) at (-4.25, 0.75) {};
			\node (1) at (-4.25, -0.5) {};
			\node (2) at (-3.75, 0.75) {};
			\node (4) at (-3.75, -0.5) {};
			\node (9) at (-4.25, 1.5) {};
			\node (10) at (-4.25, 1) {};
			\node (11) at (-3.75, 1.5) {};
			\node (12) at (-3.75, 1) {};
			\node (17) at (-3.5, 0.75) {};
			\node (18) at (-3.5, 0.25) {};
			\node (19) at (-3, 0.75) {};
			\node (20) at (-3, 0.25) {};
			\node (21) at (-3.75, 0.5) {};
			\node (22) at (-3.5, 0.5) {};
			\node (23) at (-3.75, 1.25) {};
			\node (25) at (-3, 0.5) {};
			\node (27) at (-4, 0.15) {$C_3$};
			\node (28) at (-4, 1.25) {$A_2$};
			\node (29) at (-3.25, 0.5) {$B_3$};
			\node (42) at (-5, 1.5) {};
			\node (43) at (-5, 0.25) {};
			\node (44) at (-4.5, 1.5) {};
			\node (45) at (-4.5, 0.25) {};
			\node (46) at (-5.25, 1.25) {};
			\node (47) at (-5, 1.25) {};
			\node (48) at (-5.25, 0.5) {};
			\node (49) at (-5, 0.5) {};
			\node (50) at (-4.75, 0.9) {$C_2$};
			\node (51) at (-5, 1.5) {};
			\node (57) at (-4.5, 1.5) {};
			\node (63) at (-5, 0) {};
			\node (64) at (-5, -0.5) {};
			\node (65) at (-4.5, 0) {};
			\node (66) at (-4.5, -0.5) {};
			\node (67) at (-4.75, -0.25) {$B_2$};
			\node (68) at (-3.75, -0.25) {};
			\node (70) at (-4.5, 1.25) {};
			\node (71) at (-4.25, 1.25) {};
			\node (72) at (-4.5, 0.5) {};
			\node (73) at (-4.25, 0.5) {};
			\node (74) at (-4.5, -0.25) {};
			\node (75) at (-4.25, -0.25) {};
			\node (76) at (-5.25, -0.25) {};
			\node (77) at (-5, -0.25) {};
			\node (78) at (-5.75, 0.75) {};
			\node (79) at (-5.75, -0.5) {};
			\node (80) at (-5.25, 0.75) {};
			\node (81) at (-5.25, -0.5) {};
			\node (82) at (-5.75, 1.5) {};
			\node (83) at (-5.75, 1) {};
			\node (84) at (-5.25, 1.5) {};
			\node (85) at (-5.25, 1) {};
			\node (86) at (-6.5, 0.75) {};
			\node (87) at (-6.5, 0.25) {};
			\node (88) at (-6, 0.75) {};
			\node (89) at (-6, 0.25) {};
			\node (91) at (-6.5, 0.5) {};
			\node (94) at (-6, 0.5) {};
			\node (96) at (-5.5, 0.15) {$C_1$};
			\node (97) at (-5.5, 1.25) {$A_1$};
			\node (98) at (-6.25, 0.5) {$B_1$};
			\node (101) at (-5.75, 1.25) {};
			\node (102) at (-5.75, 0.5) {};
			\node (103) at (-5.75, -0.25) {};
			\node (104) at (-6.75, 0.5) {};
			\node (106) at (-6.75, 1.25) {};
			\node (107) at (-6.75, -0.25) {};
			\node (111) at (-2.75, 1.75) {};
			\node (112) at (-2.75, -0.75) {};
			\node (113) at (-2.75, -1) {$t$};
			\node (114) at (-2.75, 0.75) {};
			\node (115) at (-2.75, 0.25) {};
			\node (116) at (-2.25, 0.75) {};
			\node (117) at (-2.25, 0.25) {};
			\node (118) at (-2.75, 0.5) {};
			\node (119) at (-2.25, 0.5) {};
			\node (120) at (-2.5, 0.5) {$E$};
			\node (121) at (-2.75, 1.5) {};
			\node (122) at (-2.75, 1) {};
			\node (123) at (-2.25, 1.5) {};
			\node (124) at (-2.25, 1) {};
			\node (125) at (-2.75, 1.25) {};
			\node (126) at (-2.25, 1.25) {};
			\node (127) at (-2.5, 1.25) {$D$};
			\node (128) at (-2.75, 0) {};
			\node (129) at (-2.75, -0.5) {};
			\node (130) at (-2.25, 0) {};
			\node (131) at (-2.25, -0.5) {};
			\node (132) at (-2.75, -0.25) {};
			\node (133) at (-2.25, -0.25) {};
			\node (134) at (-2.5, -0.25) {$F$};
			\node (135) at (-1.5, 1.25) {};
			\node (137) at (-1.5, 0.5) {};
			\node (138) at (-1.5, -0.25) {};
			\draw (0.center) to (1.center);
			\draw (0.center) to (2.center);
			\draw (1.center) to (4.center);
			\draw (2.center) to (4.center);
			\draw (9.center) to (10.center);
			\draw (9.center) to (11.center);
			\draw (10.center) to (12.center);
			\draw (11.center) to (12.center);
			\draw (17.center) to (18.center);
			\draw (17.center) to (19.center);
			\draw (18.center) to (20.center);
			\draw (19.center) to (20.center);
			\draw [thick] (21.center) to (22.center);
			\draw (42.center) to (43.center);
			\draw (42.center) to (44.center);
			\draw (43.center) to (45.center);
			\draw (44.center) to (45.center);
			\draw [thick] (46.center) to (47.center);
			\draw [thick] (48.center) to (49.center);
			\draw (63.center) to (64.center);
			\draw (63.center) to (65.center);
			\draw (64.center) to (66.center);
			\draw (65.center) to (66.center);
			\draw [thick] (70.center) to (71.center);
			\draw [thick] (72.center) to (73.center);
			\draw [thick] (74.center) to (75.center);
			\draw [thick] (76.center) to (77.center);
			\draw (78.center) to (79.center);
			\draw (78.center) to (80.center);
			\draw (79.center) to (81.center);
			\draw (80.center) to (81.center);
			\draw (82.center) to (83.center);
			\draw (82.center) to (84.center);
			\draw (83.center) to (85.center);
			\draw (84.center) to (85.center);
			\draw (86.center) to (87.center);
			\draw (86.center) to (88.center);
			\draw (87.center) to (89.center);
			\draw (88.center) to (89.center);
			\draw [thick] (94.center) to (102.center);
			\draw [thick] (106.center) to (101.center);
			\draw [thick] (104.center) to (91.center);
			\draw [thick] (107.center) to (103.center);
			\draw [dashed] (111.center) to (112.center);
			\draw (114.center) to (115.center);
			\draw (114.center) to (116.center);
			\draw (115.center) to (117.center);
			\draw (116.center) to (117.center);
			\draw [thick] (25.center) to (118.center);
			\draw (121.center) to (122.center);
			\draw (121.center) to (123.center);
			\draw (122.center) to (124.center);
			\draw (123.center) to (124.center);
			\draw (128.center) to (129.center);
			\draw (128.center) to (130.center);
			\draw (129.center) to (131.center);
			\draw (130.center) to (131.center);
			\draw [thick] (23.center) to (125.center);
			\draw [thick] (68.center) to (132.center);
			\draw [->,thick] (126.center) to (135.center);
			\draw [->,thick] (119.center) to (137.center);
			\draw [->,thick] (133.center) to (138.center);
		\end{tikzpicture}
		\caption{Synchronous $S_4$}
		\label{fig:egmot-4a}
	\end{subfigure}
	\begin{subfigure}[b]{0.4\textwidth}
		\centering
		\begin{tikzpicture}
			\node (0) at (-3.75, 0.75) {};
			\node (1) at (-3.75, -0.5) {};
			\node (2) at (-3.25, 0.75) {};
			\node (4) at (-3.25, -0.5) {};
			\node (9) at (-4.25, 1.5) {};
			\node (10) at (-4.25, 1) {};
			\node (11) at (-2.75, 1.5) {};
			\node (12) at (-2.75, 1) {};
			\node (17) at (-3, 0.75) {};
			\node (18) at (-3, 0.25) {};
			\node (19) at (-2.5, 0.75) {};
			\node (20) at (-2.5, 0.25) {};
			\node (21) at (-3.25, 0.5) {};
			\node (22) at (-3, 0.5) {};
			\node (23) at (-2.75, 1.25) {};
			\node (25) at (-2.5, 0.5) {};
			\node (27) at (-3.5, 0.15) {$C_3$};
			\node (28) at (-3.5, 1.25) {$A_2$};
			\node (29) at (-2.75, 0.5) {$B_3$};
			\node (42) at (-5, 1.5) {};
			\node (43) at (-5, 0.25) {};
			\node (44) at (-4.5, 1.5) {};
			\node (45) at (-4.5, 0.25) {};
			\node (46) at (-5.25, 1.25) {};
			\node (47) at (-5, 1.25) {};
			\node (48) at (-5.75, 0.5) {};
			\node (49) at (-5, 0.5) {};
			\node (50) at (-4.75, 0.9) {$C_2$};
			\node (51) at (-5, 1.5) {};
			\node (57) at (-4.5, 1.5) {};
			\node (63) at (-5.5, 0) {};
			\node (64) at (-5.5, -0.5) {};
			\node (65) at (-4, 0) {};
			\node (66) at (-4, -0.5) {};
			\node (67) at (-4.75, -0.25) {$B_2$};
			\node (68) at (-3.25, -0.25) {};
			\node (70) at (-4.5, 1.25) {};
			\node (71) at (-4.25, 1.25) {};
			\node (72) at (-4.5, 0.5) {};
			\node (73) at (-3.75, 0.5) {};
			\node (74) at (-4, -0.25) {};
			\node (75) at (-3.75, -0.25) {};
			\node (76) at (-5.75, -0.25) {};
			\node (77) at (-5.5, -0.25) {};
			\node (78) at (-6.25, 0.75) {};
			\node (79) at (-6.25, -0.5) {};
			\node (80) at (-5.75, 0.75) {};
			\node (81) at (-5.75, -0.5) {};
			\node (82) at (-6.75, 1.5) {};
			\node (83) at (-6.75, 1) {};
			\node (84) at (-5.25, 1.5) {};
			\node (85) at (-5.25, 1) {};
			\node (86) at (-7, 0.75) {};
			\node (87) at (-7, 0.25) {};
			\node (88) at (-6.5, 0.75) {};
			\node (89) at (-6.5, 0.25) {};
			\node (91) at (-7, 0.5) {};
			\node (94) at (-6.5, 0.5) {};
			\node (96) at (-6, 0.15) {$C_1$};
			\node (97) at (-6, 1.25) {$A_1$};
			\node (98) at (-6.75, 0.5) {$B_1$};
			\node (101) at (-6.75, 1.25) {};
			\node (102) at (-6.25, 0.5) {};
			\node (103) at (-6.25, -0.25) {};
			\node (104) at (-7.5, 0.5) {};
			\node (106) at (-7.5, 1.25) {};
			\node (107) at (-7.5, -0.25) {};
			\node (111) at (-2.25, 1.75) {};
			\node (112) at (-2.25, -0.75) {};
			\node (113) at (-2.25, -1) {$\tau$};
			\node (114) at (-2.25, 0.75) {};
			\node (115) at (-2.25, 0.25) {};
			\node (116) at (-1.75, 0.75) {};
			\node (117) at (-1.75, 0.25) {};
			\node (118) at (-2.25, 0.5) {};
			\node (119) at (-1.75, 0.5) {};
			\node (120) at (-2, 0.5) {$E$};
			\node (121) at (-2.25, 0) {};
			\node (122) at (-2.25, -0.5) {};
			\node (123) at (-1.75, 0) {};
			\node (124) at (-1.75, -0.5) {};
			\node (125) at (-2.25, -0.25) {};
			\node (126) at (-1.75, -0.25) {};
			\node (127) at (-2, -0.25) {$F$};
			\node (128) at (-2.25, 1.5) {};
			\node (129) at (-2.25, 1) {};
			\node (130) at (-1.75, 1.5) {};
			\node (131) at (-1.75, 1) {};
			\node (132) at (-2.25, 1.25) {};
			\node (133) at (-1.75, 1.25) {};
			\node (134) at (-2, 1.25) {$D$};
			\node (135) at (-1, 1.25) {};
			\node (136) at (-1, 0.5) {};
			\node (137) at (-1, -0.25) {};
			\draw (0.center) to (1.center);
			\draw (0.center) to (2.center);
			\draw (1.center) to (4.center);
			\draw (2.center) to (4.center);
			\draw (9.center) to (10.center);
			\draw (9.center) to (11.center);
			\draw (10.center) to (12.center);
			\draw (11.center) to (12.center);
			\draw (17.center) to (18.center);
			\draw (17.center) to (19.center);
			\draw (18.center) to (20.center);
			\draw (19.center) to (20.center);
			\draw [thick] (21.center) to (22.center);
			\draw (42.center) to (43.center);
			\draw (42.center) to (44.center);
			\draw (43.center) to (45.center);
			\draw (44.center) to (45.center);
			\draw [thick] (46.center) to (47.center);
			\draw [thick] (48.center) to (49.center);
			\draw (63.center) to (64.center);
			\draw (63.center) to (65.center);
			\draw (64.center) to (66.center);
			\draw (65.center) to (66.center);
			\draw [thick] (70.center) to (71.center);
			\draw [thick] (72.center) to (73.center);
			\draw [thick] (74.center) to (75.center);
			\draw [thick] (76.center) to (77.center);
			\draw (78.center) to (79.center);
			\draw (78.center) to (80.center);
			\draw (79.center) to (81.center);
			\draw (80.center) to (81.center);
			\draw (82.center) to (83.center);
			\draw (82.center) to (84.center);
			\draw (83.center) to (85.center);
			\draw (84.center) to (85.center);
			\draw (86.center) to (87.center);
			\draw (86.center) to (88.center);
			\draw (87.center) to (89.center);
			\draw (88.center) to (89.center);
			\draw [thick] (94.center) to (102.center);
			\draw [thick] (106.center) to (101.center);
			\draw [thick] (104.center) to (91.center);
			\draw [thick] (107.center) to (103.center);
			\draw [dashed] (111.center) to (112.center);
			\draw (114.center) to (115.center);
			\draw (114.center) to (116.center);
			\draw (115.center) to (117.center);
			\draw (116.center) to (117.center);
			\draw [thick] (25.center) to (118.center);
			\draw (121.center) to (122.center);
			\draw (121.center) to (123.center);
			\draw (122.center) to (124.center);
			\draw (123.center) to (124.center);
			\draw (128.center) to (129.center);
			\draw (128.center) to (130.center);
			\draw (129.center) to (131.center);
			\draw (130.center) to (131.center);
			\draw [thick] (23.center) to (132.center);
			\draw [thick] (68.center) to (125.center);
			\draw [->,thick] (133.center) to (135.center);
			\draw [->,thick] (119.center) to (136.center);
			\draw [->,thick] (126.center) to (137.center);
		\end{tikzpicture}
		\caption{Asynchronous $S_4'$}
		\label{fig:egmot-4b}
	\end{subfigure}
	\caption{Appending systems in \Cref{fig:egmot-3} with extra single-qubit quantum measurement actions.}
	\label{fig:egmot-4}
\end{figure}
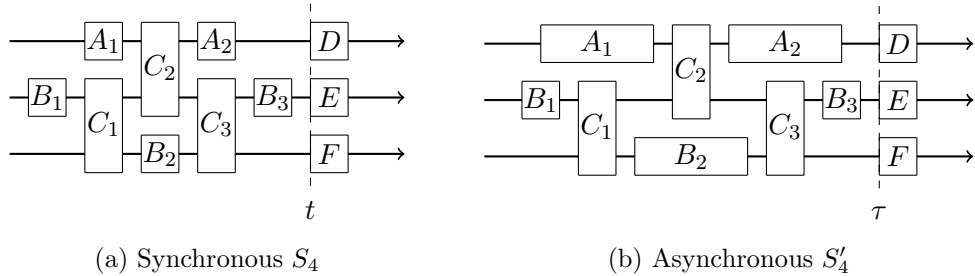

\subsection{Related Works}
\label{sub:related_works}

The most related classical works are by Lamport on non-atomic systems.
He proposed two ways of reasoning about non-atomic systems.
The first is behaviour reasoning, starting from \cite{Lamport74},
and finally developed into the two-arrow model~\cite{Lamport79,Lamport86a,Lamport86b,Lamport86c,Lamport86d}.
The two-arrow model is concerned about two temporal relations (partial orders) of actions: precedence and causality.\footnote{
	The terminologies in~\cite{Lamport86a} are different from this paper.
	In particular, action and system in this paper correspond to 
	operation execution and system execution in~\cite{Lamport86a}, respectively.
}
By a series of axioms, 
he proved the correctness of solutions to the mutual exclusion problem~\cite{Lamport79,Lamport86a,Lamport86b}
and the constructions of atomic registers~\cite{Lamport86c,Lamport86d}.
The second is assertional reasoning, 
which is based on the more elementary notions of states and actions,\footnote{
	Note that temporal relations are abstracted from the states,
	and omit certain information.
	For example, two non-atomic systems may have the same set of temporal relations
	but different sets of states.
}
and more formal than behaviour reasoning.
Using assertional reasoning, 
two unrevealed assumptions in previous correctness proofs~\cite{Lamport74,Lamport77,Lamport79}
of the bakery algorithm~\cite{Lamport74} are discovered~\cite{Lamport90b}.
Later works on atomicity further considered other desirable properties in concurrency.
For example, Herlihy and Wing considered linearizability~\cite{HW90} by taking into account the real-time orders in addition to the atomicity,
which has been later widely studied (e.g., \cite{Neiger94,CRR18}).

In comparison, our reasoning in \Cref{sec:a_model_of_distributed_quantum_system,sec:atomicity_of_local_actions}
is also based on states (and NOT on partial orders),
but we are concerned about real-time physical states instead of discrete abstract states (see the comparison in \Cref{rmk:con-Dijk-obser}).
The former are more basic in quantum computing,
because they directly involve quantum entanglement,
and quantum evolution is naturally continuous-time.

\subsection{Discussion}
\label{sec:discussion}

In this paper we point out the importance and non-triviality of justifying the atomicity assumption in distributed quantum systems,
in particular, via a series of motivating examples.
We identify the challenges caused by quantum entanglement and measurement,
which are then resolved by proper modeling of the system
and non-trivial techniques in the proofs.
Specifically, we establish a formal model of non-atomic distributed quantum systems,
upon which we prove that local actions can be regarded as if they were atomic
up to the observable dynamics of the system.
This provides a rigorous guarantee for assuming the atomicity of local actions.

This paper is just one of the first steps to a theory of concurrency in quantum computing.
We list several questions for future works as follows.
\begin{enumerate}
	\item 
		\label{stp:open-q-1}
		We have rigorously justified the atomicity of local actions in distributed quantum computing.
		How about the atomicity of non-local actions?
		As mentioned in~\Cref{sec:introduction},
		in practice, it can be guaranteed at the classical level.
		Can this be done by pure quantum software (like in the classical case~\cite{Lamport86c,Lamport86d})?
		This problem is of both philosophical and physical interests.
		Although quantum computing generalises classical computing,
		an action in classical computing (e.g., read/write)
		is \textit{not} a \textit{single} action in quantum computing (i.e., unitary and measurement).
		Indeed, a classical action consists of multiple quantum actions,
		which implies classical methods~\cite{Lamport86c,Lamport86d}
		do not directly carry over to the quantum case.
	\item
		The mutual exclusion problem is closely related to the atomicity,
		and of great importance in concurrency.
		Can this problem be solved by pure quantum software?
		Due to similar reasons for Question~\ref{stp:open-q-1},
		classical results~\cite{Lamport86a,Lamport86b} do not directly carry over to the quantum case.
	\item
		In this paper, systems considered have classical control flow.
        That is, implementing a quantum process implicitly requires a classical internal state
        updating as the process progresses in time.
        The separation of this internal state and the quantum state is captured by \Cref{def:partial-proc}.
        Quantum control flows are also considered in the literature~\cite{Ying16}.
        What is the concurrency with quantum control flow?
		It seems to allow a quantum superposition of different temporal orders between actions.
        We reserve for it the name ``quantum concurrency'',
        and leave it for future investigation.
	\item
        Verifying sequential quantum systems is of great practical interest (e.g.\ recently, \cite{CCLLTY23,BLS23}).
        There are also attempts to verify high-level concurrent/distributed quantum systems (e.g., \cite{FLY22,YZLF22}).
        How to verify properties of low-level distributed quantum systems (like the one in this paper)?
\end{enumerate}

\section{Structure of the Paper}
\label{sec:structure_of_the_paper}

The remainder of this paper is structured as follows.
In \Cref{sec:backgrounds}, the preliminaries of quantum computing and probability theory are presented.
In \Cref{sec:a_model_of_distributed_quantum_system}, we develop our model of non-atomic distributed quantum systems
step by step.
We first define actions in \Cref{sub:actions},
quantum processes in \Cref{sub:quantum_process},
and distributed quantum systems in \Cref{sub:distributed_quantum_system}.
Based on these concepts,
we define the system dynamics of a distributed quantum system in \Cref{sub:system_dynamics}.
Further, in \Cref{sub:observable_dynamics}, we define the observable dynamics of a distributed quantum system,
induced by the system dynamics.
Finally, in \Cref{sec:atomicity_of_local_actions},
we prove our main theorems that justify the atomicity assumption of local actions.
For readability, all formal proofs are deferred to \Cref{sec:details_of_proofs},
and proof sketches are provided for the theorems in \Cref{sec:atomicity_of_local_actions}.

\section{Preliminaries}
\label{sec:backgrounds}

\subsection{Quantum Computing}
\label{sub:quantum_computing}

In this section we briefly introduce the basic notions and notations in quantum computing.
For a more detailed introduction, the readers are referred to the textbook~\cite{NC10}.

\paragraph{Quantum states}
The state space of a closed quantum physical system is a Hilbert space,
in which a quantum state can be represented by a unit complex vector.
We use Dirac's notation $\ket{\psi},\ket{\phi},\ldots$ to denote vectors.
The inner product of two vectors $\ket{\psi}$ and $\ket{\phi}$ is denoted by $\braket{\psi|\phi}$.
By definition any quantum state $\ket{\psi}$ has norm $\norm*{\psi}=\sqrt{\braket{\psi|\psi}}=1$.
The Hilbert space can be discrete or continuous, 
depending on the physical system of concern.
Qubits are elementary controllable objects in quantum computing.
A qubit has a discrete Hilbert space $\Co^2$ of dimension $2$.
The Hilbert space of a composite quantum physical system is the tensor product of those of the components,
e.g., the Hilbert space of $n$ qubits is $\parens*{\Co^2}^{\otimes n}\simeq\Co^{2^n}$.
For a discrete Hilbert space $\calH$ of finite dimension $d$,
one can pick an orthonormal basis $\braces*{\ket{j}}_{j=0}^{d-1}$ with $\ket{j}\in \calH$,
and choose it to be the computational basis.
Then, any quantum state $\ket{\psi}\in \calH$ can be written as $\ket{\psi}=\sum_{j} \alpha_j \ket{j}$
with $\sum_{j}\abs*{\alpha_j}^2=1$, a superposition of all these computational basis states.
Quantum superposition leads to the phenomenon of quantum entanglement.
If a quantum state $\ket{\psi}$ can be written as a product $\ket{\psi_1}\otimes \ket{\psi_2}$ 
(abbreviated as $\ket{\psi_1}\ket{\psi_2}$),
then it is a product state;
otherwise it is an entangled state.

The above states in a Hilbert space are pure states.
When probability is introduced (e.g., by quantum measurements),
we need to consider mixed quantum states.
A positive semi-definite operator $\rho$ with trace in $[0,1]$ is called a density operator;
i.e., for any $\ket{x}\in \calH$, $\bra{x}\rho\ket{x}\geq 0$ and $\tr\parens*{\rho}\in [0,1]$.
A mixed quantum state is represented by a density operator $\rho$ with $\tr\parens*{\rho}=1$.
The unit trace represents full knowledge of the mixed quantum state.
Sometimes we will slightly abuse the terminology by calling 
a density operator (without the unit trace condition) a quantum state.
Any pure quantum state $\ket{\psi}$ has a corresponding density operator $\ket{\psi}\!\bra{\psi}$.
Any mixed quantum state $\rho$ can also be represented as an ensemble of pure quantum states,
via the spectral decomposition $\rho=\sum_{j} p_j\ket{\psi_j}\!\bra{\psi_j}$,
of which an interpretation is with probability $p_j>0$ it is in state $\ket{\psi_j}$.
Note that when $\tr\parens*{\rho}=1$, the sum of probabilities $\sum_{j}p_j=1$.
For a Hilbert space $\calH$,
we use $\calD\parens*{\calH}$ to denote the set of density operators on $\calH$.

An example of density operator is as follows.
\begin{equation*}
	\rho=\frac{1}{4}\ket{0}\!\bra{0}+\frac{1}{2}\ket{+}\!\bra{+}
	=
	\begin{bmatrix}
		0.5&0.25\\
		0.25&0.25
	\end{bmatrix},
\end{equation*}
where $\ket{+}=\frac{1}{\sqrt{2}}\parens*{\ket{0}+\ket{1}}$.

\paragraph{Quantum operations}
In quantum computing, we can perform two types of quantum operations.
The first is quantum gate, which can be represented by a unitary operator $U$ with $UU^\dagger =U^\dagger U=\Id$.
After applying a quantum gate $U$ on a pure state $\ket{\psi}$,
one will obtain the state $U\ket{\psi}$.
We give several examples as follows.
An $X$ gate is a single-qubit gate such that $X\ket{0}=\ket{1}$ and $X\ket{1}=\ket{0}$.
A Hadamard gate $H$ is also a single qubit gate such that $H\ket{0}=\frac{1}{\sqrt{2}}\parens*{\ket{0}+\ket{1}}$
and $H\ket{1}=\frac{1}{\sqrt{2}}\parens*{\ket{0}-\ket{1}}$.
A \textit{CNOT} gate is a two-qubit gate, represented by $\mathit{CNOT}=\ket{0}\!\bra{0}\otimes \Id+\ket{1}\!\bra{1}\otimes X$.
The EPR preparation unitary mentioned in \Cref{sec:motivating_examples} is a two-qubit gate,
represented by $\parens*{H\otimes \Id}\mathit{CNOT}$.

The second is quantum measurement,
which can extract classical information from the quantum state.
A measurement can be represented by a set of operators $\braces*{M_m}_m$
such that $\sum_{m}M_m^\dagger M_m=\Id$.
After applying the measurement to a pure state $\ket{\psi}$,
one will obtain the classical outcome $m$,
and the corresponding state $M_m\ket{\psi}/\norm*{M_m\ket{\psi}}$
with probability $\norm*{M_m\ket{\psi}}^2$.
An example of measurement is in the computational basis,
with $M_m=\ket{m}\!\bra{m}$.

Quantum gates and quantum measurements can be described in the unified framework of quantum operations on density operators.
A general quantum operation $\calE$ on Hilbert space $\calH$ 
is a completely positive non-trace-increasing map on $\calD\parens*{\calH}$.
Equivalently, $\calE$ can be represented by a set of Kraus operators $\braces*{E_k}_k$
such that $0\sqsubseteq \sum_{k}E_k^\dagger E_k\sqsubseteq\Id$,
where $\sqsubseteq$ is the Loewner order defined by $A\sqsubseteq B$ iff $B-A$ is positive semi-definite.
After applying $\calE$ to a mixed state $\rho$,
one will obtain the state $\calE\parens*{\rho}=\sum_{k}E_k\rho E_k^\dagger$.
It is easy to see the Kraus representation of a quantum gate $U$ is simply $\braces*{U}$,
and of a measurement $\braces*{M_m}_m$ is simply $\braces*{M_m}_m$.
Note that these two examples are actually trace-preserving quantum operations,
i.e., $\tr\parens*{\calE\parens*{\rho}}=\tr\parens*{\rho}$.
In general, quantum operations need not be trace-preserving.
For example, from a measurement $\braces*{M_m}_m$,
we can select a partial measurement $\braces*{M_{m_0},M_{m_1}}$,
corresponding to two certain outcomes $m_0$ and $m_1$.
Then, $\braces*{M_{m_0},M_{m_1}}$ is also a quantum operation.
We use $\QO\parens*{\calH}$ to denote the set of quantum operations on $\calH$.

We adopt the following notations for convenience.
In context without ambiguity,
for a quantum operation $\calE\in \QO\parens*{\calH}$,
we use the same notation $\calE$ to represent its extension $\calE\otimes \Id\in \QO\parens*{\calH\otimes \calH'}$
to a larger Hilbert space $\calH\otimes\calH'$,
where $\Id\in \QO\parens*{\calH'}$ is the identity operator.
In this case, $\calE\in \QO\parens*{\calH}$ only indicates that $\calE$ acts non-trivially on $\calH$.
As an example of this convention,
consider two quantum operations $\calE\in \QO\parens*{\calH_1\otimes\calH_2}$
and $\calF\in \QO\parens*{\calH_1\otimes\calH_3}$,
then $\calE+\calF=\calE\otimes \Id_3+\calF\otimes\Id_2\in \QO\parens*{\calH_1\otimes\calH_2\otimes\calH_3}$,
where $\Id_2\in \QO\parens*{\calH_2}$ and $\Id_3\in \QO\parens*{\calH_3}$
are identity operators on $\calH_2$ and $\calH_3$, respectively.

\subsection{Probability Theory}
\label{sub:probability_theory}

In this section we briefly introduce the basic notions and notations in probability theory.
For a more detailed introduction, the readers are referred to the textbook~\cite{Klenke13}.
We need the measure theory for probability, 
because the state space of events considered in this paper is as large as $\R$.
In particular, we need to deal with non-terminating distributed quantum system,
in which an observable event can be about countably many classical outcomes from quantum measurements.
For example, suppose that a quantum system repeatedly performs some quantum gates followed by measurements on a qubit,
then we can talk about the probability of the event that the measurement outcome is always $1$.
More generally, we can talk about the probabilities of any safety (i.e., something bad never happens)
or liveness (i.e., something good eventually happens) properties of the system,
which are of great concern in the design and analysis of systems.

In the following we model the space of elementary events,
the set of observable events, and finally a probability measure.
Let $\Omega$ be a set of elementary events.
Consider a class of subsets $\calA\subset 2^{\Omega}$.
We first define semiring and $\sigma$-algebra.

\begin{definition}[Semiring]
	\label{def:semi-ring}
	A class of sets $\calA\subset 2^{\Omega}$ is a semiring if $\calA$ satisfies the following properties:
	\begin{enumerate}
		\item 
			$\emptyset\in \calA$.
		\item
			For any $X,Y\in \calA$, 
			the difference $X-Y$ is a finite disjoint union of sets in $\calA$.
		\item
			For any $X,Y\in \calA$,
			$X\cap Y\in \calA$.
	\end{enumerate}
\end{definition}

\begin{definition}[$\sigma$-algebra]
	A class of sets $\calA\subset 2^{\Omega}$ is a $\sigma$-algebra if $\calA$ satisfies the following properties:
	\begin{enumerate}
		\item 
			$\Omega\in \calA$.
		\item
			For any $X\in \calA$, $X^{\complement}\in \calA$.
		\item
			For any $\braces*{X_k}_{k\in \N}$ with $X_k\in \calA$,
			the countable union $\bigcup_{k\in \N} X_k\in\calA$.
	\end{enumerate}
\end{definition}

A $\sigma$-algebra satisfies the natural properties of a set of observable events,
on which one can define a probability measure consistently.
Consider the following properties of set functions.

\begin{definition}
	Let $\calA\subset 2^{\Omega}$
	and $\mu:\calA\rightarrow \bracks*{0,1}$ be a set function.
	We say that
	\begin{itemize}
		\item 
			$\mu$ is additive if 
			$\mu\parens*{X}=\sum_{k=1}^K \mu\parens*{X_k}$
			for any $X\in \calA$ and finitely many $X_1, X_2,\ldots,X_K\in \calA$ with $X=\biguplus_{k=1}^K X_k$ and $K\in \N$.
		\item 
			$\mu$ is $\sigma$-additive if 
			$\mu\parens*{X}=\sum_{k\in \N} \mu\parens*{X_k}$
			for any $X\in \calA$ and countably many $X_1, X_2,\ldots\in \calA$ with $X=\biguplus_{k\in \N} X_k$.
		\item
			$\mu$ is $\sigma$-subadditive if
			$\mu\parens*{X}\leq \sum_{k\in \N} \mu\parens*{X_k}$
			for any $X\in \calA$ and countably many $X_1,X_2,\ldots \in \calA$
			with $X\subset \bigcup_{k\in \N} X_k$.
	\end{itemize}
\end{definition}

Then, a probability measure is defined as follows.

\begin{definition}[Probability measure]
	\label{def:pr-meas}
	Let $\calA\subseteq 2^{\Omega}$ be a $\sigma$-algebra,
	and $\mu:\calA\rightarrow \bracks*{0,1}$ be a set function.
	We say $\mu$ is a probability measure on $\calA$ if
	$\mu$ is $\sigma$-additive, $\mu\parens*{\emptyset}=0$ and $\mu\parens*{\Omega}=1$.
\end{definition}

An important lemma in the probability theory is the measure extension lemma,
enabling one to extend a properly defined set function on a semiring $\calA$
to the $\sigma$-algebra $\sigma\parens*{\calA}$ generated by $\calA$, as follows.

\begin{lemma}[Carath{\'e}odory's measure extension, special case of Theorem 1.53 in~\cite{Klenke13}]
	\label{lmm:cara ext}
	Let $\calA\subset 2^\Omega$ be a semiring with $\Omega\in \calA$,
	and $\sigma\parens*{\calA}$ be the $\sigma$-algebra generated by $\calA$.
	Let $\mu:\calA\rightarrow \bracks*{0,1}$ be an additive, $\sigma$-subadditive function on $\calA$
	with $\mu(\emptyset)=0$ and $\mu\parens*{\Omega}=1$,
	then $\mu$ has a unique extension to a probability measure $\tilde{\mu}:\sigma(\calA)\rightarrow \bracks*{0,1}$, and
	\begin{equation*}
		\tilde{\mu}(X)=\inf\braces*{\sum_{k\in \N} \mu(X_k):X\subset \bigcup_{k\in \N} X_k\wedge \forall k, X_k\in \calA}.
	\end{equation*}
\end{lemma}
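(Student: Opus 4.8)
This is the classical Carath\'eodory extension theorem, so the plan is to follow the standard outer-measure route, taking care that the semiring axioms of \Cref{def:semi-ring} (rather than the stronger ring axioms) suffice at each step. First I would define an outer measure $\mu^*\colon 2^{\Omega}\to[0,1]$ by exactly the formula displayed in the statement, $\mu^*(X)=\inf\{\sum_{k\in\N}\mu(X_k):X\subseteq\bigcup_{k\in\N}X_k\wedge\forall k,\ X_k\in\calA\}$; this is well defined since $\Omega\in\calA$ always provides a cover, and $\mu^*(\Omega)\le\mu(\Omega)=1$. Then I would verify the three outer-measure properties: $\mu^*(\emptyset)=0$ and monotonicity are immediate, and countable subadditivity follows from the usual trick of choosing, for each member of a countable family, a cover whose total mass exceeds $\mu^*$ of that member by at most $\varepsilon2^{-k}$, and merging these covers.

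Next I would show $\mu^*$ agrees with $\mu$ on $\calA$. The inequality $\mu^*(X)\le\mu(X)$ for $X\in\calA$ is witnessed by the one-element cover $\{X\}$. For the reverse, given any cover $X\subseteq\bigcup_k X_k$ with $X_k\in\calA$, write $X=\bigcup_k(X\cap X_k)$; each $X\cap X_k\in\calA$ because $\calA$ is closed under intersection, and $\mu(X\cap X_k)\le\mu(X_k)$ because $X_k\setminus(X\cap X_k)$ is a finite disjoint union of sets of $\calA$ and $\mu$ is additive. Applying $\sigma$-subadditivity of $\mu$ to the cover $X=\bigcup_k(X\cap X_k)$ gives $\mu(X)\le\sum_k\mu(X\cap X_k)\le\sum_k\mu(X_k)$, and taking the infimum over covers yields $\mu(X)\le\mu^*(X)$. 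This step is where the hypotheses on $\mu$ (additivity and $\sigma$-subadditivity) together with the semiring structure are genuinely used.

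I would then invoke the Carath\'eodory measurability machinery: let $\calM$ be the class of $E\subseteq\Omega$ such that $\mu^*(A)=\mu^*(A\cap E)+\mu^*(A\setminus E)$ for all $A\subseteq\Omega$; the Carath\'eodory lemma says $\calM$ is a $\sigma$-algebra on which $\mu^*$ is a measure. It then remains to check $\calA\subseteq\calM$: for $E\in\calA$, $A\subseteq\Omega$ and a cover $A\subseteq\bigcup_k A_k$ with $A_k\in\calA$, one has $A\cap E\subseteq\bigcup_k(A_k\cap E)$ with $A_k\cap E\in\calA$, while $A\setminus E\subseteq\bigcup_k(A_k\setminus E)$ and each $A_k\setminus E$ is a finite disjoint union $\biguplus_j B_{k,j}$ of sets of $\calA$; additivity of $\mu$ gives $\mu(A_k\cap E)+\sum_j\mu(B_{k,j})=\mu(A_k)$, so $\mu^*(A\cap E)+\mu^*(A\setminus E)\le\sum_k\mu(A_k)$, and taking the infimum over covers gives the nontrivial inequality (the other direction is subadditivity). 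Hence $\sigma(\calA)\subseteq\calM$, and $\tilde{\mu}:=\mu^*|_{\sigma(\calA)}$ is $\sigma$-additive with $\tilde{\mu}(\emptyset)=0$ and $\tilde{\mu}(\Omega)=1$, i.e.\ a probability measure (\Cref{def:pr-meas}) extending $\mu$, whose explicit form is nothing but the definition of $\mu^*$ restricted to $\sigma(\calA)$. For uniqueness, note that $\calA$ is closed under finite intersections, hence a $\pi$-system generating $\sigma(\calA)$, and $\tilde{\mu}(\Omega)=1<\infty$; thus for any probability measure $\nu$ on $\sigma(\calA)$ that restricts to $\mu$ on $\calA$, the collection $\{E:\nu(E)=\tilde{\mu}(E)\}$ is a $\lambda$-system containing $\calA$, and therefore equals $\sigma(\calA)$ by the $\pi$-$\lambda$ theorem, so $\nu=\tilde{\mu}$.

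The main obstacle is the Carath\'eodory lemma used in the third paragraph --- that $\calM$ is a $\sigma$-algebra and $\mu^*$ restricted to it is countably additive. Proving this requires the delicate bookkeeping of first establishing closure under complements and finite intersections, and then passing from finite to countable unions via a telescoping argument that must preserve the splitting identity $\mu^*(A)=\mu^*(A\cap E)+\mu^*(A\setminus E)$ in the limit. The only semiring-specific care needed throughout is that every set difference $X\setminus Y$ with $X,Y\in\calA$ must be treated as a finite disjoint union of $\calA$-sets before additivity of $\mu$ can be applied to recombine the pieces; this is always possible by \Cref{def:semi-ring} and never obstructs the argument.
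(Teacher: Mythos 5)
Your proposal is correct, and it is essentially the proof of the result the paper imports: the lemma is stated as a special case of Theorem 1.53 in Klenke and is not proved in the paper itself, and your outer-measure construction, verification that $\mu^*$ extends $\mu$ and that $\calA$ lies in the Carath\'eodory $\sigma$-algebra, followed by the $\pi$-$\lambda$ uniqueness argument, is exactly the standard textbook route. The only point worth flagging is that your agreement step $\mu(X)\le\mu^*(X)$ uses the paper's stated $\sigma$-subadditivity hypothesis on $\mu$ precisely as intended, so no additional assumption is smuggled in.
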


\section{A Model of Distributed Quantum System}
\label{sec:a_model_of_distributed_quantum_system}

In this section we present a formal model of non-atomic distributed quantum system.
An informal introduction of this model was already presented in \Cref{sub:non-atomic-dis-q-sys}.
We start with some notation conventions.
Readers are referred to \Cref{sub:quantum_computing} for basic notions and notations in quantum computing.
Let $\Qbit$ be a countable set of qubits,
which are controllable objects in quantum computing.
A qubit $x\in \Qbit$ has a Hilbert space $\calH_{x}=\Co^{2}$.
A set of qubits forms a quantum register.
A quantum register $q\subset \Qbit$ has a Hilbert space $\calH_{q}=\bigotimes_{x\in q}\calH_x$.
Let $\Qpar$ be a set of quantum particles,
which are introduced by performing actual quantum operations on qubits
(in particular, the quantum measurements),
and contain degrees of freedom not fully controllable.
A quantum particle $x\in \Qpar$ has a Hilbert space $\calH_x$
of either discrete or continuous dimension depending on $x$.
A set of quantum particles forms an environment.
A quantum environment $e\subset \Qpar$ has a Hilbert space $\calH_{e}=\bigotimes_{x\in e}\calH_x$.
Let $\mathbb{I}\parens*{\R_{\geq 0}}:=\braces*{[x,y]:x\leq y\in \R_{\geq 0}}$
be the set of closed intervals on $\R_{\geq 0}$.
For a Hilbert space $\calH$, let $\calD\parens*{\calH}$ be the set of density operators on $\calH$ and
$\QO\parens*{\calH}$ be the set of quantum operations on $\calH$.

\subsection{Action}
\label{sub:actions}

In quantum computing, an action is a set of bounded space-time events for performing a quantum operation.\footnote{
			How about actions that perform classical operations?
			It is well known that classical computation can be simulated by quantum computation with little overhead.
			For simplicity, we only consider actions that perform quantum operations.
		}
It models a physical implementation or execution of an logical operation,
and therefore includes both the information of the high-level operation and the low-level implementation.
We are concerned about the following properties of an action $a$:
\begin{enumerate}
	\item 
		$T[a]\in \mathbb{I}(\R_{\geq 0})$ represents the time interval that $a$ spans.
	\item
		$q[a]\subset \Qbit$ represents the quantum register on which $a$ is supposed to perform.
	\item
		$\calE[a]\in \QO\parens*{\calH_{q[a]}}$ represents the (logical) quantum operation that $a$ is supposed to perform,
		which is either a unitary quantum gate or a (partial) measurement.
	\item
		$e[a]\subset \Qpar$ represents the quantum environment introduced by $a$.
		If $\calE[a]$ is a unitary,
		then $e[a]=\emptyset$,
		because in this case $q[a]$ is effectively isolated from $e[a]$.
		If $\calE[a]$ is a (partial) measurement,
		then $e[a]$ includes the (possibly infinitely dimensional)
		quantum degrees of freedom in the measurement device.
\end{enumerate}

In the above properties, 
$q[a]$ and $\calE[a]$ together specify the logical operation that $a$ is supposed to perform.
Meanwhile,
$T[a]$ and $e[a]$ are physical properties,
which depends on how $a$ is actually implemented.

\begin{remark}
	\label{rmk:quantum environments}
	Why are quantum environments of concern?
	Recall that to resolve the challenge caused by quantum entanglement (see \Cref{sec:our_contributions}),
	we need to describe the real-time state of a distributed quantum system at any time.
	So we have to include all effective quantum degrees of freedom,
	including those uncontrollable ones in the quantum environments.
\end{remark}

We adopt the following terminologies and notations:
\begin{itemize}
	\item 
		Let $\calH_{a}:=\calH_{q[a]}\otimes \calH_{e[a]}$.
		Then, the space-time region occupied by $a$ is $\calH_{a}\times T[a]$.
	\item
		We use $\Act$ to denote the set of actions.
	\item
		For any countable set $A\subset \Act$ and a time region $X\subset \R_{\geq 0}$,
		let $A\restriction_X:=\braces*{a\in A:T[a]\cap X\neq \emptyset}$.
	\item
		For any countable set $A\subseteq \Act$, let $q[A]:=\bigcup_{a\in A}q[a]$.
		We use the same convention for $e[\cdot]$
		and let $\calH_A:=\calH_{q[A]}\otimes \calH_{e[A]}$.
\end{itemize}

\subsection{Quantum Process}
\label{sub:quantum_process}

A quantum process is a collection of countably many actions with a tree structure.
Operationally, a quantum process repeatedly takes actions as the time progresses.
We can connect two successive actions taken by the quantum process by a relation $\rightarrow$.
If the quantum process does not terminate,
then countably many actions are involved.
The tree structure is a consequence of the probabilistic branchings created by quantum measurements:
a quantum measurement produces finitely many probabilistic branches,
each of which corresponds to an action performing a partial measurement.
The formal definition is as follows.

\begin{definition}[Quantum process]
	\label{def:quantum-proc}
	A quantum process (abbreviated as process) is a tuple $(A, \rightarrow)$ with countable $A\subseteq \Act$
	and a relation $\rightarrow$ on $A$ such that:
	\begin{defenum}
		\item
			\label{den:proc-rt-tree}
			(Rooted tree)
			$(A, \rightarrow)$ is a rooted tree with vertices in $A$ and edges in $\rightarrow$.
		\item 
			\label{den:proc-seq}
			(Sequentiality)
			$\forall a,b\in A, a\rightarrow b\Rightarrow T[a]<T[b]$.\footnote{
				In this paper, we denote $X<Y$ if $\forall t_x\in X,t_y\in Y, t_x<t_y$ for $X,Y\subseteq \R_{\geq 0}$.
			}
		\item
			\label{den:proc-br}
			(Branching)
			$\forall a,b,c\in A, a\rightarrow b\wedge a\rightarrow c\Rightarrow q[b]=q[c]$.
			Moreover, $\forall a\in A,\sum_{b:a\rightarrow b} \calE[b]$ is a quantum operation.
		\item
			\label{den:proc-fin}
			(Finitely many actions in finite time)
			$\forall t\in \R_{\geq 0}, A\restriction_{[0,t]}$ is finite.
	\end{defenum}
\end{definition}

The above definition of quantum process is conceptually the simplest we can think of:
it does not presume any structured computational model.

\begin{remark}
	We explain the conditions in \Cref{def:quantum-proc} as follows:
	\Cref{den:proc-rt-tree} says that the next possible action taken by a quantum process
	depends on all previous actions.
	\Cref{den:proc-seq} says that a quantum process is sequential.
	\Cref{den:proc-br} says that all possible next actions taken by a quantum process
	are consistently from the same quantum operation;
	in particular, the partial measurement actions in a branching of the tree
	should correspond to different classical outcomes from the same quantum measurement.
	\Cref{den:proc-fin} says that the number of actions that begins before any time $t$ is finite,
	which is similar to Axiom A5 in~\cite{Lamport86c}.
\end{remark}

We adopt the following terminologies and notations:
\begin{itemize}
	\item 
		In context without ambiguity, we simply use $A$ to denote $(A,\rightarrow)$.
	\item
		We use $\Proc$ to denote the set of quantum processes.
	\item
		We say $A$ is \textit{trace-preserving}, if in \Cref{den:proc-br} the quantum operation is trace-preserving.
	\item
		We say $A$ is \textit{aligned}, if in addition to \Cref{den:proc-br}, we have
		$a\rightarrow b\wedge a\rightarrow c\Rightarrow \min T\bracks*{b}=\min T\bracks*{c}\wedge e\bracks*{b}=e\bracks*{c}$.
	\item
		We use $\rt(A)$ to denote the root of the tree $A$.
	\item
		Let $\rightarrow^k$ be the $k^{\textup{th}}$ composition of $\rightarrow$.
		Let $\rightarrow^*$ and $\rightarrow^+$ be the Kleene star and Kleene plus of $\rightarrow$;
		i.e., $a\rightarrow^* b$ if $\exists k\geq 0, a\rightarrow^k b$;
		and $a\rightarrow^+ b$ if $\exists k>0, a\rightarrow^k b$.
\end{itemize}

We further define the concept of partial quantum process, 
which can be thought of as a restriction of the original process,
conditioned on knowing the first finitely many actions.

\begin{definition}[Partial quantum process]
	\label{def:partial-proc}
	For $(A,\rightarrow)\in \Proc$,
	we say that $(B,\rightarrow)\in \Proc$ is a partial (quantum) process of $A$, 
	if $\exists b\in B$ such that $B$ consists of a rooted path to $b$ and the sub-tree rooted at $b$;
	i.e.,
	$B=\braces*{a\in A:\rt(A)\rightarrow^*a \rightarrow^* b}\cup \braces*{a\in A:b\rightarrow^+ a}$.
	In this case, we denote $B=A/b$.
\end{definition}

Intuitively, the rooted path to $b$
selects a prefix of probabilistic branches of concern.
From an operational view, each time a quantum process runs into a branching,
it actually goes into one of the branches;
and when talking about $A/b$, 
we know the first finitely many actions up to $b$ are taken.
Note that it is possible that $B=A/a=A/b$ for $a\neq b$.
We adopt the following terminologies and notations:
\begin{itemize}
	\item 
		For convenience, we additionally define $A$ to be a partial process of itself.
		We use $A/*$ to represent the set of all partial processes of $A$.
	\item
		For a quantum process $A\in \Proc$ and a subset $B\subseteq A$,
		we say that $B$ \textit{has no branching} if $(B,\rightarrow^*)$ is a totally ordered set.
		It is easy to see $\forall a\in A$, 
		$\parens*{A/a}\restriction_{\bracks*{0,\max T\bracks*{a}}}$ has no branching.
		It is also obvious that any $C\subseteq B$ has no branching if $B$ has no branching.
	\item
		For a trace-preserving quantum process $A\in \Proc$,
		we say that a partial process $B\in A/*$ is \textit{trace-preserving after time} $t$,
		if $B=A/a$ for some $a$ such that 
		$t\geq \max T\bracks*{a}$ or for any $b\in B\restriction_{[t,\max T\bracks*{a}]}$, $\calE\bracks*{b}$ is trace-preserving.
		Intuitively, after time $t$, $B$ includes all future branches.
		A special case is that $A$ itself is trace-preserving after time $0$.
\end{itemize}

\subsection{Distributed Quantum System}
\label{sub:distributed_quantum_system}

A (non-atomic) distributed quantum system is a collection of finitely many quantum processes
with non-overlapping quantum environments, recursively defined as follows.

\begin{definition}[Distributed quantum system]
	A (non-atomic) distributed quantum system (abbreviated as system) is defined by the following rules:
	\begin{enumerate}
		\item 
			Any quantum process $A$ is a distributed quantum system.
		\item
			For two distributed quantum systems $A$ and $B$ with $e[A]\cap e[B]=\emptyset$,
			their parallel composition $C$, denoted by $C=A\parallel B$,
			is also a distributed quantum system.
			We identify $A\parallel B$ and $B\parallel A$ as the same system,
			and recursively define $e[C]=e[A]\cup e[B]$.
	\end{enumerate}
\end{definition}

\begin{remark}
	The condition $e[A]\cap e[B]=\emptyset$ says the quantum environments of the two systems $A$ and $B$ are effectively separated,
	and the only interaction between them is 
	via manipulating the shared quantum register $q[A]\cap q[B]$.\footnote{
		In distributed systems, interprocess communications typically include message passing.
		As is observed in~\cite{Lamport86a,Lamport86c},
		message passing actually can be modeled by a shared memory.
		From a physical perspective, 
		message passing is also performing quantum operations on the transmission media.
		For simplicity, we do not consider message passing.
	}
	In particular, when $A$ is a quantum process, 
	this condition implies that the measurement devices used by $A$ are effectively separated from
	those by $B$.
	Similar assumption also appears in the classical paper~\cite{Lamport90b},
	where two operations from different processes have disjoint sets of private variables.
	An implication of this condition is $A\cap B=\emptyset$.
\end{remark}

We adopt the following notations and naturally extend some definitions for quantum processes to distributed quantum systems:
\begin{itemize}
	\item 
		We use $\Sys$ to denote the set of distributed quantum systems.
		For convenience, we include $\emptyset \in \Sys$,
		and define $C\parallel \emptyset:=C$ for any $C\in \Sys$.
	\item
		For an action $a\in \Act$ and a system $C=A\parallel B$,
		we define $a\in C$ if $a\in A$ or $a\in B$;
		in other words, as sets $C=A\uplus B$.
	\item
		We say $C=A\parallel B$ is trace-preserving (resp.\ aligned), if $A$ and $B$ are both trace-preserving (resp.\ aligned).
	\item
		We say $C$ is a partial quantum system of another $C'$, denoted by $C\in C'/*$,
		if $C=A\parallel B$, $C'=A'\parallel B'$ and $A\in A'/*$ and $B\in B'/*$.
	\item
		For $D\subseteq C$,
		we say $D$ has no branching, if for any $A\in \Proc, B\in \Sys$ with $C=A\parallel B$,
		$A\cap D$ has no branching.
	\item
		We say $C=A\parallel B$ is trace-preserving after time $t$,
		if $A$ and $B$ are trace-preserving after time $t$.
\end{itemize}

\begin{figure}
	\centering
	\begin{tikzpicture}
		\node (0) at (-6.25, -5.75) {};
		\node (1) at (-5.25, -5.75) {};
		\node (2) at (-4.5, -5.5) {};
		\node (3) at (-4, -6) {};
		\node (4) at (-3.5, -5.5) {};
		\node (5) at (-2.5, -5) {};
		\node (6) at (-2.25, -5.5) {};
		\node (7) at (-0.75, -5) {};
		\node (8) at (0.75, -4.75) {};
		\node (9) at (0.25, -5.25) {};
		\node (10) at (-2.25, -6) {};
		\node (11) at (-1.5, -5.75) {};
		\node (12) at (-0.75, -6) {};
		\node (13) at (-1.75, -6.5) {};
		\node (14) at (0.5, -6.5) {};
		\node (15) at (-0.5, -5.75) {};
		\node (16) at (-1.75, -5) {};
		\node (17) at (0.5, -5.75) {};
		\node (18) at (1.5, -5.75) {};
		\node (19) at (2.5, -5.5) {};
		\node (20) at (3, -6.25) {};
		\node (21) at (-5.75, -7.5) {};
		\node (22) at (-4.75, -7.5) {};
		\node (23) at (-3.75, -7) {};
		\node (24) at (-2.25, -7.5) {};
		\node (25) at (-4.25, -8) {};
		\node (26) at (-3.5, -8) {};
		\node (27) at (-2.75, -8) {};
		\node (28) at (-2.5, -7) {};
		\node (29) at (0, -7) {};
		\node (30) at (1.5, -6.75) {};
		\node (31) at (1.5, -7.25) {};
		\node (32) at (-1.75, -8) {};
		\node (33) at (-2.25, -8.5) {};
		\node (34) at (-0.75, -7.75) {};
		\node (35) at (-0.5, -8) {};
		\node (36) at (0.75, -7.75) {};
		\node (37) at (2.5, -7.75) {};
		\node (38) at (-1.25, -8.5) {};
		\node (39) at (1.25, -8.25) {};
		\node (40) at (0.75, -8.75) {};
		\node (41) at (-6.5, -4.5) {};
		\node (42) at (-6.5, -9) {};
		\node (43) at (4.75, -9) {};
		\node (44) at (-7, -5.75) {$A$};
		\node (45) at (-7, -7.5) {$B$};
		\node (46) at (-1.9, -5.67) {$a$};
		\node (47) at (-3.9, -8.2) {$b$};
		\node (48) at (-4.35, -4.5) {};
		\node (49) at (-4.35, -9) {};
		\node (50) at (-0.9, -4.5) {};
		\node (51) at (-0.9, -9) {};
		\node (52) at (-4.35, -9.25) {$t_1$};
		\node (53) at (-0.9, -9.25) {$t_2$};
		\node (55) at (4, -5.75) {$\ldots$};
		\node (56) at (4, -7.75) {$\ldots$};
		\draw [ultra thick] (0.center) to (1.center);
		\draw [thick] (1.center) to (2.center);
		\draw [ultra thick] (1.center) to (3.center);
		\draw [thick] (2.center) to (4.center);
		\draw [thick] (4.center) to (5.center);
		\draw [thick] (4.center) to (6.center);
		\draw [thick] (7.center) to (8.center);
		\draw [thick] (7.center) to (9.center);
		\draw [ultra thick] (3.center) to (10.center);
		\draw [ultra thick] (10.center) to (11.center);
		\draw [thick] (10.center) to (12.center);
		\draw [thick] (10.center) to (13.center);
		\draw [thick] (13.center) to (14.center);
		\draw [dotted,ultra thick] (11.center) to (15.center);
		\draw [dotted,thick] (5.center) to (16.center);
		\draw [thick] (16.center) to (7.center);
		\draw [ultra thick] (15.center) to (17.center);
		\draw [dotted,ultra thick] (17.center) to (18.center);
		\draw [ultra thick] (18.center) to (19.center);
		\draw [ultra thick] (18.center) to (20.center);
		\draw [ultra thick] (21.center) to (22.center);
		\draw [thick] (22.center) to (23.center);
		\draw [thick] (22.center) to (24.center);
		\draw [ultra thick] (22.center) to (25.center);
		\draw [ultra thick] (25.center) to (26.center);
		\draw [dotted,ultra thick] (26.center) to (27.center);
		\draw [dotted,thick] (23.center) to (28.center);
		\draw [thick] (28.center) to (29.center);
		\draw [thick] (29.center) to (30.center);
		\draw [thick] (29.center) to (31.center);
		\draw [ultra thick] (27.center) to (32.center);
		\draw [ultra thick] (27.center) to (33.center);
		\draw [ultra thick] (32.center) to (34.center);
		\draw [ultra thick] (32.center) to (35.center);
		\draw [dotted,ultra thick] (34.center) to (36.center);
		\draw [ultra thick] (36.center) to (37.center);
		\draw [ultra thick] (33.center) to (38.center);
		\draw [ultra thick] (38.center) to (39.center);
		\draw [ultra thick] (38.center) to (40.center);
		\draw (41.center) to (42.center);
		\draw [->] (42.center) to (43.center);
		\draw [dashed] (48.center) to (49.center);
		\draw [dashed] (50.center) to (51.center);
	\end{tikzpicture}
	\caption{
		An example of distributed quantum system $A\parallel B$ of two processes $A$ and $B$:
		Each process has a tree structure. Every segment corresponds to an action,
		whose projection onto the time axis is the time interval of this action.
		Every dotted segment represents that there is no action during the time interval.
		Every branching corresponds to a set of measurement actions.
		Two partial processes $A/a$ and $B/b$ are highlighted in bold.
		For illustration, $\parens*{A/a}\restriction_{[0,t_2]}$ has no branching,
		and $B/b$ is trace-preserving after time $t_1$.
	}
	\label{fig:sys-eg}
\end{figure}
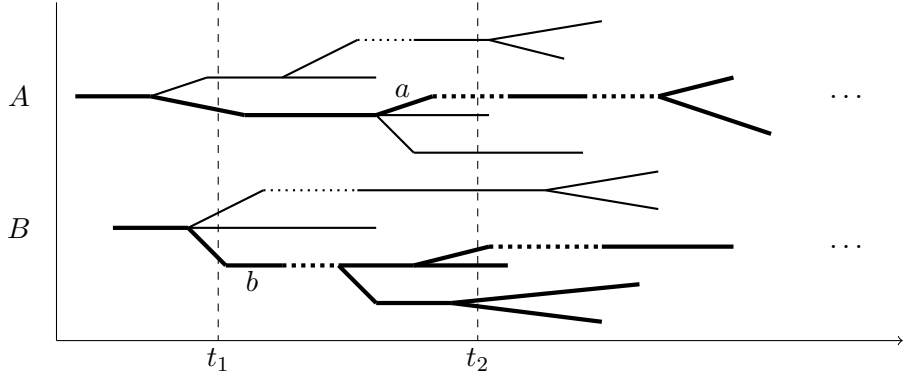

In \Cref{fig:sys-eg}, an example of a distributed quantum system of two processes is presented.

In a system, we can identify those local actions as follows.

\begin{definition}[Local actions]
	\label{def:local-act}
	Given a distributed quantum system $C=A\parallel B$
	for some process $A\in \Proc$ and system $B\in \Sys$,
	an action $a\in A$ is said to be local,
	if $\forall b\in B, q[a]\cap q[b]=\emptyset\vee T[a]\cap T[b]=\emptyset$.
	Intuitively, $a$ is space-time separated from
	actions in other processes.
\end{definition}

\subsection{Dijkstra-Lamport Condition}
\label{sub:dijkstra_lamport_condition}

Before proceeding,
let us briefly revisit the Dijkstra-Lamport condition.
How to correspond what an action is supposed to do to what it actually performs?
A natural condition is: ``\textit{any local action is performed correctly.}''
This condition seems to be firstly mentioned by Dijkstra implicitly in \cite{EWD123}
and \cite{EWD198}.
In particular, in \cite{EWD123} he pointed out ``the nature of a single sequential process,
performing its sequence of actions autonomously,
i.e., independent of its surroundings'';
and in \cite{EWD198} he observed ``the effect of actions is only defined by and describable in a projection of a subspace\ldots each process is related to its own subspace.''
Later, Lamport first explicitly used this condition in reasoning about non-atomic read and write actions:
a read that does not overlap any write must obtain the correct value~\cite{Lamport79,Lamport86a}.
We therefore call it Dijkstra-Lamport condition.
Though being simple,
it is an important condition for us to define the system dynamics of distributed quantum systems.

\subsection{System Dynamics}
\label{sub:system_dynamics}

In this section we describe the system dynamics of a trace-preserving distributed quantum system $S$,
which characterises the real-time effect (semantics) of the system;
that is, for any partial system $C\in S/*$ and time $t\in \R_{\geq 0}$,
we define the quantum operation that maps the initial state at time $0$ to the state at time $t$,
evolved according to $C$.\footnote{
	Why is the map a quantum operation?
	The subtlety is that we take into account all (effective) quantum degrees of freedom,
	including those uncontrollable ones in the quantum environments (in particular, in the quantum measurement devices).
	In this case, the state at a time contains all quantum information of concern,
	and the evolution of state can be represented by a quantum operation 
	(it can be actually a unitary; for example, see~\cite{Schlosshauer05}).
}

For any action $a\in C$, suppose that $T[a]=[x,y]$.
Note that $e[a]$ can be effectively decoupled with $q[a]$ before time $x$ and after time $y$.
W.l.o.g., we can assume that before $x$ and after $y$,
the state of $e[a]$ is a special pure state $\ket{\bot}$,
representing $e[a]$ is not being used.
For a quantum operation $\calF\in \QO\parens*{\calH_{C}}$
and initial quantum state $\rho\in \calD\parens*{\calH_{q[C]}}$,
we denote $\calF\parens*{\rho}=\calF\parens*{\rho\otimes \ket{\bot}\!\bra{\bot}}$
for $\ket{\bot}\in \calH_{e[C]}$.

\begin{definition}[System dynamics]
	\label{def:sys-dyn}
	Given a trace-preserving distributed quantum system $S\in \Sys$,
	the system dynamics of $S$ is a function $\Bracks*{\cdot}\parens*{\cdot}$
	such that for any partial system $C\in S/*$ and time $t\in \R_{\geq 0}$,
	$\Bracks*{C}\parens*{t}\in \QO\parens*{\calH_{C}}$ is a quantum operation satisfying
	the following conditions:
	\begin{defenum}
		\item
			(Initial condition) 
			\label{den:sys-dyn-init-cnd}
			$\Bracks*{C}(0)=\Id$.
		\item
			(Branching)
			\label{den:sys-dyn-br}
			If $C=A/a\parallel B$ for some $A\in \Proc, B\in \Sys, a\in A$
			and $\max_{b:a\rightarrow b}\max T\bracks*{b}\leq t$,\footnote{
				We use the convention that on $\R_{\geq 0}$,
				$\min \emptyset := +\infty$ and $\max \emptyset :=0$.
			}
			then
			\begin{equation}
				\Bracks*{C}(t)=\sum_{b:a\rightarrow b}\Bracks*{A/b\parallel B}(t).\label{eq:sys-dyn-br}
			\end{equation}
		\item
			(Evolution)
			\label{den:sys-dyn-evo}
			For any time interval $I=[x,y]$, 
			if $C\restriction_{[0,x)}$ has no branching, then
			\begin{equation}
				\Bracks*{C}(y)=\calF\circ\Bracks*{C}(x-)\footnote{
					For any function $f(x)$, we use $f(x-):=\lim_{x'\rightarrow x-}f(x')$
					to denote the left limit of $f$ at $x$.
				}
				\label{eq:sys-dyn-evo}
			\end{equation}
			for some quantum operation $\calF\in \QO\parens*{\calH_{C}}$ uniquely determined by $C\restriction_I$.

			Moreover, suppose that $C=A\parallel B$ for some $A,B\in \Sys$.
			If $q\bracks*{A\restriction_I}\cap q\bracks*{B\restriction_I}=\emptyset$,
			then $\calF=\calF_A\otimes \calF_B$ for some quantum operations $\calF_A$ and $\calF_B$
			uniquely determined by $A\restriction_I$ and $B\restriction_I$, respectively.
			In particular, $\calF_A=\calE\bracks*{a}$ if $A\restriction_I=\braces*{a}$ for some local $a$ and $I=T\bracks*{a}$;
			and $\calF_A=\Id$ if $A\restriction_I=\emptyset$.

		\item
			(Trace)
			\label{den:sys-dyn-tr}
			For any initial quantum state $\rho\in \calD\parens*{\calH_{q[C]}}$,
			$\tr\parens*{\Bracks*{C}(t)(\rho)}$ is non-increasing with respect to $t$.
			Moreover, if $C$ is trace-preserving after time $t_0$,
			then for any $t>t_0$,
			$\tr\parens*{\Bracks*{C}\parens*{t_0}\parens*{\rho}}=\tr\parens*{\Bracks*{C}\parens*{t}\parens*{\rho}}$.
	\end{defenum}
\end{definition}

\begin{remark}
	We explain the conditions in \Cref{def:sys-dyn} as follows:
	\Cref{den:sys-dyn-init-cnd} says at time $0$ the system does nothing.
	\Cref{den:sys-dyn-br} says at any time after a branching in a process,
	the state evolved according to $C$ (containing all branches)
	is a mix over states, each evolved according to a partial system containing exactly one branch.

	\Cref{den:sys-dyn-evo} is the most important condition.
	The first part says the evolution of the state according to $C$ in a time interval $I$
	is uniquely determined by all actions in $C$ that overlap with $I$, 
	provided that the state 
	at the beginning of $I$ (i.e., time $x$) corresponds to a single branch.
	The second part further says 
	if $C$ is a parallel composition of two systems $A$ and $B$ with no interaction in $I$,
	then the two systems evolve separately.
	In particular,
	it contains a formalisation of the Dijkstra-Lamport condition (see \Cref{sub:dijkstra_lamport_condition}):
	if a process in the first system $A$ takes a local action $a$ with time interval $I$,
	then the action correctly performs what it is supposed to do ($\calE[a]$).
	It is also natural that if a process has no action in $I$,
	its does nothing.

	\Cref{den:sys-dyn-tr} says the trace of the state evolved according to $C$
	is non-increasing, and becomes constant after time $t_0$,
	if $C$ is trace-preserving after time $t_0$.
\end{remark}

The conditions in \Cref{def:sys-dyn} are the simplest we can think of.
There are many possible functions $\Bracks*{\cdot}\parens*{\cdot}$
satisfying \Cref{def:sys-dyn}. 
In the remainder of this paper, when we talk about the system dynamics,
we arbitrarily pick one that satisfies \Cref{def:sys-dyn}.

\begin{remark}
	\label{rmk:sys-dyn-evo-open-int}
	\Cref{den:sys-dyn-evo} actually also implies the case $I=[x,y)$
	by taking a limit of $y$,
	in which \Cref{eq:sys-dyn-evo} still holds except that $y$ is replaced by $y-$.
	Similar statements hold for the cases $I=(x,y]$ and $I=(x,y)$
	by replacing $C\restriction_{[0,x)}$ with $C\restriction_{[0,x]}$ and $x-$ with $x$.
\end{remark}

\begin{remark}
	\label{rmk:circ-measurement-pr}
	How is the quantum measurement problem circumvented?
	Note that in \Cref{den:sys-dyn-br},
	\Cref{eq:sys-dyn-br} only holds at time $t$ 
	after all measurement actions (in a branching) finish.
	It does not depend on any details of how they are implemented,
	and we do not specify when the branching occurs.
\end{remark}

\begin{remark}
	Since any partial quantum system $C$ is trace-preserving after some sufficiently large time $t_C$,
	according to \Cref{den:sys-dyn-tr},
	we have $\tr\parens*{\Bracks*{C}\parens*{t_C}\parens*{\rho}}=\lim_{t\to\infty}\tr\parens*{\Bracks*{C}\parens*{t}\parens*{\rho}}$.
\end{remark}


\subsection{Observable Dynamics}
\label{sub:observable_dynamics}

In this section we define the observable dynamics of a distributed quantum system $S$,
which based on the system dynamics,
characterises the probabilities of all classically observable events 
in the systems;
that is, for any partial system $C\in S/*$ and any state $\rho$,
we define the probability that $C$ is classically observed, starting with initial state $\rho$.
This can be thought of as the semantics of the system that we can observe.
Readers are referred to \Cref{sub:probability_theory} for basic notions and notations in probability theory.

We first identify the elementary observable events in a system,
each corresponding to a maximal path, defined as follows.

\begin{definition}[Maximal path]
	For a tree $(A,\rightarrow)$,
	a subset $B\subseteq A$ is called a path if 
	$B=\braces*{a_1,a_2,\ldots}$ and $a_j\rightarrow a_{j+1}$ for all $j$.
	A path $B\subseteq A$ is maximal if there does not exist another path $C\subseteq A$
	such that $B\subsetneq C$.
	It is easy to see that every maximal path is rooted.
\end{definition}

For a quantum process $A\in \Proc$,
let $\omega(A)$ denote the set of maximal paths in the tree $A$.
For a distributed quantum system $C=A\parallel B$,
let $\omega\parens*{C}:=\omega\parens*{A}\times \omega\parens*{B}$.\footnote{
	Here, note that the Cartesian product $\times$ implicitly assumes an order between $A$ and $B$.
	In defining $\omega\parens*{C}$, 
	we assume a specific order is chosen and used consistently throughout the paper.
}

For a distributed quantum system $S\in \Sys$,
let $\calS:=\braces*{\omega\parens*{C}:C\in S/*}\cup \braces*{\emptyset}\subseteq \calP\parens*{\omega\parens*{S}}$\footnote{
    In this paper, for a set $X$, we use $\calP\parens*{X}$ to denote the power set of $X$.
}
be the class of sets of maximal paths in partial systems of $S$ (together with an empty set).
The following lemma shows that $\calS$ is a semiring.
For readability, the proof is deferred to \Cref{sub:proof_of_lmm_calS_semiring}.

\begin{lemma}
	\label{lmm:calS-semiring}
	For a distributed quantum system $S$,
	the class $\calS\subset \calP\parens*{\omega\parens*{S}}$ forms a semiring.
\end{lemma}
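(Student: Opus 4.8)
The plan is to reduce every computation to the structure of the sets $\omega(A/b)$ for a single process $A$ and then re-assemble via Cartesian products. The first step I would take is the reformulation
\begin{equation*}
	\omega(A/b) = \braces*{\pi\in\omega(A) : b\in\pi} \subseteq \omega(A),
\end{equation*}
valid because the leaves of the tree $A/b$ are exactly the leaves of $A$ lying in the subtree rooted at $b$, while the forced segment $\rt(A)\rightarrow^* b$ is a common prefix of all maximal paths of $A/b$. Hence, for a system $S = A^{(1)}\parallel\cdots\parallel A^{(n)}$ (with the component order fixed as in the definition of $\omega$), every $\omega(C)$ with $C\in S/*$ is a ``box'' $\omega(A^{(1)}/b_1)\times\cdots\times\omega(A^{(n)}/b_n)$ inside $\omega(S)$, so indeed $\calS\subseteq\calP(\omega(S))$; and $\emptyset\in\calS$ by construction, which is the first condition of \Cref{def:semi-ring}.

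For closure under intersection I would use that Cartesian products distribute over intersection, so $\omega(C)\cap\omega(C')=\prod_i\bigl(\omega(A^{(i)}/b_i)\cap\omega(A^{(i)}/b_i')\bigr)$, and that by the reformulation a maximal path lies in $\omega(A/b)\cap\omega(A/b')$ iff it passes through both $b$ and $b'$, which forces $b$ and $b'$ to be comparable in the tree order; in that case the intersection equals $\omega(A/c)$ for $c$ the deeper of the two nodes, and otherwise it is empty. Therefore $\omega(C)\cap\omega(C')$ is either $\emptyset$ or a box $\omega(A^{(1)}/c_1)\times\cdots\times\omega(A^{(n)}/c_n)=\omega\bigl(A^{(1)}/c_1\parallel\cdots\parallel A^{(n)}/c_n\bigr)$ with $A^{(1)}/c_1\parallel\cdots\parallel A^{(n)}/c_n\in S/*$, hence a member of $\calS$, giving the third condition of \Cref{def:semi-ring}.

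The substantive point is closure of differences. I would first resolve the single-process case for $X=\omega(A/b)$, $Y=\omega(A/b')$: if $b'$ is an ancestor of $b$ then $X\subseteq Y$ and $X\setminus Y=\emptyset$; if $b$ and $b'$ are incomparable then $X\cap Y=\emptyset$ and $X\setminus Y=X\in\calS$; and if $b$ is a strict ancestor of $b'$, writing the unique path $b=c_0\rightarrow c_1\rightarrow\cdots\rightarrow c_k=b'$, a maximal path through $b$ misses $b'$ exactly when it leaves this chain at one node $c_j$ into a child $d\neq c_{j+1}$, giving
\begin{equation*}
	\omega(A/b)\setminus\omega(A/b') \;=\; \biguplus_{j=0}^{k-1}\ \biguplus_{d\,:\,c_j\rightarrow d,\ d\neq c_{j+1}}\ \omega(A/d),
\end{equation*}
a \emph{finite} disjoint union of members of $\calS$ because every node of a process tree has finitely many children (a standing assumption of the model, cf.\ \Cref{def:quantum-proc}). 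To lift this to systems I would invoke the elementary ``box-minus-box'' identity, obtained by classifying a tuple outside $\prod_i Y_i$ by the least coordinate at which it escapes:
\begin{equation*}
	\prod_{i=1}^n X_i\ \setminus\ \prod_{i=1}^n Y_i \;=\; \biguplus_{j=1}^n\ \Bigl( (X_1\cap Y_1)\times\cdots\times(X_{j-1}\cap Y_{j-1})\times(X_j\setminus Y_j)\times X_{j+1}\times\cdots\times X_n \Bigr).
\end{equation*}
Substituting $X_i=\omega(A^{(i)}/b_i)$, $Y_i=\omega(A^{(i)}/b_i')$, using that each $X_i\cap Y_i$ is $\omega(A^{(i)}/c_i)$ or $\emptyset$ and each $X_j\setminus Y_j$ is $\emptyset$, a single $\omega(A^{(j)}/\cdot)$, or a finite disjoint union of such, and distributing these finite unions through the product, exhibits $\omega(C)\setminus\omega(C')$ as a finite disjoint union of boxes $\omega(A^{(1)}/d_1)\times\cdots\times\omega(A^{(n)}/d_n)=\omega\bigl(A^{(1)}/d_1\parallel\cdots\parallel A^{(n)}/d_n\bigr)$, each lying in $S/*$ (empty boxes are discarded, though they too are in $\calS$). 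This settles the remaining condition of \Cref{def:semi-ring}.

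The crux I expect is the finiteness in the difference decomposition: the rest is routine set algebra, but the whole argument collapses if process trees may branch infinitely, so the proof must — and the statement does — rely on finite branching. The only other care required is bookkeeping: combining the per-process path decomposition with the multi-process box decomposition while keeping all pieces pairwise disjoint and genuinely of the form $\omega$ of a partial \emph{system} of $S$.
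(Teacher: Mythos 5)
Your proof is correct, but it takes a genuinely different route from the paper's. The paper derives the semiring property from its $\rightsquigarrow$-decomposition machinery: \Cref{lmm:C-D-dec-X-Y} jointly refines two partial systems $C,D$ into finite families $W_C,W_D$ whose intersection is a single partial system $E$ with $\omega(E)=\omega(C)\cap\omega(D)$, and \Cref{lmm:set-partial-sys-dec} then yields $\omega(C)-\omega(D)=\biguplus_{F\in W_C-\{E\}}\omega(F)$. You instead work with the cylinder-set description $\omega(A/b)=\{\pi\in\omega(A):b\in\pi\}$, settle the one-process case by explicit tree combinatorics (your comparability criterion for intersections is exactly \Cref{lmm:partial-cap}, and your chain decomposition of $\omega(A/b)-\omega(A/b')$ is the closed form of what iterating the paper's $\rightsquigarrow$ along the path from $b$ to $b'$ produces, cf.\ \Cref{lmm:partial-br}), and lift to systems via the box-minus-box identity rather than by refining whole partial systems at once. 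Both arguments ultimately rest on finite branching of the process trees: you flag this explicitly, while the paper uses it implicitly (the finiteness of $\{G\in S/*:\ell(G)\leq r\}$ in the termination argument of \Cref{lmm:C-D-dec-X-Y} requires it). What your version buys is a self-contained, closed-form proof of the lemma in isolation; what the paper's version buys is reuse, since the $\rightsquigarrow$ machinery is needed anyway for the additivity of $\mu_{\rho\to S}$ in \Cref{lmm:mu-is-pr-meas} (via \Cref{lmm:X-Y-dec-Z}) and again in the proof of \Cref{thm:local-ins}, so the semiring property there comes almost for free once that machinery is in place.
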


Let $\sigma\parens*{\calS}\subset \calP\parens*{\omega\parens*{S}}$ be the $\sigma$-algebra generated by $\calS$.
Intuitively, $\sigma\parens*{\calS}$ is the set of classically observable events to which we can assign probabilities.
Then, we can define a probability measure on $\sigma\parens*{\calS}$ by the following lemma, based on the system dynamics.
For readability, the proof is deferred to \Cref{sub:proofs_of_mu_is_pr_meas}.

\begin{lemma}
	\label{lmm:mu-is-pr-meas}
	For a trace-preserving distributed quantum system $S\in \Sys$ and an initial quantum state $\rho\in \calD\parens*{\calH_{q[S]}}$
	with $\tr\parens*{\rho}=1$,
	there exists a unique probability measure $\mu_{\rho\rightarrow S}:\sigma\parens*{\calS}\rightarrow \bracks*{0,1}$
	such that for any partial system $C\in S/*$,
	\begin{equation*}
		\mu_{\rho\rightarrow S}\circ\omega\parens*{C}=\lim_{t\to\infty}\tr\parens*{\Bracks*{C}(t)(\rho)}.
	\end{equation*}
\end{lemma}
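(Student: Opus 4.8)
The plan is to invoke Carathéodory's measure extension lemma (\Cref{lmm:cara ext}). By \Cref{lmm:calS-semiring}, the class $\calS$ is a semiring, and $\omega(S)\in \calS$ (taking the partial system $C=S$ itself), so the hypotheses on the domain are met. It therefore suffices to define a set function $\mu:\calS\to[0,1]$ on the semiring by
\begin{equation*}
	\mu(\emptyset)=0,\qquad \mu\parens*{\omega\parens*{C}}=\lim_{t\to\infty}\tr\parens*{\Bracks*{C}(t)(\rho)}\quad\text{for }C\in S/*,
\end{equation*}
and then verify that $\mu$ is (i) well-defined, (ii) additive, (iii) $\sigma$-subadditive, with (iv) $\mu(\emptyset)=0$ and $\mu(\omega(S))=1$; the lemma then yields the unique extension to $\sigma(\calS)$. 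The limit in the definition exists because $\tr(\Bracks*{C}(t)(\rho))$ is non-increasing in $t$ by \Cref{den:sys-dyn-tr} and bounded below by $0$; moreover since $C$ is trace-preserving after some sufficiently large time $t_C$, the limit equals $\tr(\Bracks*{C}(t_C)(\rho))$, a genuine maximum. Property (iv) is immediate: $\mu(\omega(S))=\lim_t \tr(\Bracks*{S}(t)(\rho))=\tr(\Bracks*{S}(0)(\rho))=\tr(\rho)=1$, using $\Bracks*{S}(0)=\Id$ (\Cref{den:sys-dyn-init-cnd}), $S$ being trace-preserving after time $0$, and $\tr(\rho)=1$.

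The substantive work is (i)–(iii), and all three reduce to one core computation: understanding how $\lim_t\tr(\Bracks*{C}(t)(\rho))$ behaves when a partial system $C$ is "split at a branching." Concretely, if $C=A/a\parallel B$ and $a\rightarrow b_1,\dots,b_k$ are the children of $a$, then the $\omega(A/b_i\parallel B)$ partition $\omega(C)$ as sets of maximal paths. On the dynamical side, choosing $t$ large enough that all the actions $b_i$ have finished, \Cref{den:sys-dyn-br} gives $\Bracks*{C}(t)=\sum_{i}\Bracks*{A/b_i\parallel B}(t)$, hence
\begin{equation*}
	\tr\parens*{\Bracks*{C}(t)(\rho)}=\sum_{i}\tr\parens*{\Bracks*{A/b_i\parallel B}(t)(\rho)};
\end{equation*}
taking $t\to\infty$ (and noting each summand is eventually constant) yields $\mu(\omega(C))=\sum_i \mu(\omega(A/b_i\parallel B))$, i.e. finite additivity across a single branching. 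Well-definedness (i) must also be checked: distinct actions $a\neq a'$ can satisfy $A/a = A/a'$ (this is remarked after \Cref{def:partial-proc}), but then $\omega(A/a)=\omega(A/a')$ and the formula assigns the same value since it depends only on the partial system $C$, not on the action witnessing it — so there is in fact nothing to check beyond observing the formula is written in terms of $C$. General finite additivity (ii): given $\omega(C)=\biguplus_{j=1}^K \omega(C_j)$ with all $C_j\in S/*$, one argues that such a decomposition is realized by iterating branch-splittings (any two partial systems of $S$ whose path-sets are nested or disjoint are related through the tree order), reducing to the single-branching case above by induction on the number of branchings involved; this is a combinatorial lemma about the semiring structure, essentially refining the proof of \Cref{lmm:calS-semiring}.

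The main obstacle I expect is (iii), $\sigma$-subadditivity: if $\omega(C)\subseteq \bigcup_{k\in\N}\omega(C_k)$, show $\mu(\omega(C))\le \sum_k \mu(\omega(C_k))$. Unlike the finite case, there is no "finishing time" after which everything is trace-preserving — the processes may be non-terminating, and a maximal path in $\omega(C)$ may only be "covered" by the $C_k$ in the limit. The approach is a standard measure-theoretic compactness-type argument adapted to this tree setting: approximate $\mu(\omega(C))$ from below by $\tr(\Bracks*{C}(t)(\rho))$ for large finite $t$; at time $t$ only finitely many actions of $C$ have occurred (by \Cref{den:proc-fin}), so $\omega(C)$ is partitioned into finitely many "cylinder" sets $\omega(C^{(1)}),\dots,\omega(C^{(m)})$ determined by the finite action-prefixes up to time $t$, with $\tr(\Bracks*{C}(t)(\rho))=\sum_\ell \mu(\omega(C^{(\ell)}))$ by repeated use of \Cref{den:sys-dyn-br}; then match these finite cylinders against the cover $\{\omega(C_k)\}$ using finite additivity on the semiring, push $t\to\infty$, and conclude. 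The care needed is in handling the interaction of the two sources of infinity — the branching tree and the continuous time parameter — and in ensuring the cylinder decompositions at time $t$ are genuinely refinements of (or compatible with) the given cover; this bookkeeping, rather than any deep new idea, is where the proof's length will concentrate. Uniqueness of $\mu_{\rho\to S}$ is then free from the uniqueness clause of \Cref{lmm:cara ext}.
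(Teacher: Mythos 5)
Your overall plan --- define $\mu$ on the semiring $\calS$, verify additivity and $\sigma$-subadditivity, and invoke \Cref{lmm:cara ext} --- is exactly the paper's route, and your treatment of the limit's existence, of $\mu(\omega(S))=1$, and of finite additivity via iterated branch-splittings (the paper packages this as the relation $\rightsquigarrow$ and \Cref{lmm:X-Y-dec-Z}) is sound. The problem is $\sigma$-subadditivity, where your sketch has a genuine gap. You propose to truncate at a finite time $t$, split $\omega(C)$ into finitely many cylinders determined by the branchings completed before $t$, ``match these finite cylinders against the cover $\{\omega(C_k)\}$ using finite additivity,'' and push $t\to\infty$. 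But finite additivity cannot perform that matching: a single cylinder $\omega(C^{(\ell)})$ at time $t$ need not be contained in any finite union of the $\omega(C_k)$'s, because each $C_k$ may fix a branch prefix reaching arbitrarily far beyond $t$, and this remains true at every finite $t$. Without first reducing the countable cover to a finite one, the argument is circular --- continuity from below along the truncations is exactly the $\sigma$-additivity you are trying to prove, and finitely additive contents on cylinder semirings that fail $\sigma$-additivity do exist in general (e.g.\ the finite/cofinite content on $\N$). The missing ingredient is a genuine compactness input: the paper equips each $\omega(A)$ with the ultrametric $d(B,C)=2^{-|B\cap C|}$, shows $(\omega(A),d)$ is complete and totally bounded (this is where \Cref{den:proc-fin} and finite branching actually enter), takes the product topology on $\omega(S)$ via Tychonoff, and observes that every $\omega(C)$ is clopen, hence compact. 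Then any countable cover of $\omega(C)$ by sets of $\calS$ admits a finite subcover (\Cref{lmm:simga-subadd-to-fin}), after which the semiring structure and finite additivity do the rest. You name a ``compactness-type argument'' but never supply the compact topology or the finite-subcover step, and the mechanism you do describe cannot substitute for it.

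Two smaller inaccuracies in the same passage: by \Cref{den:sys-dyn-tr} the trace is non-increasing, so $\tr(\Bracks*{C}(t)(\rho))$ approximates $\mu(\omega(C))$ from \emph{above}, not below; and the identity $\tr(\Bracks*{C}(t)(\rho))=\sum_\ell\mu(\omega(C^{(\ell)}))$ only holds once $t$ is large enough that each cylinder $C^{(\ell)}$ is trace-preserving after time $t$ --- in general you only get $\tr(\Bracks*{C}(t)(\rho))=\sum_\ell\tr(\Bracks*{C^{(\ell)}}(t)(\rho))\ge\sum_\ell\mu(\omega(C^{(\ell)}))$.
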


From \Cref{lmm:mu-is-pr-meas}, 
we can use the function $\mu_{\cdot \rightarrow S}\parens*{\cdot}$ to define the observable dynamics of $S$.
For convenience, we instead formally define the equivalence between two systems (up to the observable dynamics).
To this end, we introduce the isomorphism between two systems.
Intuitively, two systems are isomorphic if they are supposed to physically implement the same logical system,
which can be thought of as abstracting out the physical properties (i.e., $T[a]$ and $e[a]$ for every action $a$ within the systems).

\begin{definition}[Isomorphism]
	\label{def:isomorphism}
	For two trace-preserving distributed quantum systems $S,S'\in \Sys$,
	a bijection $\gamma:S\rightarrow S'$ is an isomorphism if
	\begin{enumerate}
		\item
			$\gamma$ preserves $\rightarrow$:
			$\forall a,b\in S,a\rightarrow b\Leftrightarrow \gamma\parens*{a}\rightarrow \gamma\parens*{b}$.
		\item
			$\gamma$ preserves $q$ and $\calE$:
			$\forall a\in S, q\bracks*{a}=q\bracks*{\gamma\parens*{a}}\wedge \calE\bracks*{a}=\calE\bracks*{\gamma\parens*{a}}$.
	\end{enumerate}
\end{definition}

Slightly abusing the notation, for an isomorphism $\gamma:S\rightarrow S'$ and any set $X$,
we recursively define $\gamma\parens*{X}:=\braces*{\gamma\parens*{Y}:Y\in X}$.
Then for $X\in \sigma\parens*{\calS}$,
we have $\gamma\parens*{X}\in \sigma\parens*{\calS'}$.
Now we define the equivalence (up to the observable dynamics) as follows.

\begin{definition}[Equivalent systems]
	\label{def:equi-system}
	Two trace-preserving distributed quantum systems $S,S'\in \Sys$,
	are equivalent (up to the observable dynamics),
	denoted by $S\simeq S'$, 
	if there exists an isomorphism $\gamma:S\rightarrow S'$ such that:
	for any state $\rho\in \calD\parens*{\calH_{q[S]}}$, 
	$\mu_{\rho\rightarrow S}=\mu_{\rho\rightarrow S'}\circ\gamma$.
\end{definition}

Equivalent systems are classically indistinguishable.
Note that $q\bracks*{S}=q\bracks*{S'}$ because $\gamma$ is an isomorphism.

\section{Atomicity of Local Actions}
\label{sec:atomicity_of_local_actions}

Based on the model developed in \Cref{sec:a_model_of_distributed_quantum_system},
we are able to prove our main theorem,
which justifies the atomicity of local actions.
We first define the atomicity in our model.

\begin{definition}[Atomicity]
	\label{def:atomicity}
	In a distributed quantum system $S$,
	a set $D\subseteq S$ of actions are said to be atomic,
	if $\forall a,b\in D$ with $a\in A,b\in B$ for some $A,B\in \Sys$ and $S=A\parallel B$,
	either $T[a]<T[b]$ or $T[b]< T[a]$.
\end{definition}

\begin{remark}
	In \Cref{def:atomicity},
	it is worth noting that the sequentiality condition 
	(either $T[a]<T[b]$ or $T[b]< T[a]$) is only imposed on atomic actions,
	and nothing is presumed for non-atomic actions.
	What is the temporal relation between an atomic and a non-atomic action?
	They can still be concurrent (and not sequential), 
	because the former is ``indivisible'' and the latter is ``divisible''.
\end{remark}

Then, we prove the following intermediate theorem,
stating that
local actions can be regarded as if they were instantaneous,\footnote{
	The idea is similar to the shrinking of time intervals in Proposition 1 in \cite{Lamport86c} (see also Proposition 4 in \cite{Lamport85}).
	The difference is that \cite{Lamport85,Lamport86c} only concern the temporal relations in a system;
	while here we are concerned about the observable dynamics,
	based on the real-time states of a system.
	See also \Cref{sub:related_works} for a discussion of related works.
}
up to the observable dynamics of the system.


\begin{theorem}[Local actions regarded instantaneous]
	\label{thm:local-ins}
	Given a trace-preserving distributed quantum system $S\in \Sys$,
	there exist another trace-preserving system $S'\simeq S$ and an isomorphism $\gamma:S\rightarrow S'$ such that:
	\begin{enumerate}
		\item
			For any local action $a\in S$,
			$e\bracks*{\gamma\parens*{a}}=\emptyset$ and
			$T\bracks*{\gamma\parens*{a}}=\braces*{t_a}$ for some $t_a\in \R_{\geq 0}$.
		\item
			For any non-local action $a\in S$, $\gamma(a)=a$.
	\end{enumerate}
\end{theorem}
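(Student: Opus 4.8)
The strategy is to transform $S$ into $S'$ by shrinking the time interval of each local action to a single instant (and emptying its quantum environment, since a local action that we want to treat atomically should be unitary-like in its footprint — but in fact a local action may still perform a measurement, so more carefully: we only collapse $T[a]$ and must argue the environment $e[a]$ can be absorbed), leaving non-local actions untouched, and then verify that the two conditions of \Cref{def:isomorphism} hold trivially (we change only $T[\cdot]$ and $e[\cdot]$, which are exactly the physical properties an isomorphism ignores) and that the observable dynamics is preserved, i.e.\ $\mu_{\rho\to S}=\mu_{\rho\to S'}\circ\gamma$ for all $\rho$. The map $\gamma$ is the obvious identity-on-labels bijection, so conditions (1)--(2) in the theorem statement and the isomorphism conditions are immediate once we know $S'$ is a well-defined system; the content is entirely in showing $S\simeq S'$.

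First I would do the transformation one local action at a time, processing local actions in an order compatible with $\rightarrow$ and with the temporal order, and at each step move a single local action $a$ to an instant $t_a$ chosen inside $T[a]$ (say $t_a=\min T[a]$, or a point disjoint from all overlapping intervals — which exists precisely because $a$ is local, so every action sharing a qubit with $a$ has a time interval disjoint from $T[a]$). One must check that the resulting $(A,\rightarrow)$ is still a quantum process: sequentiality (\Cref{den:proc-seq}) is preserved because shrinking $T[a]$ toward $\min T[a]$ cannot violate $T[a']<T[a]<T[a'']$ for the predecessor/successor along $\rightarrow$ (here I would need that there is room — i.e.\ that $\min T[a]$ strictly exceeds $\max T[a']$, which holds since $a'\rightarrow a$ implies $T[a']<T[a]$); \Cref{den:proc-fin} is preserved since we only remove mass from each interval; \Cref{den:proc-br} is untouched since $q,\calE$ are unchanged. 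The environment change $e[\gamma(a)]=\emptyset$ is only asserted in the theorem for local $a$, and for this to be consistent with the model we either restrict attention to local actions that are already unitary (whence $e[a]=\emptyset$ originally), or — more likely in the paper — we observe that a local measurement action's environment, being effectively decoupled before $\min T[a]$ and after $\max T[a]$, can be folded into the instant so that externally it acts as the measurement quantum operation $\calE[a]$ with no residual environment; I would lean on the convention around $\ket{\bot}$ just before \Cref{def:sys-dyn} here.

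The heart of the proof is preserving the observable dynamics. By \Cref{lmm:mu-is-pr-meas} and the uniqueness of the Carathéodory extension (\Cref{lmm:cara ext}), it suffices to show that for every partial system $C\in S/*$ and every $\rho$,
\begin{equation*}
	\lim_{t\to\infty}\tr\parens*{\Bracks*{C}(t)(\rho)}=\lim_{t\to\infty}\tr\parens*{\Bracks*{\gamma(C)}(t)(\rho)}.
\end{equation*}
I would prove the stronger claim that the two limiting \emph{states} agree: $\lim_{t\to\infty}\Bracks*{C}(t)(\rho)$ and $\lim_{t\to\infty}\Bracks*{\gamma(C)}(t)(\rho)$ are equal (after identifying $\calH_C$ with $\calH_{\gamma(C)}$, which differ only by the now-empty environments of local actions). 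The argument is an induction on the number of actions of $S$ that begin before a large cutoff time $t_0$ past which the relevant partial system is trace-preserving; combined with \Cref{den:sys-dyn-br} this reduces the entangled, branching situation to finitely many branch-free segments. On a branch-free time segment $I$ between consecutive "event times," \Cref{den:sys-dyn-evo} says the evolution is a quantum operation uniquely determined by $C\restriction_I$, and factorizes over parallel sub-systems with disjoint qubit-footprint in $I$; a local action $a$ can, by its defining space-time separation, be slid along the time axis past any window that shares no qubit with $a$ without changing the composite operation (this is the quantum analogue of commuting disjoint-support operators, $\calE[a]\otimes\calF=\calF\otimes\calE[a]$, which in the operation framework is literal commutation since they act on tensor-disjoint factors). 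Iterating these local commutations moves the interval of $a$ down to the instant $t_a$ while never altering the ordered product of operations applied to $\rho$.

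The main obstacle I expect is the bookkeeping that makes "slide $a$ to an instant without changing anything" rigorous in the presence of \emph{entanglement across the whole system} and \emph{branching}: one cannot argue factor-by-factor on a tensor decomposition of the state (that is exactly the classical move the motivating examples show fails), so the commutation has to be stated at the level of the global quantum operations $\Bracks*{C}(t)\in\QO(\calH_C)$ and justified purely from the axioms of \Cref{def:sys-dyn} — in particular from the "uniquely determined by $C\restriction_I$" clause and the disjoint-footprint factorization in \Cref{den:sys-dyn-evo} — together with an induction that carefully tracks which partial systems are trace-preserving after which times so that \Cref{den:sys-dyn-br} can be invoked to linearize the branching. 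A secondary subtlety is handling the case where two local actions (from different processes) originally overlap in time: after both are shrunk to instants $t_a,t_b$ we need $t_a\neq t_b$ or, if equal, that the order is immaterial — which again follows from their disjoint qubit-footprint, but must be checked so that $S'$ has its local actions genuinely atomic in the sense of \Cref{def:atomicity}, which is what the final theorem \Cref{thm:local-atom} will need on top of this one.
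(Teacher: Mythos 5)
Your overall route is the same as the paper's: shrink one local action at a time, reduce to branch-free partial systems using the branching structure (the paper does this via a decomposition lemma, \Cref{lmm:C-exp-to-after-t}, plus additivity of $\mu_{\rho\to S}$), and on the branch-free segment compare the two systems at the level of the global operations $\Bracks*{D}(t)$ using the disjoint-footprint factorization and the Dijkstra--Lamport clause of \Cref{den:sys-dyn-evo}. The paper's version of your ``sliding'' is the split of $T\bracks*{a}=[x,y]$ into $[x,t_a)$, $\braces*{t_a}$, $(t_a,y]$ and the identity $\parens*{\Id\otimes\calF_3'}\circ\parens*{\calE\bracks*{a}\otimes\calF_2'}\circ\parens*{\Id\otimes\calF_1'}=\calE\bracks*{a}\otimes\parens*{\calF_3'\circ\calF_2'\circ\calF_1'}$, i.e.\ exactly the tensor-disjoint commutation you describe, carried out on operations rather than states. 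Your side worry about $e\bracks*{\gamma\parens*{a}}=\emptyset$ for measurement actions is moot: in the formal model $e[\cdot]$ is just an assigned property of an action, nothing forces it to be nonempty, and the theorem merely asserts the \emph{existence} of such an $S'$; no ``folding in'' argument is needed.

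The one genuine omission is the passage to the limit. $S$ may contain countably infinitely many local actions, and transitivity of $\simeq$ only yields $S\simeq S_m$ for each finite stage $m$; you still owe an argument that the limit system $S'$, in which \emph{all} local actions are instantaneous, is equivalent to $S$. The paper's second step does this by taking the pointwise limit $\gamma=\lim_m\gamma_m$ of the stagewise isomorphisms and then, for each partial system $C$ and state $\rho$, choosing $t$ so large that $C$ (hence also $\gamma_m\parens*{C}$ and $\gamma\parens*{C}$) is trace-preserving after time $t$ (\Cref{den:sys-dyn-tr}), and $m'$ so large that $\gamma_{m'}\parens*{C}\restriction_{(0,t]}=\gamma\parens*{C}\restriction_{(0,t]}$ --- possible because $C\restriction_{(0,t]}$ is finite by \Cref{den:proc-fin}; then \Cref{den:sys-dyn-evo} on $I=(0,t]$ gives $\Bracks*{\gamma_{m'}\parens*{C}}\parens*{t}=\Bracks*{\gamma\parens*{C}}\parens*{t}$, and \Cref{lmm:mu-is-pr-meas} finishes. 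Your plan should state this step explicitly; everything else in your outline matches the paper's proof.
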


We give a proof sketch of \Cref{thm:local-ins}.
For readability, the full proof is deferred to \Cref{sub:proof_of_thm_local_ins_atom}.

\begin{proof}[Proof sketch of \Cref{thm:local-ins}]
	The proof consists of two steps.
	\begin{enumerate}
		\item 
			\label{stp:thm-local-ins-pf-sk-1}
			Given a system $S$ and a local action $a\in S$,
			we can change the time interval $T\bracks*{a}$ of $a$ 
			to an instant $\braces*{t_a}$ with $t_a\in T\bracks*{a}$
			and obtain a new system $S'$.
			Then, $S\simeq S'$.

			To prove $S\simeq S'$, suppose that $T\bracks*{a}=[x,y]$.
			For any partial system $C\in S/*$,
			denote the corresponding partial system (via the isomorphism $\gamma$) of $S'$ by $C'$.
			By \Cref{def:equi-system,lmm:mu-is-pr-meas},
			it suffices to prove $\mu_{\rho\to S}\circ\omega\parens*{C}=\mu_{\rho\to S'}\circ\omega\parens*{C'}$
			for any state $\rho$ and partial system $C$.
			Actually $C$ can be ``decomposed'' (see \Cref{lmm:C-exp-to-after-t,sub:decomposition_lemmas_about_partial_systems}),
			majorly due to \Cref{den:sys-dyn-br},
			and consequently the task can be further reduced to assuming $C\restriction_{[0,y]}$ has no branching.
			We can prove the following equalities step by step,
			via repeated uses of \Cref{den:sys-dyn-evo}:
			\begin{enumerate}
				\item 
					$\Bracks*{C}\parens*{x-}=\Bracks*{C'}\parens*{x-}$;
				\item
					$\Bracks*{C}\parens*{y}=\Bracks*{C'}\parens*{y}$; and
				\item
					$\Bracks*{C}\parens*{t}=\Bracks*{C'}\parens*{t}$ for $t>y$.
			\end{enumerate}
			The second equality is the most complicated to prove,
			where we will resort to the Dijkstra-Lamport condition in \Cref{den:sys-dyn-evo}.
			Taking the trace and $t\to +\infty$ in the last equality leads to 
			the conclusion, according to \Cref{lmm:mu-is-pr-meas}.
		\item
			Given a system $S$,
			we can construct a family of systems $\braces*{S_m}_{m\in \N}$,
			such that each $S_{m}$ is obtained from $S_{m-1}$ as in Step~\ref{stp:thm-local-ins-pf-sk-1} above
			(with each time a new local action chosen).
			Then we have $S\simeq S_1\simeq S_2\simeq\ldots$.
			Taking the limit $S'=\lim_{m\to \infty} S_m$,
			we can verify that all local actions in $S'$ are instantaneous
			and $S'\simeq S$.

			To prove $S'\simeq S$,
			for any partial system $C\in S/*$,
			denote the corresponding partial system of $S_m$ by $C_m$,
			and of $S'$ by $C'$.
			From \Cref{den:sys-dyn-tr}, we can choose sufficiently large $t$ 
			such that $C$ is trace-preserving after time $t$,
			and $m$ such that $C_m\restriction_{[0,t]}=C'\restriction_{[0,t]}$.
			Consequently, by \Cref{den:sys-dyn-evo},
			$\tr\parens*{\Bracks*{C_m}\parens*{t}\parens*{\rho}}=\tr\parens*{\Bracks*{C'}\parens*{t}\parens*{\rho}}$,
			which results in $S'\simeq S_m$ from \Cref{den:sys-dyn-tr,lmm:mu-is-pr-meas}.
			The conclusion follows from $S\simeq S_m\simeq S'$.
	\end{enumerate}
\end{proof}

Based on \Cref{thm:local-ins}, we can prove the main theorem,
stating that local actions can be regarded as if they were atomic,
up to the observable dynamics of the system.

\begin{theorem}[Local actions regarded atomic]
	\label{thm:local-atom}
	Given a trace-preserving and aligned distributed quantum system $S\in \Sys$ with $\forall a\in S, \abs*{T\bracks*{a}}>0$,
	there exists another trace-preserving and aligned system $S'\simeq S$ such that
	local actions in $S'$ are atomic.
\end{theorem}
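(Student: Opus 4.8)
The plan is to use \Cref{thm:local-ins} as the engine. That theorem already produces an equivalent system in which every local action occupies a single instant $\{t_a\}$, and two local actions from different processes placed at distinct instants are automatically $<$-ordered, so the set of all local actions would be atomic. What remains is to close two gaps. First, \Cref{thm:local-ins} leaves the instants $t_a$ uncontrolled, so different-process local actions might land at the same time; I need to choose the instants to prevent this. Second, \Cref{thm:local-ins} empties the environments of the collapsed actions, which can break the \emph{aligned} condition at a branching with both local and non-local children; I need a variant that keeps the system aligned. Accordingly I would re-run the proof of \Cref{thm:local-ins} (which collapses the local actions one by one and takes a limit) with the instants chosen globally in advance and with environments carried along rather than discarded.

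For the first gap, note that in the proof of \Cref{thm:local-ins} each local $a$ may be collapsed to any $t_a\in T[a]$, and by hypothesis $|T[a]|>0$, so $T[a]$ is a genuine interval. Group the local actions into \emph{alignment classes}: a class is either a single local action with no local siblings, or the set of all local children of some branching node, which the aligned condition forces to share one common start time. Since $S$ is aligned, all children of a node begin at the same instant, so by \Cref{den:proc-fin} every branching has only finitely many children; hence the admissible common instant of a class ranges over a finite intersection of positive-length intervals sharing a left endpoint, which is again a positive-length interval (and collapses to that left endpoint $x$ if the class has a non-local sibling, whose start time is pinned to $x$). Enumerate the classes and pick their instants one at a time, at each step staying in the admissible interval and avoiding the countably many instants already used by classes of other processes; this is always possible, since a positive-length interval minus a countable set is nonempty. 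Then local actions from different processes sit at distinct instants and are $<$-ordered, so the whole set of local actions is atomic; within a single process the induced order of the collapsed actions refines $\rightarrow^+$ automatically, because $a\rightarrow^+ b$ implies $T[a]<T[b]$ in $S$ while $t_a\in T[a]$, $t_b\in T[b]$, so sequentiality, the rooted-tree and branching structure, and the finiteness condition of \Cref{def:quantum-proc} all survive at every finite stage of the collapse and in the limit.

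For the second gap, instead of setting the environment of a collapsed local action to $\emptyset$, keep $e[\gamma(a)]:=e[a]$. An instantaneous action has its environment decoupled both before and after its single instant, so it merely stays in $\ket{\bot}$ and never interacts; the operation that \Cref{den:sys-dyn-evo} attaches at that instant is still $\calE[a]$ acting trivially on $\calH_{e[a]}$, so neither the system dynamics nor the observable dynamics change and $S'\simeq S$ is preserved, while now $e[\gamma(b)]=e[\gamma(c)]$ for siblings $b,c$ even when a branching mixes local and non-local children. Together with the common-start-time choice of the previous paragraph, $S'$ is aligned, and it is trace-preserving because every $\calE[a]$, hence every $\sum_{b:a\rightarrow b}\calE[b]$, is untouched.

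The main obstacle I expect is the global consistency of this instant assignment against alignment, concentrated precisely at branchings that mix local and non-local children: there the local children are pinned to the fixed start time $x$, so the slack used to dodge collisions vanishes, and one must rule out (or otherwise repair) the situation where two such pinned local actions in different processes land on the same $x$. This is where I expect the bulk of the technical work — arguing from the aligned structure and the positivity of all action lengths that whenever two alignment classes would be forced to collide, their intervals can be re-selected so as to separate them (if necessary by also trimming a shared non-local neighbour from the left, which is legitimate exactly when that initial segment carries no operation on the relevant register), so that a globally consistent, collision-free choice of instants always exists.
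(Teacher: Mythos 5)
Your core argument is exactly the paper's proof of \Cref{thm:local-atom}: enumerate the local actions, collapse each to a single instant chosen from its positive-length time interval while avoiding the countably many instants already assigned to local actions in other processes, and use alignedness (plus $\abs*{T\bracks*{a}}>0$) to give sibling local actions a common instant, then invoke the argument of \Cref{thm:local-ins} for $S'\simeq S$ --- so the proposal is correct and takes essentially the same route. The extra machinery you introduce (retaining $e\bracks*{\gamma(a)}=e\bracks*{a}$, and the ``pinned'' collision problem at branchings mixing local and non-local children) bears only on whether $S'$ remains \emph{aligned}; the paper's own proof does not treat this case either (it sets $e\bracks*{\gamma(a_m)}=\emptyset$, equalizes $t_j=t_k$ only among local siblings, and asserts the remaining properties as ``easy to see''), so the obstacle you leave unresolved is a refinement of, not a departure from, the published argument.
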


We give a proof sketch of \Cref{thm:local-atom}.
For readability, the full proof is deferred to \Cref{sub:proof_of_thm_local_ins_atom}.

\begin{proof}[Proof sketch of \Cref{thm:local-atom}]
	Given \Cref{thm:local-ins},
	one can replace all local actions in $S$ with instantaneous versions to obtain a system $S'$.
	Since $\abs*{T\bracks*{a}}>0$ for all $a\in S$,
	it is possible to arrange the instants of these local actions such that they never overlap.
	The conclusion immediately follows.
\end{proof}

\begin{remark}
	Note that any physically implementable system naturally satisfies the conditions of $S$ in \Cref{thm:local-atom}.
	If a system has a physical implementation,
	it preserves the probability and is hence trace-preserving.
	Also, partial measurements from the same measurement are consistently performed by the same device,
	so the system is also aligned.
	Finally, in the real world, no actions are performed instantly,
	which exactly means $\forall a\in S, \abs*{T[a]}>0$.
\end{remark}


\section*{Acknowledgements}

Zhicheng Zhang thanks Qisheng Wang, Angsar Manatuly, Alexander Hahn, Daniel Burgarth and Yangfang Wu for helpful discussions.
Zhicheng Zhang was supported by the Sydney Quantum Academy, NSW, Australia.

\newpage

\printbibliography

\newpage

\appendix

\section{Details of Proofs}
\label{sec:details_of_proofs}

In this section we present the details of all proofs in this paper.
The structure is organised as follows:
in \Cref{sub:technical_lemmas_about_partial_processes}
two technical lemmas about partial processes are presented.
In \Cref{sub:decomposition_lemmas_about_partial_systems}
we show that any partial system can be decomposed into ``finer'' ones,
w.r.t.\ a relation $\rightsquigarrow$, induced by the relation $\rightarrow$ in \Cref{def:quantum-proc}.
We also prove three technical lemmas about such decomposition.
In \Cref{sub:proof_of_lmm_calS_semiring} we prove \Cref{lmm:calS-semiring};
i.e., $\calS$ defined in \Cref{sub:observable_dynamics} is a semiring.
In \Cref{sub:technical_lemmas_about_semiring_cals_} we further prove two technical lemmas about the semiring $\calS$.
Then, in \Cref{sub:proofs_of_mu_is_pr_meas}, using the above lemmas,
we prove \Cref{lmm:mu-is-pr-meas}; i.e.,
$\mu_{\rho\rightarrow S}$ defined in \Cref{sub:observable_dynamics} is a probability measure.
Finally, in \Cref{sub:proof_of_thm_local_ins_atom} we prove the main \Cref{thm:local-ins,thm:local-atom}.

\subsection{Technical Lemmas about Partial Processes}
\label{sub:technical_lemmas_about_partial_processes}

The following lemmas about partial quantum processes are useful and easy to see from the tree structure of quantum processes
and \Cref{def:partial-proc}.

\begin{lemma}
	\label{lmm:partial-br}
	Given any partial process $A/a\in A/*$ for some quantum process $A\in \Proc$ and action $a\in A$,
	if $\braces*{b:a\rightarrow b}\neq \emptyset$,
	then we have $\omega\parens*{A/a}=\biguplus_{b:a\rightarrow b}\omega\parens*{A/b}$.
\end{lemma}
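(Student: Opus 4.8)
The plan is to unfold Definition~\ref{def:partial-proc} and reason directly on the tree structure of $(A,\rightarrow)$. First I would record the shape of $A/b$ for a child $b$ of $a$: since $a\rightarrow b$, the unique rooted path from $\rt(A)$ to $b$ inside $A$ is exactly the rooted path to $a$ followed by the edge $a\rightarrow b$, so by Definition~\ref{def:partial-proc} we have $A/b=\{c\in A:\rt(A)\rightarrow^* c\rightarrow^* a\}\cup\{b\}\cup\{c\in A:b\rightarrow^+ c\}$. In words, $A/b$ is obtained from $A/a$ by keeping only the branch through $b$ at the node $a$ (discarding every other child of $a$ together with its subtree); in particular the various $A/b$, as $b$ ranges over $\{b:a\rightarrow b\}$, share the common prefix consisting of the rooted chain up to $a$ and differ only in which subtree below $a$ is retained. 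I would also note, from the structure of $A/a$, that $\{c\in A:a\rightarrow^+ c\}=\biguplus_{b:a\rightarrow b}\bigl(\{b\}\cup\{c:b\rightarrow^+ c\}\bigr)$, since each strict descendant of $a$ lies below a unique child of $a$.

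Next I would prove $\omega(A/a)=\bigcup_{b:a\rightarrow b}\omega(A/b)$ by double inclusion, the key point being that \emph{every} maximal path $p\in\omega(A/a)$ must pass through $a$ and then continue to exactly one child $b$ with $a\rightarrow b$. Indeed, $p$ is rooted; if it terminated at some node strictly before $a$ on the chain up to $a$, that node would still have a child in $A/a$, so $p$ could be extended, contradicting maximality; and if it terminated at $a$, then because $\{b:a\rightarrow b\}\neq\emptyset$ the node $a$ has a child in $A/a$, again contradicting maximality. Hence $p$ decomposes as (rooted chain up to $a$)$\,\cdot\,(a\rightarrow b)\,\cdot\,p'$ for a unique $b$ with $a\rightarrow b$, where $p'$ is a maximal path in the subtree of $A$ rooted at $b$; this is precisely a maximal path of $A/b$, giving $\subseteq$. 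Conversely, any maximal path of $A/b$ has exactly this shape and, using the decomposition of the descendants of $a$ noted above, is a maximal path of $A/a$, giving $\supseteq$.

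Finally I would verify disjointness of the union. Suppose some $p$ lay in $\omega(A/b)\cap\omega(A/b')$ with $a\rightarrow b$, $a\rightarrow b'$ and $b\neq b'$; then $p$ contains both $b$ and $b'$, and since $p$ is totally ordered by $\rightarrow^*$ we may assume $b\rightarrow^* b'$. But the unique parent of $b'$ in the rooted tree $A$ is $a$, so $b\rightarrow^* a$, and together with $a\rightarrow b$ this yields $a\rightarrow^+ a$, contradicting that $(A,\rightarrow)$ is a tree. Hence the union is disjoint, which is exactly the asserted identity $\omega(A/a)=\biguplus_{b:a\rightarrow b}\omega(A/b)$. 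I expect the only delicate step to be the maximality argument in the middle paragraph --- showing that a maximal path of $A/a$ cannot stop at or before $a$ --- so I would take care to invoke the hypothesis $\{b:a\rightarrow b\}\neq\emptyset$ there explicitly; the remaining steps are routine bookkeeping on the rooted tree.
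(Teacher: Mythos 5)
Your proof is correct, and it is exactly the direct tree-structure verification the paper has in mind: the paper states \Cref{lmm:partial-br} without proof, remarking only that it is ``easy to see from the tree structure of quantum processes and \Cref{def:partial-proc}.'' Your write-up simply supplies the details (including the correct use of the hypothesis $\braces*{b:a\rightarrow b}\neq\emptyset$ and the disjointness argument), so there is nothing to change.
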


\begin{lemma}
	\label{lmm:partial-cap}
	Given any two partial processes $A/a, A/b\in A/*$ for some quantum process $A\in \Proc$ and actions $a,b\in A$,
	if $\omega\parens*{A/a}\cap \omega\parens*{A/b}\neq \emptyset$,
	then $a\rightarrow^* b \vee b\rightarrow^* a$.
\end{lemma}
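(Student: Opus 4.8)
The plan is to reduce everything to one structural fact: for every action $a\in A$, the action $a$ lies on \emph{every} maximal path of the partial process $A/a$. First I would unpack \Cref{def:partial-proc}: $A/a$ is the rooted path $P$ from $\rt(A)$ to $a$ in $A$ together with the sub-tree of $A$ rooted at $a$. The observation I want is that, in the tree $\parens*{A/a,\rightarrow}$, every vertex $c$ on $P$ with $c\neq a$ has exactly one child, namely the successor of $c$ along $P$. To see this, take any child $d$ of $c$ in $A/a$; by definition of $A/a$ we have $\rt(A)\rightarrow^* d\rightarrow^* a$ or $a\rightarrow^+ d$. The second alternative is impossible: $c\rightarrow d$ makes $c$ the parent of $d$, so $a\rightarrow^+ d$ would force $a\in\braces*{c}\cup\braces*{\text{strict ancestors of }c}$, contradicting $c\rightarrow^+ a$ in the acyclic tree $A$. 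Hence $d$ lies on $P$, so it is the unique $P$-successor of $c$. Consequently the part of $A/a$ from $\rt(A)$ down to $a$ is a single stem with no branching, so any rooted path in $A/a$ is forced to traverse all of $P$, in particular to contain $a$.

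The rest is immediate. Given $\pi\in\omega\parens*{A/a}\cap\omega\parens*{A/b}$, the paragraph above gives $a\in\pi$ and $b\in\pi$ (using also that every maximal path is rooted, already noted in the excerpt, and that $\rt\parens*{A/a}=\rt(A)$ since the edge relation of $A/a$ is the restriction of that of $A$). Write $\pi=\braces*{a_1,a_2,\ldots}$ with $a_j\rightarrow a_{j+1}$; then $a_i\rightarrow^* a_j$ whenever $i\leq j$, so any two elements of $\pi$ are $\rightarrow^*$-comparable. Taking $a=a_i$, $b=a_j$, and distinguishing $i\leq j$ from $j\leq i$, we conclude $a\rightarrow^* b$ or $b\rightarrow^* a$, which is exactly the claim (the case $a=b$ being covered since $\rightarrow^*$ is reflexive).

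The only genuinely delicate point — the ``main obstacle'', such as it is — is the rigorous justification that the portion of $A/a$ from the root to $a$ carries no branching, i.e. the claim about $P$ above. I would phrase it purely via the uniqueness-of-parent property of trees together with well-definedness of $P$ (which uses that $\rt(A)\rightarrow^* a$ in the rooted tree $A$), rather than by an \emph{ad hoc} case split; stated this way it is short and self-contained. Everything else follows directly from the definitions of maximal path, of $\omega\parens*{\cdot}$, and of $A/a$ recorded earlier in the excerpt, so the final write-up should be only a few lines. (This same stem observation also underlies \Cref{lmm:partial-br}, so it may be worth isolating it as a small remark.)
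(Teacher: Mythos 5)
Your proof is correct and follows exactly the route the paper intends: the paper states \Cref{lmm:partial-cap} without proof as ``easy to see from the tree structure and \Cref{def:partial-proc}'', and your key observation that $\omega\parens*{A/a}$ consists precisely of the maximal paths of $A$ containing $a$ (because the stem of $A/a$ down to $a$ has no branching), followed by the total ordering of any path under $\rightarrow^*$, is the natural filling-in of those details. No gaps.
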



\subsection{Decomposition Lemmas about Partial Systems}
\label{sub:decomposition_lemmas_about_partial_systems}

In this section we consider decomposing a partial system into 
``finer'' ones. We first define a relation $\rightsquigarrow$,
induced from the relation $\rightarrow$ in \Cref{def:quantum-proc}.

\begin{definition}[Relation $\rightsquigarrow$]
	\label{def:rel-squiar}
	Given a trace-preserving distributed quantum system $S\in \Sys$,
	define a relation $\rightsquigarrow \subseteq \calP\parens*{S/*}\times \calP\parens*{S/*}$
	on the powerset of $S/*$ such that 
	$X\rightsquigarrow Y$ iff 
	$X-Y=\braces*{A/a\parallel B}$ for some trace-preserving $A\in \Proc$, $a\in A$ and $B\in \Sys$;
	and $Y-X=\braces*{A/b\parallel B:a\rightarrow b}$.
\end{definition}

As usual, we use $\rightsquigarrow^*$ and $\rightsquigarrow^+$ to denote the Kleene star and Kleene plus of $\rightsquigarrow$, respectively.
The following lemma shows that if a finite set of partial systems
is decomposed from another w.r.t.\ the relation $\rightsquigarrow$,
then the unions of their corresponding maximal paths are the same.

\begin{lemma}
	\label{lmm:set-partial-sys-dec}
	Given a trace-preserving distributed quantum system $S\in \Sys$,
	and two finite sets $X,Y\subseteq S/*$ of partial systems,
	if $X\rightsquigarrow^* Y$
	and $\forall C,D\in X, \omega\parens*{C}\cap \omega\parens*{D}=\emptyset$,
	then $\biguplus_{C\in X}\omega\parens*{C}=\biguplus_{C\in Y}\omega\parens*{C}$.\footnote{The inverse direction however does not hold.
		Finding a counter example is left to the readers.
	}
\end{lemma}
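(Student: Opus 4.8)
The plan is to induct on the length $n$ of a chain $X = X_0 \rightsquigarrow X_1 \rightsquigarrow \cdots \rightsquigarrow X_n = Y$ witnessing $X \rightsquigarrow^* Y$. The base case $n = 0$ is $X = Y$ and there is nothing to prove. For the inductive step it suffices to establish, for a single step $X_0 \rightsquigarrow X_1$ in which the sets $\{\omega(C) : C \in X_0\}$ are pairwise disjoint, that (i) $\biguplus_{C \in X_0} \omega(C) = \biguplus_{C \in X_1} \omega(C)$, and (ii) the sets $\{\omega(C) : C \in X_1\}$ are again pairwise disjoint. Property (ii) lets us apply the induction hypothesis to the shorter chain $X_1 \rightsquigarrow^* Y$, and chaining the equality from (i) with that of the induction hypothesis closes the argument.

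For the single step, write $X_0 - X_1 = \{A/a \parallel B\}$ and $X_1 - X_0 = \{A/b \parallel B : a \rightarrow b\}$ as in \Cref{def:rel-squiar}, with $A \in \Proc$ trace-preserving. Trace-preservingness of $A$ forces every action of $A$ to have at least one successor (otherwise the sum $\sum_{b : a \rightarrow b} \calE[b]$ would be empty, hence not trace-preserving), so $\{b : a \rightarrow b\} \neq \emptyset$ and \Cref{lmm:partial-br} gives $\omega(A/a) = \biguplus_{b : a \rightarrow b} \omega(A/b)$. Taking the Cartesian product with $\omega(B)$ and using distributivity of $\times$ over disjoint unions, $\omega(A/a \parallel B) = \biguplus_{b : a \rightarrow b} \omega(A/b \parallel B)$. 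Since $\omega$ is non-empty on every partial system and the $\omega$-images over $X_0$ are pairwise disjoint, the $\omega$-image of each freshly introduced $A/b \parallel B$ sits inside $\omega(A/a \parallel B)$ and is therefore disjoint from $\omega(C)$ for every surviving $C \in X_0 - \{A/a \parallel B\}$; in particular no new element coincides with a surviving one, and by \Cref{lmm:partial-br} distinct new elements have disjoint $\omega$-images as well. Hence $X_1 = (X_0 - \{A/a \parallel B\}) \cup \{A/b \parallel B : a \rightarrow b\}$, with $\biguplus_{C \in X_1} \omega(C)$ a genuine disjoint union, and it differs from $\biguplus_{C \in X_0} \omega(C)$ only by replacing the single block $\omega(A/a \parallel B)$ with the equal family $\{\omega(A/b \parallel B) : a \rightarrow b\}$. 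This proves (i), and (ii) has just been verified along the way.

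The proof is in the end a bookkeeping exercise on disjoint unions and Cartesian products, and I expect the only point needing genuine care to be the maintenance of the pairwise-disjointness invariant along the whole chain: without it the single-step replacement of one $\omega$-block by several would not be legitimate and the induction could not be closed. A secondary subtlety is the clean identification of $X_1$ as a set, where the non-emptiness of $\omega$ on partial systems and the trace-preservingness of $A$ are used — the latter rules out, among other things, the degenerate situation in which $a$ has a unique successor $b$, for then $A/a = A/b$ and $X_0 - X_1$ could not be the prescribed singleton.
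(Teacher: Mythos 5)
Your proof is correct and follows essentially the same route as the paper's: induction on the transitive closure of $\rightsquigarrow$ with \Cref{lmm:partial-br} handling the single step, and you are in fact more careful than the paper in maintaining the pairwise-disjointness invariant along the chain and in noting that trace-preservation forces the successor set to be nonempty. One small inaccuracy in your closing commentary: trace-preservation does \emph{not} rule out the case of a unique successor $b$ of $a$ (a single trace-preserving child is perfectly allowed); that degeneracy is instead excluded because \Cref{def:rel-squiar} prescribes $X-Y$ and $Y-X$, which are disjoint as set differences, and this is impossible when $A/a=A/b$ --- but since your core argument only ever uses the prescribed forms of $X_0-X_1$ and $X_1-X_0$, nothing in the proof itself is affected.
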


\begin{proof}
	We prove the lemma by induction on the transitive closure $\rightsquigarrow^*$.
	\begin{enumerate}
		\item 
			$X=Y$. This case is trivial.
		\item
			$X\rightsquigarrow^* Z$ and $Z\rightsquigarrow Y$ for some finite set $Z\subseteq S/*$.
			By the induction hypothesis, $\biguplus_{C\in X}\omega\parens*{C}=\biguplus_{C\in Z}\omega\parens*{C}$.
			According to \Cref{def:rel-squiar},
			$Z-Y=\braces*{A/a\parallel B}$ for some trace-preserving $A\in \Proc$, $a\in A$
			and $B\in \Sys$; and $Y-Z=\braces*{A/b\parallel B:a\rightarrow b}$.
			By \Cref{lmm:partial-br},
			$\omega\parens*{A/a}=\biguplus_{b:a\rightarrow b}\omega\parens*{A/b}$.
			Thus, $\biguplus_{C\in Z}\omega\parens*{C}=\biguplus_{C\in Y}\omega\parens*{C}$.
			The conclusion immediately follows.
	\end{enumerate}
\end{proof}

Recall that for a quantum process $A$,
any partial process $B\in A/*$ consists of a rooted path to some $a\in A$
and the subtree rooted at $a$.
We define a function $\ell\parens*{\cdot}$ such that $\ell\parens*{B}$
is the minimal length of such rooted path.
It can also be naturally generalised to partial systems.

\begin{definition}[Function $\ell$]
	\label{def:func-ell}
	Given a partial processes $B\in A/*$ for some trace-preserving quantum process $A\in \Proc$,
	define $\ell\parens*{B}=\min\braces*{r: B=A/a\wedge\rt\parens*{A}\rightarrow^r a\in A}\in \N$.

	Given a partial systems $C=A\parallel B$,
	define $\ell\parens*{C}=\max \braces*{\ell\parens*{A},\ell\parens*{B}}$.
\end{definition}

Our next lemma shows that two partial systems can be decomposed w.r.t.\ the relation $\rightsquigarrow$
such that after the decomposition,
the intersection of the two sets of partial systems is a unique partial system,
whose corresponding maximal paths are exactly the intersection of those of the original two partial systems.

\begin{lemma}
	\label{lmm:C-D-dec-X-Y}
	Given a trace-preserving distributed quantum system $S\in \Sys$,
	for any two partial systems $C,D\in S/*$ with $\omega\parens*{C}\cap \omega\parens*{D}\neq \emptyset$,
	there exists two corresponding finite sets $X,Y\subseteq S/*$
	such that 
	\begin{enumerate}
		\item 
			$\braces*{C}\rightsquigarrow^* X$ and $\braces*{D}\rightsquigarrow^* Y$;
		\item
			$X\cap Y=\braces*{E}$ for some $E\in S/*$ and $\omega\parens*{E}=\omega\parens*{C}\cap \omega\parens*{D}$.
	\end{enumerate}
\end{lemma}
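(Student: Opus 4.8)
The plan is to build $X$ and $Y$ by repeatedly applying the one-step decomposition $\rightsquigarrow$ to whichever process in the current partial systems is ``too short'' relative to the intersection path. Fix a maximal path $\vec{p}\in\omega(C)\cap\omega(D)\neq\emptyset$. Write $S=P_1\parallel\cdots\parallel P_n$ as a parallel composition of processes, $C=A_1\parallel\cdots\parallel A_n$ with $A_i\in P_i/*$, and similarly $D=B_1\parallel\cdots\parallel B_n$. Since $\vec p=(p_1,\dots,p_n)$ with $p_i\in\omega(A_i)\cap\omega(B_i)$, \Cref{lmm:partial-cap} tells us that for each $i$ the two ``knowledge points'' of $A_i$ and $B_i$ are $\rightarrow^*$-comparable; say $A_i=P_i/a_i$ and $B_i=P_i/b_i$ with (w.l.o.g.) $a_i\rightarrow^* b_i$, and both $a_i,b_i$ lie on the path $p_i$. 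The idea is: $C$ knows less than $D$ along coordinate $i$, so we decompose $C$ along that coordinate until it ``catches up'' to $b_i$.

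First I would make precise the catching-up step. Given a partial system in which coordinate $i$ is $P_i/a$ with $a\rightarrow^+ b_i$, i.e. $a\neq b_i$, pick the child $a'$ of $a$ that lies on the path $p_i$ (it exists and is unique because $p_i$ is a path through $a$ and $a\neq b_i$ means $a$ has successors on $p_i$). Applying \Cref{def:rel-squiar} with this $a$ replaces the single partial system by the finite set $\{P_i/b\parallel(\text{rest}):a\rightarrow b\}$; among these the one with $b=a'$ is the ``live'' branch containing $\vec p$, and $\ell$ of its $i$-th coordinate has strictly increased (it is now $P_i/a'$ with $a'$ one step closer to $b_i$ along $p_i$). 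Because $a_i\rightarrow^* b_i$ is a finite path and \Cref{den:proc-fin} guarantees finiteness of the tree up to any fixed depth, after finitely many such steps coordinate $i$ of the live branch equals exactly $P_i/b_i$. Do this for every coordinate $i$ in turn, always operating on the live branch (the one containing $\vec p$); call the resulting finite set $X$, so $\{C\}\rightsquigarrow^* X$, and the live branch of $X$ is precisely $E:=B_1\parallel\cdots\parallel B_n=D$ restricted/rebuilt... wait — more carefully, the live branch is $P_1/b_1\parallel\cdots\parallel P_n/b_n$. Symmetrically, decompose $D$ along each coordinate $j$ where $b_j$ is a strict $\rightarrow^+$-predecessor of... no: by our w.l.o.g.\ choice $D$ already knows at least as much as $C$ in every coordinate where they are comparable, so in fact $D$ need not be decomposed at all when $a_i\rightarrow^* b_i$ holds for all $i$; set $Y=\{D\}$ and $E=D$. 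In general the two processes need not be uniformly ordered (coordinate $1$ may have $a_1\rightarrow^* b_1$ while coordinate $2$ has $b_2\rightarrow^* a_2$), so the honest statement is: decompose $C$ in the coordinates where it is shorter, decompose $D$ in the coordinates where it is shorter, and both reach the common partial system $E:=P_1/c_1\parallel\cdots\parallel P_n/c_n$ where $c_i$ is the $\rightarrow^*$-maximum of $\{a_i,b_i\}$.

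It then remains to check the two numbered conclusions. Item~1, $\{C\}\rightsquigarrow^* X$ and $\{D\}\rightsquigarrow^* Y$, is exactly what the construction produces. For item~2, that $E\in X\cap Y$ follows since $E$ is the live branch of both; that $X\cap Y=\{E\}$ needs the observation that any other element $C'$ of $X$ lies in a dead branch, i.e.\ $\omega(C')$ is disjoint from $\omega(E)\ni\vec p$ while every element of $Y$ other than $E$ likewise has maximal-path set disjoint from $\vec p$ — but two partial systems of $S$ coincide iff their coordinate knowledge points coincide, so if $C'\in X$ and $C'\in Y$ and $C'\neq E$ then $C'$ would have to be simultaneously a proper dead sibling produced on the $C$-side and one produced on the $D$-side, and tracking the coordinatewise $\ell$-values shows these sibling sets are disjoint from each other except at $E$ (the dead siblings from the $C$-side decomposition differ from $E$ in a coordinate where $C$ was decomposed, the dead siblings from the $D$-side in a coordinate where $D$ was decomposed, and by construction these agree with $E$ on the other's decomposed coordinates). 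Finally $\omega(E)=\omega(C)\cap\omega(D)$: ``$\subseteq$'' holds because \Cref{lmm:set-partial-sys-dec} (applicable since the elements of $\{C\}$, resp.\ $\{D\}$, trivially have disjoint maximal-path sets) gives $\omega(C)=\biguplus_{C'\in X}\omega(C')\supseteq\omega(E)$ and likewise $\omega(D)\supseteq\omega(E)$; ``$\supseteq$'' holds because if $\vec q\in\omega(C)\cap\omega(D)$ then in each coordinate $\vec q$ passes through both $a_i$ and $b_i$, hence through their $\rightarrow^*$-maximum $c_i$, so $\vec q\in\omega(E)$.

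The main obstacle I anticipate is the bookkeeping in the uniqueness claim $X\cap Y=\{E\}$: one must argue carefully that the ``dead'' partial systems spawned on the $C$-side and on the $D$-side never accidentally coincide, which is where the coordinatewise control of the function $\ell$ (\Cref{def:func-ell}) and the fact that a partial system is determined by its tuple of knowledge points get used. Everything else is a routine induction using \Cref{lmm:partial-br}, \Cref{lmm:partial-cap}, \Cref{lmm:set-partial-sys-dec}, and the finiteness condition \Cref{den:proc-fin}.
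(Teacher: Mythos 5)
Your proposal is correct and follows essentially the same route as the paper's proof: both repeatedly apply $\rightsquigarrow$ to whichever side has the shallower knowledge point (using \Cref{lmm:partial-cap} to get comparability along the shared path), with termination controlled by the function $\ell$ and the finiteness of the tree up to a bounded depth. The only difference is organizational — the paper interleaves the decomposition of $X$ and $Y$ while maintaining the invariant that exactly one pair of elements has intersecting maximal-path sets, whereas you fix a witness path and identify $E$ upfront as the coordinatewise join of $C$ and $D$; your observation that $C$ is only ever decomposed in coordinates where it is strictly shallower, and $D$ only in the (disjoint) complementary set, does yield the uniqueness of $X\cap Y=\braces*{E}$ as claimed.
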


\begin{proof}
	We show how to construct $X,Y$ step by step.
	In the construction, we require the intermediate $X,Y$ to satisfy the following:
	\begin{align}
		&\braces*{C}\rightsquigarrow^* X\wedge \braces*{D}\rightsquigarrow^* Y,
		\label{eq:C-D-dec-X-Y-first}\\
		&\exists! (E,F)\in X\times Y, \omega\parens*{E}\cap\omega\parens*{F}\neq \emptyset.
		\label{eq:C-D-dec-X-Y-im-req}
	\end{align}

	The construction is as follows.
	\begin{enumerate}
		\item 
			\label{stp:C-D-dec-X-Y-cstr-1}
			Initially, let $X=\braces*{C}$ and $Y=\braces*{D}$.
		\item
			\label{stp:C-D-dec-X-Y-cstr-2}
			Repeat the following procedure.
			First pick the unique pair $(E,F)\in X\times Y$ according to \Cref{eq:C-D-dec-X-Y-im-req}.

			If $E=F$, then by \Cref{eq:C-D-dec-X-Y-first}, \Cref{eq:C-D-dec-X-Y-im-req} and \Cref{lmm:set-partial-sys-dec},
			it is easy to see that the properties in \Cref{lmm:C-D-dec-X-Y} are satisfied and we can terminate.

			If $E\neq F$, we can write $E=A/a\parallel B$ and $F=A/a'\parallel B'$ for some trace-preserving $A\in \Proc$,
			$a\neq a'\in A$ and $B,B'\in \Sys$.
			From \Cref{eq:C-D-dec-X-Y-im-req}, we have $\omega\parens*{A/a}\cap \omega\parens*{A/a'}\neq \emptyset$.
			Moreover, by \Cref{lmm:partial-cap} and $a\neq a'$, either $a\rightarrow^+ a'$
			or $a'\rightarrow^+ a$.
			Suppose w.l.o.g.\ that $a\rightarrow^+ a'$,
			then let $X'=\parens*{X-\braces*{E}}\cup \braces*{A/b\parallel B:a\rightarrow b}$
			and $Y'=Y$,
			where $X',Y'$ stand for the new values of $X,Y$, respectively.
	\end{enumerate}

	Now we show that \Cref{eq:C-D-dec-X-Y-im-req,eq:C-D-dec-X-Y-first}
	hold in the above construction.
	In Step~\ref{stp:C-D-dec-X-Y-cstr-1}, they obviously hold for the initial $X,Y$.
	Let us verify \Cref{eq:C-D-dec-X-Y-im-req,eq:C-D-dec-X-Y-first} for $X',Y'$
	given that they hold for $X,Y$, in Step~\ref{stp:C-D-dec-X-Y-cstr-2}:
	\begin{itemize}
		\item
			\Cref{eq:C-D-dec-X-Y-first}:
			It simply follows from $X\rightsquigarrow X'$ and $Y\rightsquigarrow^* Y'$.
		\item 
			\Cref{eq:C-D-dec-X-Y-im-req}:
			From above, we have $\omega\parens*{E}=\biguplus_{G\in X'-X}\omega\parens*{G}$.
			From \Cref{eq:C-D-dec-X-Y-im-req} for $X,Y$,
			we also have $\forall E'\in X\cap X',F'\in Y,\omega\parens*{E'}\cap \omega\parens*{F'}=\emptyset$ and
			$\forall F'\neq F\in Y, \omega\parens*{E}\cap \omega\parens*{F'}=\emptyset$.
			Now it suffices to show that $\exists! G\in X'-X, \omega\parens*{G}\cap \omega\parens*{F}\neq \emptyset$.
			Since $a\rightarrow^+ a'$, there exists a unique $b'$ such that $a\rightarrow b'\wedge b'\rightarrow^* a'$.
			Let $G=A/b'\parallel B$,
			then $\omega\parens*{A/b'}\cap \omega\parens*{A/a'}\neq \emptyset$.
			From \Cref{eq:C-D-dec-X-Y-im-req} for $X,Y$,
			$\omega\parens*{B}\cap \omega\parens*{B'}\neq \emptyset$.
			Hence, $\omega\parens*{G}\cap \omega\parens*{F}\neq \emptyset$.
			The uniqueness of $G$ follows from that of $b'$.
	\end{itemize}

	It only remains to show that the above construction terminates.
	Note that in Step~\ref{stp:C-D-dec-X-Y-cstr-2},
	if $X'\neq X$, 
	by construction we have $\forall G\in X'-X$, $\ell\parens*{G}\leq \ell\parens*{F}$.
	A similar statement holds for $Y$.
	By a simple induction, it can be seen that 
	$\forall G\in \parens*{X'-X}\cup \parens*{Y'-Y},\ell\parens*{G}\leq \max\braces*{\ell\parens*{C},\ell\parens*{D}}$.
	Moreover, consider the set 
	\begin{equation*}
		Z=\braces*{G\in S/*:\ell\parens*{G}\leq \max\braces*{\ell\parens*{C},\ell\parens*{D}}}.
	\end{equation*}
	Let us focus on the construction of $X$. Each time when $X'\neq X$ in Step~\ref{stp:C-D-dec-X-Y-cstr-2},
	in $X'-X$ we will visit at least one new element in $Z$.
	Since $Z$ is finite, the number of times when $X'\neq X$ should be finite.
	A similar statement holds for $Y$.
	Hence, the above construction terminates.
\end{proof}

Our last lemma in this section shows that if for two finite sets of partial systems,
the unions of their corresponding maximal paths are the same,
then they can decompose into the same set of partial systems w.r.t.\ the relation $\rightsquigarrow$.

\begin{lemma}
	\label{lmm:X-Y-dec-Z}
	Given a trace-preserving distributed quantum system $S\in \Sys$,
	for any two finite sets $X,Y\subseteq S/*$ of partial systems,
	if $\biguplus_{C\in X}\omega\parens*{C}=\biguplus_{C\in Y}\omega\parens*{C}$,
	then there exists a finite set $Z\subseteq S/*$ of partial system such that
	$X\rightsquigarrow^* Z$ and $Y\rightsquigarrow^* Z$.
\end{lemma}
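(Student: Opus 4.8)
The plan is to build a common $\rightsquigarrow^*$-refinement $Z$ of $X$ and $Y$ by repeatedly ``de-conflicting'' pairs of blocks via \Cref{lmm:C-D-dec-X-Y}, and to use a monotone counting argument for termination. As setup, note that writing $\biguplus$ forces the members of $X$ (and of $Y$) to have pairwise disjoint $\omega$-images, so $X$ and $Y$ are two partitions of the common set $\Theta := \biguplus_{C \in X}\omega(C) = \biguplus_{D \in Y}\omega(D)$ into cylinder sets. A single $\rightsquigarrow$-step only replaces one member by finer pieces (with strictly smaller, still pairwise disjoint, $\omega$-images) and leaves the others fixed; so, together with \Cref{lmm:set-partial-sys-dec}, modifying $X$ by any $\rightsquigarrow^*$-sequence that touches a single member keeps $X$ a partition of $\Theta$ and gives $X \rightsquigarrow^* (\text{result})$, and likewise for $Y$.

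I would then run the following loop. Call $(C,D) \in X \times Y$ \emph{unresolved} if $\omega(C) \cap \omega(D) \neq \emptyset$ and $\omega(C) \neq \omega(D)$. While an unresolved pair exists, apply \Cref{lmm:C-D-dec-X-Y} to it to obtain $\{C\} \rightsquigarrow^* X_{C,D}$ and $\{D\} \rightsquigarrow^* Y_{C,D}$ with $X_{C,D} \cap Y_{C,D} = \{E\}$ and $\omega(E) = \omega(C) \cap \omega(D)$, then replace $C$ by the members of $X_{C,D}$ in $X$ and $D$ by the members of $Y_{C,D}$ in $Y$; by the setup paragraph the new $X, Y$ are still partitions of $\Theta$, reachable by $\rightsquigarrow^*$ from the old ones. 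For termination I would track $m(X,Y) := $ the number of cylinder sets that are a block of $X$ \emph{and} a block of $Y$. Each round should strictly increase $m$: the set $\omega(E)$ becomes a common block, yet was not one before (it is a proper subset of the old block $\omega(C)$, so was not a block of the old $X$), while every previously common block is disjoint from $\omega(C)$ and $\omega(D)$ (blocks of a partition are pairwise disjoint, and $\omega(C) \neq \omega(D)$ for an unresolved pair) and hence survives untouched in both new partitions. Since \Cref{lmm:C-D-dec-X-Y}'s construction only produces partial systems of $\ell$-value at most $M := \max\bigl(\{\ell(C) : C \in X\}\cup\{\ell(D) : D \in Y\}\bigr)$, and the set $\{G \in S/* : \ell(G)\le M\}$ is finite (this is exactly the finiteness used in the proof of \Cref{lmm:C-D-dec-X-Y}), $m$ is bounded, so the loop halts.

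At termination every overlapping pair $(C,D) \in X \times Y$ has $\omega(C) = \omega(D)$, which (using disjointness of blocks) yields a bijection $\phi : X \to Y$ with $\omega(C) = \omega(\phi(C))$. For each matched pair, \Cref{lmm:C-D-dec-X-Y} gives a common $E_C$ with $\omega(E_C) = \omega(C)$; since $\{C\}\rightsquigarrow^* X_{C,\phi(C)}$ and (by \Cref{lmm:set-partial-sys-dec}) $\biguplus_{G \in X_{C,\phi(C)}}\omega(G) = \omega(C) = \omega(E_C)$ with all images nonempty, necessarily $X_{C,\phi(C)} = \{E_C\}$, so $\{C\}\rightsquigarrow^*\{E_C\}$, and symmetrically $\{\phi(C)\}\rightsquigarrow^*\{E_C\}$. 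Setting $Z := \{E_C : C \in X\}$ and refining one member at a time, $X \rightsquigarrow^* Z$ and $Y \rightsquigarrow^* Z$ (chaining these refinements after the loop's reductions, which already take the original $X$ and $Y$ to their final forms).

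The step I expect to be the main obstacle is the termination bookkeeping in the second paragraph: one must verify carefully that iterating \Cref{lmm:C-D-dec-X-Y} never produces partial systems outside the fixed finite pool of $\ell$-value at most $M$ (so that $m$ cannot grow without bound), and that every round genuinely adds a new common block. Both points rest on the pairwise disjointness of the blocks of a partition and on the $\ell$-bound already established inside the proof of \Cref{lmm:C-D-dec-X-Y}; a secondary subtlety is checking that ``touching a single member'' lifts a $\rightsquigarrow$-step on $\{C\}$ to a $\rightsquigarrow$-step on all of $X$, which is immediate once one observes the new pieces have nonempty $\omega$-images disjoint from everything already present.
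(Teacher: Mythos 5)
Your proposal is correct and follows essentially the same route as the paper: both repeatedly apply \Cref{lmm:C-D-dec-X-Y} to an overlapping pair of blocks and bound the iteration by the finite pool $\braces*{G\in S/*:\ell\parens*{G}\leq M}$, the only cosmetic differences being that the paper's loop guard is $C\neq D$ (so it terminates directly with $W=Z$, no final matching pass) and its termination measure counts newly visited pool elements rather than common blocks. One small repair to your increase-of-$m$ argument: when $\omega\parens*{C}\subsetneq\omega\parens*{D}$ the set $\omega\parens*{E}=\omega\parens*{C}\cap\omega\parens*{D}$ equals $\omega\parens*{C}$ and is \emph{not} a proper subset of it, but it is then a proper nonempty subset of $\omega\parens*{D}$ and hence not a block of the old $Y$, so it still fails to be a common block and $m$ still strictly increases.
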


\begin{proof}
	We construct two finite sets $W,Z$ step by step such that finally $W=Z$.
	In the construction, we require the intermediate $W,Z$ to satisfy:
	\begin{equation}
		X\rightsquigarrow^* W\wedge Y\rightsquigarrow^* Z.
		\label{eq:X-Y-dec-Z-first}
	\end{equation}

	The construction is as follows.
	\begin{enumerate}
		\item 
			\label{stp:X-Y-dec-Z-cstr-1}
			Initially, let $W=X$ and $Z=Y$.
		\item
			\label{stp:X-Y-dec-Z-cstr-2}
			Repeat the following procedure.
			First pick $C\in W$ and $D\in Z$ such that 
			\begin{equation}
				\omega\parens*{C}\cap \omega\parens*{D}\neq \emptyset\wedge C\neq D.\label{eq:X-Y-dec-Z-choose-C-D}
			\end{equation}
			If such $C,D$ do not exist,
			then by \Cref{eq:X-Y-dec-Z-first} and \Cref{lmm:set-partial-sys-dec},
			it is easy to see that $W=Z$ and $Z$ satisfies the properties in \Cref{lmm:X-Y-dec-Z},
			and we can terminate.

			Otherwise, by \Cref{lmm:C-D-dec-X-Y},
			there exist $X_1,X_2\subseteq S/*$ such that
			\begin{enumerate}
				\item 
					$\braces*{C}\rightsquigarrow^* X_1$ and $\braces*{D}\rightsquigarrow^* X_2$.
				\item
					$X_1\cap X_2=\braces*{E}$ for some $E\in S/*$ and $\omega\parens*{E}=\omega\parens*{C}\cap \omega\parens*{D}$.
			\end{enumerate}
			Let $W'=\parens*{W-\braces*{C}}\cup X_1$ and $Z'=\parens*{Z-\braces*{D}}\cup X_2$,
			where $W',Z'$ stand for the new values of $W,Z$, respectively.
	\end{enumerate}
	
	Now we show that \Cref{eq:X-Y-dec-Z-first} holds in the above construction.
	In Step~\ref{stp:X-Y-dec-Z-cstr-1}, it obviously holds for the initial $W,Z$.
	Let us verify \Cref{eq:X-Y-dec-Z-first} for $W',Z'$ given that it holds for $W,Z$
	in Step~\ref{stp:X-Y-dec-Z-cstr-2}.
	As $\braces*{C}\rightsquigarrow^* X_1$ and $\braces*{D}\rightsquigarrow^* X_2$,
	we have $W\rightsquigarrow^* W'$ and $Z\rightsquigarrow^* Z'$ by \Cref{def:rel-squiar}.
	The conclusion immediately follows.

	It only remains to show that the above construction terminates.
	Note that in Step~\ref{stp:X-Y-dec-Z-cstr-2}, 
	we have $\forall F\in X_1\cup X_2, \ell\parens*{F}\leq \max\braces*{\ell\parens*{C},\ell\parens*{D}}$,
	according to the proof of \Cref{lmm:C-D-dec-X-Y}.
	Let $r=\max_{F\in X\cup Y}\ell\parens*{F}$.
	By a simple induction, it can be seen that
	$\forall F\in \parens*{W'-W}\cup \parens*{Z'-Z}, \ell\parens*{F}\leq r$.
	Moreover, consider the set
	\begin{equation*}
		V=\braces*{F\in S/*:\ell\parens*{F}\leq r}.
	\end{equation*}
	Let us focus on the construction of $W$.
	Each time when $W'\neq W$ in Step~\ref{stp:X-Y-dec-Z-cstr-2},
	in $W'-W$ we will visit at least one new element in $V$.
	Since $V$ is finite, 
	the number of times when $W'\neq W$ should be finite.
	A similar statement holds for $Z$.
	Hence, the above construction terminates.
\end{proof}

\subsection{Proof of \Cref{lmm:calS-semiring}}
\label{sub:proof_of_lmm_calS_semiring}

In the following we prove \Cref{lmm:calS-semiring} by verifying every condition of semiring in \Cref{def:semi-ring}.

\begin{proof}[Proof of \Cref{lmm:calS-semiring}]
	Let us verify the following properties of $\calS$:
	\begin{enumerate}
		\item 
			$\emptyset \in \calS$.

			This is from the definition of $\calS$.
		\item
			For any $X,Y\in \calS$, 
			the difference $X-Y$ is a finite disjoint union of sets in $\calS$.

			W.l.o.g, suppose that $X\cap Y\neq \emptyset$, $X=\omega\parens*{C}$ and $Y=\omega\parens*{D}$ for some $C,D\in S/*$.
			By \Cref{lmm:C-D-dec-X-Y}, there exist two finite set $W_C,W_D\subseteq S/*$
			such that 
			\begin{enumerate}
				\item 
					$\braces*{C}\rightsquigarrow^* W_C$ and $\braces*{D}\rightsquigarrow^* W_D$;
				\item
					$W_C\cap W_D=\braces*{E}$ for some $E\in S/*$ and $\omega\parens*{E}=\omega\parens*{C}\cap \omega\parens*{D}$.
			\end{enumerate}
			From \Cref{lmm:set-partial-sys-dec}, we further have $\omega\parens*{C}=\biguplus_{F\in W_C}\omega\parens*{F}$.
			Hence,
			\begin{equation*}
				X-Y=\omega\parens*{C}-\omega\parens*{C}\cap\omega\parens*{D}=\biguplus_{F\in W_C-\braces*{E}}\omega\parens*{F},
			\end{equation*}
			which is a finite disjoint union of sets in $\calS$.
		\item
			For any $X,Y\in \calS$, $X\cap Y\in \calS$.
			
			W.l.o.g, suppose that $X\cap Y\neq \emptyset$, $X=\omega\parens*{C}$ and $Y=\omega\parens*{D}$ for some $C,D\in S/*$.
			The desired property immediately follows from \Cref{lmm:C-D-dec-X-Y}.
	\end{enumerate}
	By \Cref{def:semi-ring}, $\calS$ is a semiring.
\end{proof}

\subsection{Technical Lemmas about Semiring $\calS$}
\label{sub:technical_lemmas_about_semiring_cals_}

In this section we prove two lemmas about the semiring $\calS$,
which are useful in deriving the \mbox{$\sigma$-subadditivity} of the function $\mu_{\rho\to S}$
when we prove \Cref{lmm:mu-is-pr-meas} in \Cref{sub:proofs_of_mu_is_pr_meas}.
We starts with a metric $d$ on the set of maximal paths $\omega\parens*{A}$ of a quantum process $A$,
as follows.

\begin{definition}[Metric $d$]
	For a quantum process $A$, we define a metric $d:\omega\parens*{A}\times \omega\parens*{A}\rightarrow \R_{\geq 0}$
	such that for any $B,C\in \omega\parens*{A}$,
	$d\parens*{B,C}=2^{-\abs*{B\cap C}}$ if $B\neq C$;
	and $d\parens*{B,C}=0$ otherwise.
\end{definition}

It is easy to verify $d$ is indeed a metric.
For a quantum process $A$, $\parens*{\omega\parens*{A},d}$ forms a metric space.
Note that for any partial process $B\in A/*$,
the set $\omega\parens*{B}$ is open,
because $\omega\parens*{B}=\braces*{C\in \omega\parens*{A}:d\parens*{C,C_B}<2^{-\ell\parens*{B}-2}}$
is an open ball of radius $2^{-\ell\parens*{B}-2}$ centered at $C_B$,
for any $C_B\in \omega\parens*{B}$,
where $\ell\parens*{\cdot}$ is defined in \Cref{def:func-ell}.
Similarly, 
the set $\omega\parens*{B}$ is also closed,
because $\omega\parens*{B}=\braces*{C\in \omega\parens*{A}:d\parens*{C,C_B}\leq 2^{-\ell\parens*{B}-1}}$
is a closed ball of radius $2^{-\ell\parens*{B}-1}$ centered at $C_B$,
for any $C_B\in \omega\parens*{B}$.

The following lemma shows the compactness of the metric space $\parens*{\omega\parens*{A},d}$.

\begin{lemma}
	The metric space $\parens*{\omega\parens*{A},d}$ is compact.
\end{lemma}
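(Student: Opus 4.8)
The plan is to use the fact that, for a metric space, compactness coincides with being complete and totally bounded. The one structural feature of $A$ I will need is that the tree is \emph{finitely branching}: every vertex has only finitely many $\rightarrow$-successors (a quantum measurement has finitely many outcomes, so this is part of the definition of a quantum process; see \Cref{def:quantum-proc}). By induction on depth this makes every level of $A$ finite, hence the set of vertices of depth at most $m$ finite, for every $m\in\N$. This finiteness is genuinely needed: if some vertex had infinitely many children that are all leaves, $\omega(A)$ would contain infinitely many points pairwise at distance $1/2$ and would fail to be even totally bounded.

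For \emph{total boundedness}, fix $m\in\N$ and let $F_m$ be the (finite) set of all vertices of $A$ at depth exactly $m$ together with all leaves of $A$ at depth less than $m$. Every maximal path $C$ meets $F_m$ in exactly one vertex $v(C)$: if $C$ reaches depth $m$ then $v(C)$ is its unique depth-$m$ vertex, and otherwise $C$ ends at a leaf of depth $<m$, which is then $v(C)$. Any two maximal paths sharing the same value of $v(C)$ contain the entire (unique) rooted path down to that vertex, hence agree on at least their first $m+1$ vertices when $\mathrm{depth}(v(C))=m$, and coincide entirely when $\mathrm{depth}(v(C))<m$; in either case they lie within distance $2^{-m}$ of each other. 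Choosing one representative from each of the finitely many nonempty classes gives a finite $2^{-m}$-net of $\omega(A)$.

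For \emph{completeness}, let $(B^{(n)})_n$ be Cauchy in $\omega(A)$. If it is eventually constant it converges. Otherwise, for each $m$ the Cauchy condition forces all sufficiently late $B^{(n)}$ to agree on their first $m+1$ vertices; these common prefixes are nested and have lengths tending to infinity, so their union is an infinite rooted path $B$. An infinite rooted path lies at unbounded depth and so cannot be properly contained in any other path, hence is maximal and $B\in\omega(A)$. Since for each $m$ all sufficiently late $B^{(n)}$ agree with $B$ on their first $m+1$ vertices, $d(B^{(n)},B)\le 2^{-(m+1)}$ eventually, so $B^{(n)}\to B$. Being complete and totally bounded, $(\omega(A),d)$ is compact.

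The only delicate point — more bookkeeping than substance — is the treatment of maximal paths of \emph{finite} length, which are isolated points of $\omega(A)$ (if $B$ has $\ell$ vertices then $d(B,C)\ge 2^{-(\ell-1)}$ for $C\ne B$): one must check that the leaves-of-depth-$<m$ part of $F_m$ accounts for all of them in the net, and that in the completeness argument a Cauchy sequence accumulating at such a $B$ is forced to be eventually constant. If one prefers to argue via sequential compactness, an essentially equivalent route is a König-style diagonalisation: repeatedly pass to an infinite sub-family of the given sequence that either stabilises on a single finite maximal path (done) or agrees on one further vertex — possible at each step precisely because of finite branching — and then diagonalise to obtain the convergent subsequence.
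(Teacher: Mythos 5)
Your proof is correct and follows essentially the same route as the paper's: establishing completeness and total boundedness of $(\omega(A),d)$, with total boundedness obtained from the finiteness of each level of the tree and completeness from the stabilisation of prefixes of a Cauchy sequence. Your explicit treatment of finite maximal paths (leaves) and of the finite-branching hypothesis is in fact slightly more careful than the paper's argument, which tacitly assumes all maximal paths are infinite; this is a welcome refinement rather than a different approach.
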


\begin{proof}
	It suffices to show that $\parens*{\omega\parens*{A},d}$ is complete and totally bounded, as follows.
	\begin{enumerate}
		\item 
			$\parens*{\omega\parens*{A},d}$ is complete;
			that is, every Cauchy sequence in $\omega\parens*{A}$ converges in $\omega\parens*{A}$.

			For any $B\in \omega\parens*{A}$, let $a_k\parens*{B}\in A$ be such that $\rt\parens*{A}\rightarrow^k a_k\parens*{B}$.
			Consider a Cauchy sequence $B_1,B_2,\ldots\in \omega\parens*{A}$;
			that is, $\forall \epsilon>0, \exists K\in \N$
			such that $\forall k,l>K$, $d\parens*{B_{k},B_l}<\epsilon$.
			It is easy to verify by definition that $\forall k\in \N,\exists K\in \N, b_k\in A, \forall l>K,a_k\parens*{B_l}=b_k$.
			Let $B\in \omega\parens*{A}$ be such that $a_k\parens*{B}=b_k$ for $k\in \N$.
			Then, $\lim_{k\to\infty} B_k=B$.
		\item
			$\parens*{\omega\parens*{A},d}$ is totally bounded;
			that is, $\forall \epsilon >0$, $\omega\parens*{A}$ can be covered by finitely many open balls of radius $\epsilon$.

			For any $\epsilon>0$, let $m\in\N$ be such that $2^{-m}< \epsilon$.
			For any $a\in A$ with $\rt\parens*{A}\rightarrow^m a$,
			let us pick $B_a\in \omega\parens*{A}$ such that $a\in B_a$.
			There are finitely many such $B_a$.
			Let $C_a=\braces*{B\in \omega\parens*{A}:d\parens*{B,B_a}<\epsilon}$ be the open of radius $\epsilon$ centered at $B_a$,
			then it is easy to see that 
			\begin{equation*}
				\omega\parens*{A}\subseteq \bigcup_{a\in A:\rt\parens*{A}\rightarrow^m a}C_a.
			\end{equation*}
	\end{enumerate}
\end{proof}

Recall that for any quantum system $C=A\parallel B$, $\omega\parens*{C}=\omega\parens*{A}\times \omega\parens*{B}$.
For any trace-preserving quantum system $S$, we can consider the product topology on $\omega\parens*{S}$.
As $\omega\parens*{S}$ is a product of compact spaces,
by Tychonoff theorem, it is also compact.
In this case, for any partial system $C\in S/*$, $\omega\parens*{C}$ as a finite product of clopen sets is also clopen.
Consequently, $\omega\parens*{C}$ is also compact.

Finally, we can present the two technical lemmas in this section.

\begin{lemma}
	\label{lmm:simga-subadd-to-fin}
	For the semiring $\calS$, given $X\subseteq \bigcup_{k\in \N}X_k$ with $X,X_k\in \calS$,
	there exists $K\in \N$ such that $X\subseteq \bigcup_{k\in [K]}X_k$.
\end{lemma}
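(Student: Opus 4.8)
The plan is to exploit the topological structure established just above the lemma: for a trace-preserving system $S$, the space $\omega\parens*{S}$ is compact (Tychonoff, since it is a finite product of the compact spaces $\parens*{\omega\parens*{A},d}$), and every $\omega\parens*{C}$ for a partial system $C\in S/*$ is clopen, hence in particular compact. The elements of $\calS$ are exactly sets of the form $\omega\parens*{C}$ for $C\in S/*$, together with $\emptyset$; so every member of $\calS$ is both open and compact, as a subset of $\omega\parens*{S}$.

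First I would reduce to the non-degenerate case: if $X=\emptyset$ the claim is trivial with $K=0$, so assume $X=\omega\parens*{C}$ for some partial system $C$, and discard any $X_k=\emptyset$ from the cover (re-indexing), so that each $X_k=\omega\parens*{C_k}$ is open. Then $\braces*{X_k}_{k\in\N}$ is an open cover of the compact set $X$. By compactness of $X$, there is a finite subcover: finitely many indices $k_1,\dots,k_n$ with $X\subseteq \bigcup_{j=1}^n X_{k_j}$. Taking $K=\max\braces*{k_1,\dots,k_n}$ gives $X\subseteq \bigcup_{k\in[K]} X_k$, as desired. (Here $[K]$ denotes $\braces*{1,\dots,K}$, consistent with the paper's usage.)

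The only real content is making sure the two topological facts invoked are legitimately available. The compactness of each $\parens*{\omega\parens*{A},d}$ is the lemma proved immediately before this one; the compactness of $\omega\parens*{S}$ and the clopen-ness (hence compactness) of each $\omega\parens*{C}$ are stated in the paragraph preceding the lemma (``As $\omega\parens*{S}$ is a product of compact spaces, by Tychonoff theorem, it is also compact\ldots $\omega\parens*{C}$ as a finite product of clopen sets is also clopen. Consequently, $\omega\parens*{C}$ is also compact''). So I would simply cite those, and the proof is a one-line application of the Heine--Borel-type property: a compact set covered by open sets is covered by finitely many of them.

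The main obstacle, such as it is, is purely bookkeeping: the lemma as stated allows $X$ or some $X_k$ to be $\emptyset$ (which is in $\calS$ but is not of the form $\omega\parens*{C}$ in the same way), so one must handle the empty set separately and be careful that ``open subset of $\omega\parens*{S}$'' is the right notion — i.e., that $\omega\parens*{C}$ is open in the product topology on $\omega\parens*{S}$, not merely in some subspace. Once that is pinned down, compactness does all the work and there is no calculation to perform.
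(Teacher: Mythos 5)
Your proposal is correct and is essentially identical to the paper's own proof: both reduce to $X=\omega\parens*{C}$, invoke the compactness of $\omega\parens*{C}$ and the openness of each $\omega\parens*{C_k}$ established in the preceding paragraph, and extract a finite subcover. Your extra care with the empty-set cases and with taking $K$ as the maximum index is just a more explicit version of the paper's ``w.l.o.g.'' and ``the conclusion immediately follows.''
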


\begin{proof}
	W.l.o.g., suppose that $X=\omega\parens*{C}$ and $X_k=\omega\parens*{C_k}$ for some $C,C_k\in S/*$.
	As shown above, $\omega\parens*{C}$ as a compact set has an open cover $\bigcup_{k\in \N}\omega\parens*{C_k}$.
	By the definition of compactness, there exists a finite subcover
	and the conclusion immediately follows.
\end{proof}

\begin{lemma}
	\label{lmm:subadd-to-add}
	For the semiring $\calS$, given $X\subseteq \bigcup_{k\in [K]}X_k$ with $X,X_k\in \calS$
	and $K\in \N$, there exists finite sets $P_k\subseteq \calS$ for $k\in [K]$,
	such that $\forall k\in [K], \biguplus_{Y\in P_k}Y\subseteq X_k$ and $X=\biguplus_{k\in [K]}\biguplus_{Y\in P_k}Y$.
\end{lemma}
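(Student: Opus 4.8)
The plan is to proceed by induction on $K$. The base case $K=1$ is immediate: the hypothesis becomes $X\subseteq X_1$ with $X\in\calS$, so one takes $P_1=\{X\}$, and both required properties hold trivially.

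For the inductive step I would first peel off the portion of $X$ lying in $X_K$. By closure of a semiring under intersection (\Cref{def:semi-ring}), $X\cap X_K\in\calS$, so I can set $P_K:=\{X\cap X_K\}$, which satisfies $\biguplus_{Y\in P_K}Y=X\cap X_K\subseteq X_K$. It remains to handle $X':=X\setminus X_K$. Since $X'\subseteq(\bigcup_{k\in[K]}X_k)\setminus X_K\subseteq\bigcup_{k\in[K-1]}X_k$, the set $X'$ is covered by $X_1,\dots,X_{K-1}$, but it is \emph{not} in general a member of $\calS$. To fix this I would invoke the semiring difference property to write $X'=X\setminus(X\cap X_K)=\biguplus_{j=1}^{m}Z_j$ with finitely many $Z_j\in\calS$, noting $Z_j\subseteq X'\subseteq\bigcup_{k\in[K-1]}X_k$ for each $j$.

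Now the induction hypothesis applies to each $Z_j$ separately — this is the crucial point, since the hypothesis needs the \emph{covered} set to lie in $\calS$, which is why one recurses on the $Z_j$ rather than on $X'$ itself: it yields finite families $P_k^{(j)}\subseteq\calS$, $k\in[K-1]$, with $\biguplus_{Y\in P_k^{(j)}}Y\subseteq X_k$ and $Z_j=\biguplus_{k\in[K-1]}\biguplus_{Y\in P_k^{(j)}}Y$. Setting $P_k:=\bigcup_{j=1}^{m}P_k^{(j)}$ for $k\in[K-1]$, and using that the $Z_j$ are pairwise disjoint, all sets collected into the $P_k$ remain pairwise disjoint, so $\biguplus_{Y\in P_k}Y\subseteq X_k$ and $\biguplus_{k\in[K-1]}\biguplus_{Y\in P_k}Y=\biguplus_{j=1}^{m}Z_j=X'$. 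Combined with $X=X'\uplus(X\cap X_K)$ this gives $X=\biguplus_{k\in[K]}\biguplus_{Y\in P_k}Y$, closing the induction.

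The only subtlety — and the step I expect to need the most care — is exactly the one highlighted: the recursion must be run on the $\calS$-pieces $Z_j$ obtained from the difference $X\setminus X_K$ rather than on $X'$, after which assembling the $P_k$ and verifying disjointness is routine because the $Z_j$ are disjoint. This lemma, together with \Cref{lmm:simga-subadd-to-fin}, is what will be used to derive the $\sigma$-subadditivity of $\mu_{\rho\to S}$ in \Cref{sub:proofs_of_mu_is_pr_meas}.
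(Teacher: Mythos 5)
Your proof is correct and follows essentially the same route as the paper's: both disjointify the finite cover and use the semiring difference property to break the resulting pieces into members of $\calS$ (the paper writes $Z_k=X\cap X_k-\bigcup_{j<k}X_j$ directly, while you reach the same decomposition by peeling off $X_K$ and inducting on $K$). If anything, your explicit induction makes precise the step the paper leaves implicit, namely that the iterated difference $X\cap X_k-\bigcup_{j<k}X_j$ is still a finite disjoint union of sets in $\calS$.
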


\begin{proof}
	For $k\in [K]$, let $Z_k=X\cap X_k-\bigcup_{j<k}X_j$.
	In this case, $X=\biguplus_{k\in [K]}Z_k$,
	where $Z_k\subseteq X_k$.
	As $\calS$ is a semiring, $Z_k$ is a finite disjoint union of sets in $\calS$,
	and the conclusion immediately follows.
\end{proof}

They above two lemmas together enable one to derive the $\sigma$-subadditivity of a function $\mu$ on $\calS$
from the additivity of $\mu$.

\subsection{Proof of \Cref{lmm:mu-is-pr-meas}}
\label{sub:proofs_of_mu_is_pr_meas}

In this section we prove \Cref{lmm:mu-is-pr-meas},
using the conditions of the system dynamics in \Cref{def:sys-dyn}
and lemmas proved above.

\begin{proof}[Proof of \Cref{lmm:mu-is-pr-meas}]
	Given an initial state $\rho$, define a function $\mu:\calS\rightarrow \bracks*{0,1}$ such that
	$\mu\parens*{\emptyset}=0$, and for $C\in S/*$,
	\begin{equation*}
		\mu\circ \omega\parens*{C}=\lim_{t\to\infty} \tr\parens*{\Bracks*{C}\parens*{t}\parens*{\rho}}.
	\end{equation*}
	The limit exists because 
	$\tr\parens*{\Bracks*{C}\parens*{t}\parens*{\rho}}$
	is non-increasing with respect to $t$, according to \Cref{den:sys-dyn-tr};
	and bounded below by $0$, because $\Bracks*{C}\parens*{t}\in \QO\parens*{\calH_{C}}$ according to \Cref{def:sys-dyn}.
	In the following we show that $\mu$ satisfies the conditions for \Cref{lmm:cara ext},
	and therefore can be uniquely extended to the desired probability measure $\mu_{\rho\to S}$.
	To this end, we only need to verify the following properties:
	\begin{enumerate}
		\item 
			$\mu\parens*{\emptyset}=0$ and $\mu\circ \omega\parens*{S}=1$.

			From the definition of $\mu$, it is obvious that $\mu\parens*{\emptyset}=0$.
			From \Cref{den:sys-dyn-init-cnd,den:sys-dyn-tr},
			it is also easy to obtain $\mu\circ \omega\parens*{S}=\tr\parens*{\Bracks*{S}\parens*{0}\parens*{\rho}}=\tr\parens*{\rho}=1$.
		\item
			\label{stp:mu-is-meas-prf-squiar-cons}
			For any two finite sets $P,Q\subseteq S/*$ with $P\rightsquigarrow^* Q$,
			we have $\sum_{C\in P}\mu\circ\omega\parens*{C}=\sum_{C\in Q}\mu\circ\omega\parens*{C}$.

			By a simple induction on the transitive closure $\rightsquigarrow^*$,
			it suffices to prove this property with $\rightsquigarrow^*$ replaced by $\rightsquigarrow$.
			In this case, since $P\rightsquigarrow Q$, by \Cref{def:rel-squiar},
			we have $P-Q=\braces*{A/a\parallel B}$ for some trace-preserving $A\in \Proc$,
			$a\in A$ and $B\in \Sys$;
			and $Q-P=\braces*{A/b\parallel B:a\rightarrow b}$.
			Now it suffices to show that 
			\begin{equation}
				\mu\circ\omega\parens*{A/a\parallel B}=\sum_{b:a\rightarrow b}\mu\circ\omega\parens*{A/b\parallel B}.
				\label{eq:mu-br-sum}
			\end{equation}
			Let us choose any sufficiently large
			$t$ such that $t\geq \max_{b:a\rightarrow b}\max T\bracks*{b}$ 
			and $A/b\parallel B$ is trace-preserving after time $t$.
			By \Cref{den:sys-dyn-tr} and the definition of $\mu$, we have 
			\begin{align*}
				\mu\circ\omega\parens*{A/a\parallel B}&=\tr\parens*{\Bracks*{A/a\parallel B}\parens*{t}\parens*{\rho}},\\
				\mu\circ\omega\parens*{A/b\parallel B}&=\tr\parens*{\Bracks*{A/b\parallel B}\parens*{t}\parens*{\rho}},
			\end{align*}
			for all $b$ with $a\rightarrow b$.
			The above together with \Cref{den:sys-dyn-br} yield \Cref{eq:mu-br-sum}.
		\item
			\label{stp:mu-is-meas-prf-additive}
			$\mu$ is additive; that is,
			$\mu\parens*{X}=\sum_{k=1}^K \mu\parens*{X_k}$
			for any $X\in \calS$ and finitely many $X_1, X_2,\ldots,X_K\in \calS$ with $X=\biguplus_{k=1}^K X_k$ and $K\in \N$.

			W.l.o.g., suppose that
			$X=\omega\parens*{C}$ and $X_k=\omega\parens*{C_k}$ for some $C,C_k\in S/*$ and $k\in [K]$.
			Let $P=\braces*{C_k}_{k\in [K]}\subseteq S/*$.
			Since $\omega\parens*{C}=\biguplus_{D\in P}\omega\parens*{D}$,
			using \Cref{lmm:X-Y-dec-Z}, there exists a finite set $Q\subseteq S/*$ such that
			$\braces*{C}\rightsquigarrow^* Q$ and $P\rightsquigarrow^* Q$.
			By Property~\ref{stp:mu-is-meas-prf-squiar-cons} just proved above,
			we have $\mu\circ\omega\parens*{C}=\sum_{D\in Q}\mu\circ\omega\parens*{D}$
			and $\sum_{D\in P}\mu\circ\omega\parens*{D}=\sum_{D\in Q}\mu\circ\omega\parens*{D}$.
			Consequently, $\mu\parens*{X}=\mu\circ\omega\parens*{C}=\sum_{k\in [K]}\mu\circ\omega\parens*{C_k}=\sum_{k\in [K]}\mu\parens*{X_k}$.

		\item
			$\mu$ is $\sigma$-subadditive;
			that is,
			$\mu\parens*{X}\leq \sum_{k\in \N} \mu\parens*{X_k}$
			for any $X\in \calS$ and countably many $X_1, X_2,\ldots\in \calS$ with $X\subseteq \bigcup_{k\in \N} X_k$.

			Using \Cref{lmm:simga-subadd-to-fin,lmm:subadd-to-add},
			there exist $P_1,P_2,\ldots, P_K\subseteq \calS$ for some $K\in \N$ such that
			$\forall k\in [K], \biguplus_{Y\in P_k}Y\subseteq X_k$ and $X=\biguplus_{k\in [K]}\biguplus_{Y\in P_k} Y$.
			Since $\calS$ is a semiring, for any $k\in [K]$, there exists a finite set $Q_k\subseteq \calS$
			such that $X_k=\biguplus_{Y\in P_k} Y\uplus \biguplus_{Z\in Q_k}Z$.
			By Properties~\ref{stp:mu-is-meas-prf-additive} just proved above,
			this implies $\sum_{Y\in P_k}\mu\parens*{Y}\leq \sum_{Y\in P_k}\mu\parens*{Y}+\sum_{Z\in Q_k}\mu\parens*{Z}=\mu\parens*{X_k}$.
			Finally, we have 
			\begin{equation*}
				\mu\parens*{X}=\sum_{k\in [K]}\sum_{Y\in P_k}\mu\parens*{Y}\leq \sum_{k\in [K]}\mu\parens*{X_k}\leq \sum_{k\in\N}\mu\parens*{X_k}.
			\end{equation*}
	\end{enumerate}
\end{proof}

\subsection{Proofs of \Cref{thm:local-ins,thm:local-atom}}
\label{sub:proof_of_thm_local_ins_atom}

In this section, we are finally ready to prove the main \Cref{thm:local-ins,thm:local-atom}.
Before proving \Cref{thm:local-ins}, we will need the following lemma,
which shows that for any $t\in \R_{\geq 0}$,
a partial system can be decomposed w.r.t.\ the relation $\rightsquigarrow$,
such that after the decomposition, any partial system in the set has no branching
when restricted to the time region $[0,t]$.

\begin{lemma}
	\label{lmm:C-exp-to-after-t}
	For any trace-preserving distributed quantum system $S\in \Sys$,
	partial system $C\in S/*$, and $t\in \R_{\geq 0}$,
	there exists a finite set $P\subseteq S/*$ such that
	$\braces*{C}\rightsquigarrow^* P$ and for any $D\in P$,
	$D\restriction_{[0,t]}$ has no branching.
\end{lemma}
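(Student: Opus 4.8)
The plan is to prove the lemma by iterating the relation $\rightsquigarrow$ so as to ``push the fixed prefix of every component process of $C$ past all branchings that are visible inside $[0,t]$''. Write $S=A_1\parallel\dots\parallel A_n$ with each $A_j\in\Proc$ trace-preserving, so that a partial system $C\in S/*$ has the form $C=A_1/a_1\parallel\dots\parallel A_n/a_n$. The essential finiteness input is \Cref{den:proc-fin}: for each $j$ and the given $t$, the set $A_j\restriction_{[0,t]}$ is finite; moreover, since $\rightarrow$ strictly decreases time intervals, $A_j\restriction_{[0,t]}$ is closed under passing to $\rightarrow^*$-ancestors, hence is a \emph{finite} rooted subtree of $A_j$ containing $\rt(A_j)$. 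The second observation is that, for a partial process $A_j/b$, the set $(A_j/b)\restriction_{[0,t]}$ has no branching exactly when the $[0,t]$-restriction of the subtree of $A_j$ rooted at $b$ is totally ordered; in particular this holds automatically whenever $b\notin A_j\restriction_{[0,t]}$, since then that whole subtree starts after $t$ and only a prefix of the rooted path to $b$ survives.

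First I would process the $n$ components one at a time and compose the resulting $\rightsquigarrow$-steps. Suppose $D\in S/*$ is a partial system whose $j$-th component $A_j/b$ is such that $(A_j/b)\restriction_{[0,t]}$ still has a branching. Let $a$ be the node reached from $b$ by following $\rightarrow$-edges through nodes that have exactly one child; then $A_j/a=A_j/b$, and $a$ must have at least two children --- otherwise the subtree rooted at $b$ would be a single path and $(A_j/b)\restriction_{[0,t]}$ would have no branching. One also checks that $a$, and the entire path from $b$ down to $a$, lies in $A_j\restriction_{[0,t]}$, for the same reason. Now apply $\rightsquigarrow$ at $a$ inside $D$, replacing $D=A_j/a\parallel B$ (with $B$ the parallel composition of the other components) by the finite set $\{A_j/c\parallel B : a\rightarrow c\}$; this is a legal step of \Cref{def:rel-squiar}, and the new partial systems are genuinely distinct from those already in play because, by \Cref{lmm:partial-br}, the sets $\omega(\cdot)$ attached to the members of the evolving family stay pairwise disjoint. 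Recurse on each $A_j/c\parallel B$ with $c\in A_j\restriction_{[0,t]}$; the children $c$ with $c\notin A_j\restriction_{[0,t]}$ need no further work, by the observation above. After the $j$-th component is ``exhausted'' in every system so far produced, move to component $j+1$; once all $n$ components are handled we obtain $\{C\}\rightsquigarrow^* P$ with $P$ finite and every member of $P$ of the desired form. Note $P\subseteq S/*$, since $\rightsquigarrow$ keeps us inside $S/*$.

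The point that requires care is termination of the recursion on a fixed component, and this is where choosing $a$ to be the \emph{deepest} single-child representative is essential. Attach to a partial process $A_j/b$ the natural number $\lvert\,(\text{subtree of }A_j\text{ rooted at the deepest single-child representative of }A_j/b)\cap A_j\restriction_{[0,t]}\,\rvert$. When we step from $a$ to a child $c$, the subtree of $A_j$ rooted at $c$ is a proper subset of the one rooted at $a$ (it omits $a$ itself), so this measure strictly decreases; being a non-negative integer, it cannot decrease forever, so the recursion on the $j$-th component halts, and each round produces only finitely many partial systems. I expect the main obstacle to be precisely this termination bookkeeping --- making sure that $a$ and the path to it genuinely sit inside the finite tree $A_j\restriction_{[0,t]}$, so that the $\rightsquigarrow$-step strictly shrinks the measure --- while the remaining parts follow routinely from \Cref{den:proc-fin}, \Cref{def:rel-squiar}, and \Cref{lmm:partial-br}.
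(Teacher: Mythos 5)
Your proposal is correct and follows essentially the same route as the paper: starting from $\{C\}$, repeatedly apply $\rightsquigarrow$ at a branching node of an offending component and invoke the finiteness of $A_j\restriction_{[0,t]}$ from \Cref{den:proc-fin} to terminate. The only differences are bookkeeping — the paper terminates via a potential counting $\rightarrow^*$-incomparable pairs in $D\restriction_{[0,t]}$, whereas you use the size of the $[0,t]$-part of the subtree below the first branching node, and your explicit choice of the deepest single-child representative (together with the check that it lies in $A_j\restriction_{[0,t]}$) is a sound way to guarantee the strict decrease.
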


\begin{proof}
	We show how to construct $P$ step by step.
	In the construction, we require the intermediate $P$ to satisfy 
	\begin{equation}
		\braces*{C}\rightsquigarrow^* P.
		\label{eq:C-exp-to-after-t}
	\end{equation}

	The construction is as follows.
	\begin{enumerate}
		\item 
			Initially, let $P=\braces*{C}$.
		\item
			\label{stp:C-exp-to-after-t-2}
			Repeat the following procedure.
			First pick $D\in P$ such that $D\restriction_{[0,t]}$ has branching.
			If such $D$ does not exist, then we can terminate.
			Otherwise, we have $D=A/a\parallel B$ for some $A\in \Proc$, $a\in A$ and $B\in \Sys$
			with $\parens*{A/a}\restriction_{[0,t]}$ has branching.
			As a result, $\braces*{b\in A:a\rightarrow b}\neq \emptyset$.
			Let $P'=\parens*{P-D}\cup \braces*{A/b\parallel B:a\rightarrow b}$,
			where $P'$ stands for the new value of $P$.
	\end{enumerate}
	
	It is easy to see that \Cref{eq:C-exp-to-after-t} holds in the above construction.
	It remains to show that the above construction terminates.
	For any $E\in S/*$, let
	\begin{equation*}
		\phi\parens*{E}=\#\braces*{(c,d):c,d\in E\restriction_{[0,t]}\wedge c\not\rightarrow^* d\wedge d\not\rightarrow^* c}
	\end{equation*}
	be the number of action pair $(c,d)$ that cannot be ordered by $\rightarrow^*$ in $E\restriction_{[0,t]}$.
	Note that in Step~\ref{stp:C-exp-to-after-t-2},
	$\sum_{b:a\rightarrow b} \phi\parens*{A/b\parallel B}<\phi\parens*{A/a\parallel B}$.
	Consequently, $\sum_{E\in P'}\phi\parens*{E}<\sum_{E\in P}\phi\parens*{E}$.
	Initially, $\phi\parens*{C}$ is finite due to \Cref{def:quantum-proc}.
	Hence, the above construction terminates.
\end{proof}


Now we prove the first main theorem.

\begin{proof}[Proof of \Cref{thm:local-ins}]
	The proof consists of two steps. 
	\begin{enumerate}
		\item 
			\label{stp:local-ins-prf-1}
			Let us fix a local action $a\in S$.
			Suppose that $T\bracks*{a}=[x,y]$,
			and $x>0$ w.l.o.g.
			Consider another trace-preserving system $S'$ and an isomorphism $\gamma:S\rightarrow S'$ such that:
			\begin{enumerate}
				\item
					$e\bracks*{\gamma\parens*{a}}=\emptyset$,
					and $T\bracks*{\gamma\parens*{a}}=\braces*{t_a}$ for some $t_a\in T\bracks*{a}$.
				\item
					$\forall b\neq a\in S, \gamma\parens*{b}=b$.
			\end{enumerate}
			Let us prove $S\simeq S'$.

			By \Cref{def:equi-system,lmm:mu-is-pr-meas},
			it suffices to show
			for any state $\rho\in \calD\parens*{\calH_{q\bracks*{S}}}$
			and any partial system $C\in S/*$,
			\begin{equation}
				\mu_{\rho\to S} \circ\omega\parens*{C}=\mu_{\rho\to S'}\circ\omega\circ\gamma\parens*{C}.
				\label{eq:atom-1-tar}
			\end{equation}

			Let us fix a state $\rho$ and a partial system $C$.
			By \Cref{lmm:C-exp-to-after-t},
			there exists a finite $P\subseteq S/*$ such that
			$\braces*{C}\rightsquigarrow^* P$ and for any $D\in P$,
			$D\restriction_{[0,y]}$ has no branching.
			From the proof of \Cref{lmm:C-exp-to-after-t},
			it is easy to see $\braces*{\gamma\parens*{C}}\rightsquigarrow^* \gamma\parens*{P}$
			and for any $D\in P$, $\gamma\parens*{D}\restriction_{[0,y]}$ has no branching.
			Using \Cref{lmm:set-partial-sys-dec} and that
			$\mu_{\rho\to S}$ and $\mu_{\rho\to S'}$ are probability measures,
			we have
			\begin{align*}
				\mu_{\rho\to S}\circ\omega\parens*{C}&=\sum_{D\in P} \mu_{\rho\to S}\circ\omega\parens*{D}\\
				\mu_{\rho\to S'}\circ\omega\circ\gamma\parens*{C}&=\sum_{D\in P} \mu_{\rho\to S'}\circ\omega\circ\gamma\parens*{D}.
			\end{align*}
			Now proving \Cref{eq:atom-1-tar} reduces to proving
			\begin{equation}
				\mu_{\rho\to S} \circ\omega\parens*{D}=\mu_{\rho\to S'}\circ\omega\circ\gamma\parens*{D}
				\label{eq:atom-1-tar-red}
			\end{equation}
			for $D\in S/*$ with $D\restriction_{[0,y]}$ and $\gamma\parens*{D}\restriction_{[0,y]}$ having no branching.

			If $a\notin D$, then \Cref{eq:atom-1-tar-red} is trivial because $D=\gamma\parens*{D}$.
			In the following we only need to consider $a\in D$.
			In particular, assume $D=A\parallel B$ for some $A\in \Proc$ and $B\in \Sys$
			with $a\in A$. 

			Note that
			$D\restriction_I$ and $\gamma\parens*{D}\restriction_I$ has no branching for any interval $I\subseteq [0,y]$.
			Also,
			$D\restriction_I=\gamma\parens*{D}\restriction_I$ for any interval $I\subseteq [0,x)\cup (y,+\infty)$,
			and $B\restriction_I=\gamma\parens*{B}\restriction_I$ for any interval $I\subseteq [x,y]$.
			In the following, let us check $\Bracks*{D}\parens*{t}=\Bracks*{\gamma\parens*{D}}\parens*{t}$
			for $t=0,x-$ and $y$ step by step,
			each building upon the previous.
			Some condition checks are obvious and thus omitted for readability.
			\begin{enumerate}
				\item 
					The $t=0$ case is obvious by \Cref{den:sys-dyn-init-cnd}.
				\item
					The $t=x-$ case is also simple by taking $I=(0,x)$
					in the first part of \Cref{den:sys-dyn-evo} (see also \Cref{rmk:sys-dyn-evo-open-int}),
					combined with $D\restriction_I=\gamma\parens*{D}\restriction_I$.
				\item
					The $t=y$ case is more complicated.
					Let $I_1=[x,t_a),I_2=[t_a,t_a]=\braces*{t_a}$ and $I_3=(t_a,y]$.
					Note that $A\restriction_{[x,y]}=\braces*{a}$ and $a$ is local.
					By taking $I=[x,y]$ in the second part of \Cref{den:sys-dyn-evo},
					we have $\Bracks*{D}\parens*{y}=\parens*{\calE\bracks*{a}\otimes \calF'}\circ\Bracks*{\gamma\parens*{D}}\parens*{x-}$,
					for some $\calF'$ uniquely determined by $B\restriction_{[x,y]}$.
					Alternatively, by taking $I=I_1,I_2,I_3$ in the second part of \Cref{den:sys-dyn-evo},
					we have
					\begin{align*}
						\Bracks*{D}\parens*{t_a-}&=\parens*{\calF_1\otimes \calF_1'}\circ\Bracks*{D}\parens*{x-}\\
						\Bracks*{D}\parens*{t_a}&=\parens*{\calF_2\otimes \calF_2'}\circ\Bracks*{D}\parens*{t_a-}\\
						\Bracks*{D}\parens*{y}&=\parens*{\calF_3\otimes \calF_3'}\circ\Bracks*{D}\parens*{t_a},
					\end{align*}
					for some quantum operations $\calF_b$ and $\calF_b'$ uniquely determined 
					by $A\restriction_{I_b}$ and $B\restriction_{I_b}$ with $b\in [3]$,
					respectively.
					As a result, $\calF'=\calF_3'\circ\calF_2'\circ\calF_1'$.

					Note that $\gamma\parens*{A}\restriction_{[x,t_a)}=\gamma\parens*{A}\restriction_{(t_a,y]}=\emptyset$ since $a$ is local.
					By taking $I=I_1, I_3$ in the second part of \Cref{den:sys-dyn-evo},
					combined with $B\restriction_I=\gamma\parens*{B}\restriction_I$,
					we have 
					\begin{align*}
						\Bracks*{\gamma\parens*{D}}\parens*{t_a-}&=\parens*{\Id\otimes\calF_1'}\circ\Bracks*{\gamma\parens*{D}}\parens*{x-}\\
						\Bracks*{\gamma\parens*{D}}\parens*{y}&=\parens*{\Id\otimes\calF_3'}\circ\Bracks*{\gamma\parens*{D}}\parens*{t_a}.
					\end{align*}
					By taking $I=I_2$ in the second part of \Cref{den:sys-dyn-evo},
					combined with $B\restriction_I=\gamma\parens*{B}\restriction_I$,
					we have $\Bracks*{\gamma\parens*{D}}\parens*{t_a}=\parens*{\calE\bracks*{\gamma\parens*{a}}\otimes \calF_2'}\circ\Bracks*{\gamma\parens*{D}}\parens*{t_a-}$.
					Since $\gamma$ is an isomorphism, $\calE\bracks*{\gamma\parens*{a}}=\calE\bracks*{a}$.
					
					The above together yield the conclusion.
			\end{enumerate}
			For any $t>y$, we also have $\Bracks*{D}\parens*{t}=\Bracks*{\gamma\parens*{D}}\parens*{t}$,
			by taking $I=(y,t]$ in the second part of \Cref{den:sys-dyn-evo},
			combined with $D\restriction_I=\gamma\parens*{D}\restriction_I$ and the above results.

			Finally, from \Cref{lmm:mu-is-pr-meas}, we have
			\begin{align*}
				\mu_{\rho\to S}\circ\omega\parens*{D}&=\lim_{t\to\infty} \tr\parens*{\Bracks*{D}\parens*{t}\parens*{\rho}}\\
				\mu_{\rho\to S'}\circ\omega\circ\gamma\parens*{D}&=\lim_{t\to\infty}\tr\parens*{\Bracks*{\gamma\parens*{D}}\parens*{t}\parens*{\rho}},
			\end{align*}
			and \Cref{eq:atom-1-tar-red} immediately follows.
		\item
			Now we construct a system $S'$ and an isomorphism $\gamma:S\rightarrow S'$
			that satisfy the properties in \Cref{thm:local-ins}.
			Since $S$ as a set is countable,
			we can enumerate all local operations in $S$ as $\braces*{a_1,a_2,\ldots}$.
			Consider a set of systems $\braces*{S_m}_{m\in \N}$ with each $S_m\in \Sys$
			and a set of isomorphism $\braces*{\eta_m}_{m\in \N}$ with each $\eta_m:S_m\rightarrow S_{m+1}$ such that
			$S_1=S$ and for any $m>1$:
			\begin{enumerate}
				\item 
					$e\bracks*{\eta_m\parens*{a_m}}=\emptyset$ 
					and $T\bracks*{\eta_m\parens*{a_m}}=\braces*{t_{m}}$
					for some $t_{m}\in T\bracks*{a_m}$.
				\item
					$\forall b\neq a_m\in S_m, \eta_m\parens*{b}=b$.
			\end{enumerate}
			According to the results in Step~\ref{stp:local-ins-prf-1},
			we have $\forall m\in \N, S_m\simeq S$.
			Let $\gamma_m=\eta_m\circ\eta_{m-1}\circ\ldots\circ\eta_1$,
			then $\gamma_m:S\rightarrow S_m$ is an isomorphism.
			Let $\gamma=\lim_{m\to\infty} \gamma_m$ be the point-wise limit of $\gamma_m$.
			It can be seen that $\gamma$ is an isomorphism and together with the system $S'=\gamma\parens*{S}$
			satisfy the first and second properties in \Cref{thm:local-ins}.

			It remains to show that $S'\simeq S$;
			that is, by \Cref{def:equi-system},
			to show for any state $\rho$ and partial system $C\in S/*$,
			\begin{equation}
				\mu_{\rho\to S}\circ \omega \parens*{C}
				=\mu_{\rho\to S'}\circ\omega\circ\gamma\parens*{C}.
				\label{eq:atom-2-tar}
			\end{equation}
			As $\forall m\in \N,S_m\simeq S$,
			proving \Cref{eq:atom-2-tar} reduces to proving the existence of some $m'\in \N$ such that
			\begin{equation}
				\mu_{\rho\to S_{m'}}\circ\omega\circ\gamma_{m'}\parens*{C}
				=\mu_{\rho\to S'}\circ\omega\circ\gamma\parens*{C}.
				\label{eq:atom-2-tar-red}
			\end{equation}
			To this end,
			let us first choose a sufficiently large $t\in \R_{\geq 0}$ 
			such that $C$ is trace-preserving after time $t$.
			Since for any $a\in C$ and $m\in\N$,
			$T\bracks*{\gamma_m\parens*{a}}\subseteq T\bracks*{a}$,
			we have that $\gamma_m\parens*{C}$ and $\gamma\parens*{C}$ are also trace-preserving after time $t$.
			According to \Cref{lmm:mu-is-pr-meas} and \Cref{den:sys-dyn-tr}, proving \Cref{eq:atom-2-tar-red} further reduces to proving
			\begin{equation}
				\tr\parens*{\Bracks*{\gamma_{m'}\parens*{C}}\parens*{t}\parens*{\rho}}
				=\tr\parens*{\Bracks*{\gamma\parens*{C}}\parens*{t}\parens*{\rho}}.
				\label{eq:atom-2-tar-fur-red}
			\end{equation}
			Let us choose a sufficiently large $m'\in \N$ such that 
			$\gamma_{m'}\parens*{C}\restriction_{(0,t]}=\gamma\parens*{C}\restriction_{(0,t]}$,
			which is achievable because
			the set $C\restriction_{(0,t]}$ is finite and 
			$\gamma_m\parens*{C}\restriction_{(0,t]}\subseteq C\restriction_{(0,t]}$ for any $m\in \N$.
			Now by taking $I=(0,t]$ in \Cref{den:sys-dyn-evo},
			we have $\Bracks*{\gamma_{m'}\parens*{C}}\parens*{t}=\calF\circ\Bracks*{\gamma_{m'}\parens*{C}}\parens*{0}$
			and $\Bracks*{\gamma\parens*{C}}\parens*{t}=\calF\circ\Bracks*{\gamma\parens*{C}}\parens*{0}$
			for some quantum operation $\calF$ uniquely determined by 
			$\gamma_{m'}\parens*{C}\restriction_{(0,t]}=\gamma\parens*{C}\restriction_{(0,t]}$.
			Combined with $\Bracks*{\gamma_{m'}\parens*{C}}\parens*{0}=\Bracks*{\gamma\parens*{C}}\parens*{0}$ from \Cref{den:sys-dyn-init-cnd},
			we have $\Bracks*{\gamma_{m'}\parens*{C}}\parens*{t}=\Bracks*{\gamma\parens*{C}}\parens*{t}$,
			and \Cref{eq:atom-2-tar-fur-red} immediately follows.
	\end{enumerate}
\end{proof}

Finally, we prove the second main theorem.

\begin{proof}[Proof of \Cref{thm:local-atom}]
	Let us construct a system $S'$ that satisfies the properties in \Cref{thm:local-atom}.
	Since $S$ as a set is countable,
	we can enumerate all local actions in $S$ as $L=\braces*{a_1,a_2,\ldots}$.
	Consider a system $S'$ and an isomorphism $\gamma:S\rightarrow S'$ such that
	\begin{enumerate}
		\item 
			$\forall m\in \N, e\bracks*{\gamma\parens*{a_m}}=\emptyset$ and $T\bracks*{\gamma\parens*{a_m}}=\braces*{t_m}$,
			where $t_m$ are chosen by
			\begin{equation*}
				t_m\in T\bracks*{a_m}-\braces*{t_n:n<m, a_n\in A, a_m\in B, S=A\parallel B}.
			\end{equation*}
			The existence of $t_m$ is guaranteed by $\abs*{T\bracks*{a_m}}>0$.
		\item
			$\forall b\notin L, \gamma\parens*{b}=b$.
	\end{enumerate}
	Moreover, as $S$ is aligned,
	when choosing the $\braces*{t_m}_{m\in \N}$ we can also require that
	for any $b\in S$ and $j,k\in \N$, if $b\rightarrow a_j$ and $b\rightarrow a_k$
	then $t_j=t_k$,
	which is achievable because $\min T\bracks*{a_j}=\min T\bracks*{a_k}$
	and $\abs*{T\bracks*{a_j}},\abs*{T\bracks*{a_k}}>0$.
	It is easy to see that $S'$ satisfies the properties in \Cref{thm:local-atom}.
	By the proof of \Cref{thm:local-ins},
	we also have $S'\simeq S$.
\end{proof}

\end{document}